\DeclareMathOperator*{\argmin}{argmin}
\def\ba{\begin{array}}
\def\ea{\end{array}}
\def\ds{\displaystyle}
\def\0{{\bf 0}}
\def\a{{\bf a}}
\def\b{{\bf b}}
\def\x{{\bf x}}
\newcommand{\ket}[1]{| #1 \rangle}
\newcommand{\bra}[1]{\langle #1|}
\newcommand{\braket}[1]{\langle #1 \rangle}
\newcommand{\update}[1]{{\color{black} #1}}
\newcommand{\be}{\begin{equation}}
\newcommand{\ee}{\end{equation}}
\newcommand{\bea}{\begin{eqnarray}}
\newcommand{\eea}{\end{eqnarray}}
\newcommand{\bes}{\begin{equation*}}
\newcommand{\ees}{\end{equation*}}
\newcommand{\beas}{\begin{eqnarray*}}
\newcommand{\eeas}{\end{eqnarray*}}
\newtheorem*{rep@theorem}{\rep@title}
\newcommand{\newreptheorem}[2]{%
\newenvironment{rep#1}[1]{%
 \def\rep@title{#2 \ref{##1} (restated)}%
 \begin{rep@theorem}}%
 {\end{rep@theorem}}}
\newtheorem{thm}{Theorem}
\newtheorem*{thm*}{Theorem}
\newtheorem{con}[thm]{Conjecture}
\newtheorem{lem}[thm]{Lemma}
\newtheorem*{lem*}{Lemma}
\newtheorem{prop}[thm]{Proposition}
\newtheorem{defn}[thm]{Definition}
\title{Quantum communication complexity of linear regression}
\author[1,2]{Ashley Montanaro\thanks{ashley.montanaro@bristol.ac.uk}}
\author[1]{Changpeng Shao\thanks{changpeng.shao@bristol.ac.uk}}
\affil[1]{School of Mathematics, University of Bristol, UK}
\affil[2]{Phasecraft Ltd. UK}
\date{\today}
\begin{document}

\maketitle

\begin{abstract}

\update{Quantum computers may achieve speedups over their classical counterparts for solving linear algebra problems. However, in some cases -- such as for low-rank matrices -- dequantized algorithms demonstrate that there cannot be an exponential quantum speedup.}
In this work, we show that quantum computers have \update{provable polynomial and exponential} speedups in terms of communication complexity for some fundamental linear algebra problems \update{if there is no restriction on the rank}. We mainly focus on solving linear regression and Hamiltonian simulation. In the quantum case, the task is to prepare the quantum state of the result. To allow for a fair comparison, in the classical case, the task is to sample from the result. We investigate these two problems in two-party and multiparty models, propose near-optimal quantum protocols and prove quantum/classical lower bounds. In this process, we propose an efficient quantum protocol for quantum singular value transformation, which is a powerful technique for designing quantum algorithms. \update{This will be helpful in developing efficient quantum protocols for many other problems.}

\end{abstract}

\section{Introduction}

Quantum computers are designed to solve some problems much faster than classical computers. In particular, quantum computers could be good at solving linear algebra problems. A famous example is the Harrow-Hassidim-Lloyd algorithm for solving linear systems \cite{harrow2009quantum}, whose complexity is only polylog in the dimension. Over the past ten years, quantum linear algebra techniques have been extensively developed especially with the discovery of block-encoding \cite{chakraborty2019power} and quantum singular value transformation \cite{gilyen2019quantum}. For many linear algebra problems, the corresponding quantum algorithms have complexity only polylog in the dimension, which was claimed to be exponentially faster than classical algorithms. However, in 2018, Tang's dequantized algorithm \cite{tang2019quantum} and its development (e.g., \cite{gilyen2022improved,chia2020sampling,shao2021faster,jethwani2020quantum}) showed that quantum computers indeed do not have exponential speedups (in terms of time and query complexity) for many linear algebra problems of low-rank assuming certain data structures.
In this paper, we show that, in the setting of communication complexity \update{and without the low-rank assumption}, provable speedups can be obtained for two fundamental problems: linear regression and Hamiltonian simulation.

\subsection{Our results}

For linear regression problems, we consider two types of models. In the first model, there are only two parties: Alice and Bob. The communication between Alice and Bob can be 1-way or 2-way.
In the second model, there are multiple parties. There is a referee so that the communication is 2-way and only between each party and the referee. We call this model the {\em quantum coordinator model}, as it is a quantum version of the classical coordinator model \cite{vempala2020communication}.

\setlength{\arrayrulewidth}{0.3mm}
{\renewcommand
\arraystretch{1.5}
\begin{table}[t]
\centering
\begin{tabular}{|c|c|c|c|c|} 
 \hline
 & Alice $\rightarrow$ Bob & Bob $\rightarrow$ Alice & Alice $\leftrightarrow$ Bob  \\  \hline
 Quantum & $\widetilde{O}( (\kappa/\gamma)^2 \min(m,n))$ & $\widetilde{O}( (\kappa/\gamma)^2 )$  & $\widetilde{O}(\kappa/\gamma)$ \\
 & $\widetilde{\Omega}(\min(m,n))$ & $\widetilde{\Omega}(\kappa^2+1/\gamma^2)$ &  $\widetilde{\Omega}(\kappa+1/\gamma)$  \\  \hline
 Classical & $\widetilde{O}(m n)$  & $\widetilde{O}(m)$   &  $\widetilde{O}(m)$ \\
 & $\widetilde{\Omega}(\min(m,n))$  & $\Omega(\min(m,n))$ & $\Omega(\min(m,n))$ \\ \hline
 Quantum speedups & at most quadratic  &  can be exponential &  can be exponential \\ \hline
\end{tabular}
\caption{Comparison of quantum and classical communication complexity for solving the linear regression problem $\x_{\rm opt}=\arg\min_{\x}\|A\x-\b\|$. Alice has $A$ and Bob has $\b$. All the lower bounds hold even if $m=n, \kappa = O(1), 1/\gamma = O(1)$. \update{ Here $\kappa$ is the condition number of $A$, i.e., the ratio of the maximal and minimal nonzero singular values of $A$, and $\gamma = {\|A \x_{\rm opt}\|}/{\|\b\|}$ which describes the overlap of $\b$ in the column space of $A$. With $\widetilde{O}, \widetilde{\Omega}$, we ignore all the polylog factors in the input size, the condition number and the accuracy}. The arrow means the direction of communication. The results are presented rigorously in Theorems \ref{thm:quantum protocol}, \ref{thm:Quantum lower bound} and \ref{thm:Classical lower bound}.}
\label{comparison: Alice Bob case}
\end{table}
}

\subsubsection{Alice-Bob model}

The setting here is that Alice has a matrix $A\in \mathbb{R}^{m\times n}$ and Bob has a vector $\b\in \mathbb{R}^{m}$. Their goal is to solve the linear regression problem $\arg\min_{\x}\|A\x-\b\|$ together using as little communication as possible. More precisely,

\begin{itemize}
\item 
In the quantum case, their goal is to prepare a quantum state $\ket{\tilde{\x}_{\rm opt}}$ that is $\varepsilon$ close to $\ket{\x_{\rm opt}} = \ket{A^+\b}$ {in trace distance}, where $A^+$ is the pseudoinverse of $A$,\footnote{\update{There are many equivalent ways to define pseudoinverse \cite{golub2013matrix}. Here we recall the one based on singular value decomposition (SVD). If $A=\sum_{i=1}^r \sigma_i \ket{u_i} \bra{v_i}$ is SVD of $A$, where $r=\text{Rank}(A)$, then $A^+=\sum_{i=1}^r \sigma_i^{-1} \ket{v_i} \bra{u_i}$.}} and $\ket{A^+\b}$ denotes the normalised state corresponding to $A^+\b$. This corresponds to an optimal solution to the problem of minimising $\|A\x-\b\|$. 
\item 
In the classical case, the goal is to sample from a distribution $\widetilde{\mathcal{P}}(\ket{\x_{\rm opt}})$ that is $\varepsilon$ close to the distribution $\mathcal{P}(\ket{\x_{\rm opt}})$ defined by $\ket{\x_{\rm opt}}$ in the total variation distance, i.e., $\|\widetilde{\mathcal{P}}(\ket{\x_{\rm opt}}) - \mathcal{P}(\ket{\x_{\rm opt}})\|_{1} \leq \varepsilon$.
\end{itemize}

The problem solved by the quantum computer is at least as hard as the problem solved by the classical computer. If the communication is 2-way, then Alice and Bob can send quantum/classical information to each other. If the communication is 1-way, then only Alice or Bob can send quantum/classical information to the other party. 
In communication complexity, we are interested in the minimal amount of communication (which is described by the number of qubits or bits) the parties used to achieve their goal. Since our main focus is on the quantum speedup with respect to dimension, throughout we assume that matrix entries are specified with $O(\log (mn))$ bits. Our main results are summarised in Table \ref{comparison: Alice Bob case}. 

From Table \ref{comparison: Alice Bob case}, we can see that
\begin{itemize}
\item If the communication is 1-way from Alice to Bob, then the quantum speedup is at most quadratic. The quantum protocol in this case is optimal if the linear regression problem is well-conditioned. \update{The quadratic speedup here is not related to Grover's algorithm or more generally the amplitude amplification. 
A key point here is that Bob holds too little information about the linear regression they aim to solve. When the communication is 1-way from Alice to Bob, then even in the quantum case, Alice still needs to send a lot of information about the matrix to Bob.}

\item If the communication is 1-way from Bob to Alice or 2-way, then the quantum speedup is exponential if the linear regression problem is well-conditioned. The quantum protocols in these two cases are optimal up to a polylogarithmic factor.
\end{itemize}

\subsubsection{Coordinator model}

In the second model, we consider a more general setting. Now we suppose there are $r$ parties $P_0,\ldots,P_{r-1}$. The party $P_i$ holds a matrix $A_i \in \mathbb{R}^{d_i\times n}$ and a vector $\b_i \in \mathbb{R}^{d_i}$. Their goal is to solve the linear regression problem
\be
\label{problem 2}
\argmin_\x \|A\x-\b\|, \quad \text{ where }
A = \begin{pmatrix}
    A_0 \\
    \vdots \\
    A_{r-1}
\end{pmatrix}, ~ \b=
\begin{pmatrix}
    \b_0 \\
    \vdots \\
    \b_{r-1}
\end{pmatrix} .
\ee
In this case, there is a referee and every party can only communicate with the referee. Their goal is similar, i.e., up to certain errors, outputting the quantum state of the optimal solution quantumly or sampling from the optimal solution classically by one party or by the referee. In the classical case, Vempala, Wang, and Woodruff studied the problem (\ref{problem 2}) with the goal of outputting the whole vector of the optimal solution 
\cite{vempala2020communication}. A near-optimal protocol of complexity $O(r n^2)$ was given. The lower bound is $\Omega(r n+n^2)$. Here our goal is different from theirs.

In the quantum case, if we consider the problem in the simultaneous message passing (SMP) model (in which the referee is not allowed to send information to the parties), then we show that $\Omega(n)$ qubits of communication are required for any quantum protocols to solve (\ref{problem 2}). Because of this and also inspired by the classical coordinator model \cite{vempala2020communication}, we assume that the communication is 2-way between each party and the referee (also known as the coordinator classically). We call this the quantum coordinator model. Our main result is summarized in Table \ref{comparison2: Alice Bob case}.

\setlength{\arrayrulewidth}{0.3mm}
{\renewcommand
\arraystretch{1.5}
\begin{table}[t]
\centering
\begin{tabular}{|c|c|c|c|c|} 
 \hline
 Quantum                        & Classical \\  \hline
 $\widetilde{O}(r^{1.5}\kappa/\gamma)$ & $O(rn^2)$ \\   
 $\Omega(r\kappa)$              & $\Omega(rn)$ \\ \hline
\end{tabular}
\caption{Comparison of quantum and classical communication complexity for solving (\ref{problem 2}) in the coordinator model. Here $\kappa$ is the condition number of $A$ defined in (\ref{problem 2}) and $\gamma=\|A\x_{\rm opt}\|/\|\b\|$. The results are presented rigorously in Theorems \ref{thm:solve the general linear regression} and \ref{thm stronger lower bound}.}
\label{comparison2: Alice Bob case}
\end{table}
}


From Table \ref{comparison2: Alice Bob case}, we have
\begin{itemize}
\item 
The quantum protocol has an optimal dependence on the condition number. Also, when $r=O({\rm polylog}(mn))$, quantum computers are exponentially faster than classical computers for well-conditioned linear regression problems.
\item 
Similar to the result of \cite{vempala2020communication}, our result shows that it is hard for a classical computer even for a weak task of solving linear regressions.
\end{itemize}

\subsection{Summary of techniques}

In the Alice-Bob model, the quantum protocols are straightforward. For example, if the communication is 1-way from Bob to Alice, then Bob just sends the quantum state of $\b$ to Alice, who applies $A^{+}$ to this quantum state and performs some measurements. The interesting part is the lower bound analysis, which is based on the hardness of the disjointness problem and the index problem \cite{klauck2000quantum, kremer1999randomized,buhrman2001communication,razborov2003quantum}. In the disjointness problem, Alice and Bob respectively have a subset $x$ and $y$ of $[n]$, their goal is to determine if $x\cap y \neq \emptyset$. This problem can be reduced to a linear regression problem by constructing a diagonal matrix $D$ from $x$ and a vector $\b$ from $y$ as follows: We set $D_i = 1$ if $i\in x$ and $D_i=1/\varepsilon$ if $i\notin x$ for some small $\varepsilon$. Similarly, we define $b_i = 1$ if $i\in y$ and $b_i=\varepsilon$ if $i\notin y$. It is not hard to see that the indices in $x\cap y$ have large amplitudes in the quantum state $\ket{D^{-1}\b}$. The index problem is used in a similar way. This is indeed the main idea of our quantum/classical lower bound analysis. In different settings, we construct appropriate diagonal matrices and vectors.

The quantum protocol for solving (\ref{problem 2}) is based on quantum singular value transformation (QSVT) \cite{gilyen2019quantum}, which is a useful technique for designing quantum algorithms (e.g., see the survey paper \cite{martyn2021grand}). In the quantum coordinator model,
we show that QSVT is still applicable and efficient (see Proposition \ref{prop:QSVT in CC}). Unlike time and query complexity, the communication complexity of implementing QSVT can be estimated precisely. For example, if we apply QSVT to solve the linear regression problem (\ref{problem 2}), then the time complexity is $\widetilde{O}((T_A+T_b)\alpha/\gamma\sigma_{\min})$ \cite[Corollary 31]{chakraborty2019power}, where $T_A$ is the complexity of constructing the block-encoding of $A$ so that $A/\alpha$ is the top-left corner of a unitary, $T_b$ is the complexity of preparing the quantum state of $\b$, and $\sigma_{\min}$ is the minimal singular value of $A$. Regarding the communication complexity, we can show that $T_A=T_b=O(r\log n)$ and $\alpha = O(\sqrt{r}\|A\|)$. Here $T_A, T_b$ should be understood as communication complexity. We can also show that the above formula for time complexity is still true so that the communication complexity is $\widetilde{O}(r^{1.5}\kappa/\gamma)$. This is exactly the result we stated in Table \ref{comparison2: Alice Bob case}.
Since QSVT is a powerful technique in designing quantum algorithms, we believe that for many other linear algebra problems, quantum computers still have provable speedups in terms of communication complexity. 

Indeed, as an application, we show that quantum computers achieve provable speedups for Hamiltonian simulation. In the Hamiltonian simulation problem, we
suppose that the party $P_i$ holds a Hamiltonian $H_i$ of dimension $n$, the referee holds a quantum state $\ket{\psi}$, and their goal is to prepare the state $e^{i(H_0+\cdots+H_{r-1})t} \ket{\psi}$ quantumly or sample from it classically. In Propositions \ref{prop:classical-HS} and \ref{prop:SMP HS}, we show that the quantum communication complexity of this Hamiltonian simulation problem is $\widetilde{O}(r|t|\sum_{i=0}^{r-1}  \|H_i\|)$ and
$\Omega(n)$ bits communication are required for any classical protocols to sample from $e^{i(H_{0}+\cdots+H_{r-1})t} \ket{\psi}$.


\subsection{Related work}

The problem studied in this paper is partially inspired by \cite{vempala2020communication}, in which Vempala, Wang and Woodruff studied the classical communication complexity of solving linear regression (and many other optimization problems) and showed that the naive protocol (of sending the whole information to others) is close to optimal. Their goal is to output a vector solution, while our goal is to sample from the solution. Recently, Tang et al. \cite{Hao2022} studied quantum communication complexity of solving linear regression problems. Their focus was on the Alice-Bob model and their goal was to output an approximate vector solution. The complexity they proved is $O(n\kappa/\varepsilon)$, where $n$ is the dimension, $\kappa$ is the condition number and $\varepsilon$ is the precision.
Regarding quantum communication complexity for sampling problems, Ambainis et al.
\cite{ambainis2003quantum} exhibited an exponential gap between the quantum and classical communication required for a sampling problem related to disjointness.
In \cite{montanaro2019quantum}, Montanaro showed an exponential gap between the quantum and classical communication for a distributed variant of the Fourier sampling
problem.
Another linear algebra problem that shows quantum computers are exponentially better than classical computers is the vector-in-subspace problem studied by Raz
in \cite{raz1999exponential}. This problem is closely related to the two-party linear regression problem we study in the case where communication is from Bob to Alice, and where the matrix $A$ is unitary. It was proved in \cite{Klartag-Regev} that the one-way quantum
protocol is exponentially better than any classical protocol, even if the latter is allowed bounded error and two-way communication.
When restricted to finite fields, Sun and Wang \cite{sun2012randomized} studied quantum/classical communication complexity of matrix singularity and determinant computation problems. Their results suggest that there is no exponential quantum speedup for those problems in terms of dimension.

\section{Preliminaries}

\subsection{Communication complexity}

Communication complexity has been studied extensively in the field of classical and quantum computing \cite{yao1979some,yao1993quantum,de2002quantum,klauck2000quantum,buhrman2010nonlocality}. It usually deals with the following type of problem. Suppose there are two separated parties: Alice and Bob. Alice receives some input $x\in X$ and Bob receives some input $y \in Y$. Their goal is to compute $f(x,y)$ together using as little communication as possible. We usually assume that Alice and Bob have unlimited computational power so that they can perform any computation as efficiently as they want. The measure of complexity used is the amount of communication required to solve the problem. All other non-communication operations are treated as free.
A protocol is an algorithm where first Alice does some individual computation, and then sends a quantum/classical message to Bob, then Bob does some individual computation and sends a quantum/classical message to Alice, etc. In the end, one of the parties outputs some value that should be $f(x,y)$. 
A quantum message usually refers to a quantum state. The cost of sending a quantum state is described by the number of qubits the quantum state occupies. In contrast, the cost of sending a classical message is the number of bits the message uses. 

The cost of a protocol is the total number of bits/qubits communicated on the worst-case input. We are more concerned about the minimal amount of communication they need.  A {\em deterministic protocol} for $f$ always has
to output the right value $f(x,y)$ for all $(x,y)$. In a {\em bounded-error protocol}, the protocol has to output the right value $f(x,y)$ with probability at least $2/3$ for all $(x,y)$. \update{In the randomised model, Alice and Bob share an unlimited supply of uniformly random bits, which they can use in deciding what messages to send. Also, an error is allowed in randomised protocols, which means the output of the protocol is correct with probability at least 2/3. In this work, in the classical case, we will focus on randomised communication complexity. In the quantum case, we focus on bounded error communication complexity.}

If only one party (say Alice) can send information to another party, then this is known as {\em 1-way communication model}. Otherwise, it is a {\em 2-way communication model}. More generally, there can be multiple parties. In this case, there are many ways to define the models. For example, in the {\em simultaneous message passing} (SMP) model, there is a referee so that each party can only communicate with the referee. If the communication is 2-way, then it is also known as {\em coordinator model} classically \cite{vempala2020communication}. 

Regarding communication complexity, two fundamental problems are the {\em index problem} and the {\em disjointness problem}. These problems are well-studied classically and quantumly. They also play significant roles in this paper for the lower bounds estimation.
In the index problem, Alice has a bit string $(x_1,\ldots,x_n)\in \{0,1\}^n$ and Bob has an index $j\in[n]$. The goal is to determine $x_j$. This problem is trivial if Bob can send information to Alice. Namely, Bob just sends the index $j$ to Alice, and Alice outputs $x_j$. The index problem is hard if the communication is 1-way from Alice to Bob. In the disjointness problem, Alice has a bit string $(x_1,\ldots,x_n)\in \{0,1\}^n$ and Bob has another bit string $(y_1,\ldots,y_n)\in \{0,1\}^n$. The goal is to determine if there is an index $j$ such that $x_j=y_j=1$. We summarize the known results for these two problems into the following proposition.



\update{

\begin{prop}
\label{prop1}
We have the following.

\begin{enumerate}
\item The classical 1-way communication complexity of the index problem is $\Theta(n)$ \cite{kremer1999randomized}.

\item The quantum 1-way communication complexity of the index problem is $\Theta(n)$ \cite{buhrman2001communication}.

\item The classical 2-way communication complexity of the disjointness problem is $\Theta(n)$ \cite{Kalyanasundaram1992,razborov1990distributional}.

\item The quantum 2-way communication complexity of the disjointness problem is $\Theta(\sqrt{n})$ \cite{aaronson2003quantum,razborov2003quantum}.
\end{enumerate}

\end{prop}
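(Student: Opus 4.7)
The four claims are classical results, so my plan is to explain the strategy behind each upper and lower bound rather than reprove them in detail, and then point to where the real work sits. The upper bounds are uniformly easy: in all four cases, the trivial protocol in which Alice sends her entire input to Bob (or Bob sends his single index of $\log n$ bits in the index case if the direction allows) achieves the stated $O(n)$, and for quantum disjointness the $O(\sqrt{n})$ upper bound follows from the Aaronson--Ambainis/Buhrman--Cleve--Wigderson protocol, which plugs a binary-search style Grover search into the cross-party oracle, invoking $O(\sqrt{n})$ quantum queries each costing $O(\log n)$ qubits of communication. I would therefore spend essentially no time on the upper bounds and devote the writeup to the lower bounds.

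For the two index lower bounds, the approach is via encoding arguments. In the classical case one shows that Alice's message, conditioned on the shared randomness, must function as a randomised encoding of her string from which any single bit is recoverable with probability $\geq 2/3$; a direct information-theoretic argument (or equivalently the fooling-set/rectangle argument of Kremer--Nisan--Ron) gives $\Omega(n)$. In the quantum case the same framework applies, but one now needs that any quantum state on $q$ qubits from which any chosen bit of $x\in\{0,1\}^n$ can be recovered with probability $\geq 2/3$ has $q = \Omega(n)$; this is precisely Nayak's quantum random access code lower bound, which is proved via a Holevo-type / subadditivity-of-quantum-mutual-information inequality combined with Fano's inequality, and the result follows immediately from it.

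For the two disjointness lower bounds, I would cite the deeper machinery. The classical $\Omega(n)$ bound is the Kalyanasundaram--Schnitger/Razborov theorem, proved by exhibiting a hard product distribution on disjoint/intersecting instances and lower bounding the discrepancy/corruption of large monochromatic rectangles in the communication matrix; the key technical step is Razborov's combinatorial lemma bounding the number of 1-inputs inside any large rectangle. The quantum $\Omega(\sqrt{n})$ bound is Razborov's polynomial-method argument: one shows by a symmetrisation/trace-distance argument that a $q$-qubit protocol computing $\mathrm{DISJ}_n$ induces a univariate polynomial of degree $O(q)$ that approximates $\mathrm{OR}_n$ on $\{0,1,\ldots,n\}$ in the appropriate sense, and then Paturi's lower bound of $\Omega(\sqrt{n})$ on the approximate degree of OR closes the argument.

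The main obstacle, if one actually wanted to reproduce rather than cite these proofs, is Razborov's quantum lower bound: the symmetrisation step and the reduction to approximate polynomial degree are both technically delicate, and pairing them with Paturi's tight approximate-degree bound for OR is the only place where a truly nontrivial inequality enters. Since all four statements are fully and correctly attributed in the literature, my actual writeup would just collect the references and state that each bound follows from the cited theorem, noting explicitly that the two index bounds are immediate from the encoding / random access code characterisations and the two disjointness bounds from the discrepancy and polynomial-method characterisations respectively.
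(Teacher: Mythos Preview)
Your proposal is correct and matches the paper's treatment: the paper gives no proof at all for this proposition, simply stating the four bounds with citations to the original sources, exactly as you conclude your plan would do. Your additional sketch of the proof ideas (encoding/random access codes for index, discrepancy and the polynomial method for disjointness) is accurate and goes beyond what the paper provides, but is not needed here.
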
 

Another known result we will use is called Distributed Fourier Sampling \cite{montanaro2019quantum}. In this problem, Alice is given a function $f:\{0,1\}^n \rightarrow \{\pm 1\}$, Bob is given a function $g:\{0,1\}^n \rightarrow \{\pm 1\}$, their task is for one party (say Bob) to approximately sample from the distribution $P_{fg}$ on $n$-bit strings $s$ where
\[
P_{fg}(s) = \left( \frac{1}{2^n} \sum_{x\in\{0,1\}^n} (-1)^{s\cdot x} f(x) g(x)  \right)^2,
\]
and $s\cdot x = \sum_i s_ix_i$. That is, Bob must output a sample from any distribution $\widetilde{P}_{fg}$ such that $\|\widetilde{P}_{fg} - P_{fg}\|_1 \leq \varepsilon$ for some constant inaccuracy $\varepsilon$. 

\begin{prop}[Theorem 1 of \cite{montanaro2019quantum}]
\label{prop2}
There exist universal constants $\varepsilon, \gamma > 0$ such that, for sufficiently large $n$, any 2-way classical communication protocol for Distributed Fourier Sampling with shared randomness and inaccuracy $\varepsilon$ must communicate at least $\gamma 2^n$ bits.
\end{prop}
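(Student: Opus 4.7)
The plan is to prove the $\Omega(2^n)$ lower bound by a reduction showing that any protocol sampling from $P_{fg}$ to constant total-variation error implicitly solves a communication task of complexity linear in the input size $N=2^n$. Such a bound is essentially tight, since the joint input is only $2N$ bits, so the proof must extract a linear amount of transcript information from a protocol whose \emph{output} is merely a single $n$-bit sample.

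The starting point is the identity $P_{fg}(s)=\hat{fg}(s)^2$, so $P_{fg}$ depends on the pair $(a,b)$ only through the XOR $a\oplus b$ if we encode $f(x)=(-1)^{a_x}$ and $g(x)=(-1)^{b_x}$. I would set up a planted-vs-null distinguishing task. Under the null, $f$ and $g$ are independent uniform in $\{\pm 1\}^{N}$, so $fg$ is uniform and standard concentration for Fourier coefficients of a random $\pm 1$ function gives $\max_s P_{fg}(s)=O(n/N)$ with high probability; hence $P_{fg}$ is very close to the uniform distribution on $\{0,1\}^n$. Under a planted alternative (say, $fg$ is close to a Fourier character, or has a structured spectrum), $P_{fg}$ has a constant fraction of its mass concentrated on a small set. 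A single sample $\varepsilon$-close to $P_{fg}$ then distinguishes the two regimes with constant advantage, so the lower bound reduces to showing that \emph{this} distinguishing task is classically hard.

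The second and harder step is to show that this distinguishing task has $\Omega(N)$ classical randomised communication complexity. The natural route is either a reduction from Gap Hamming Distance on $N$-bit strings (which has complexity $\Omega(N)$ by Chakrabarti and Regev), or a direct information-complexity argument on the transcript. A Gap Hamming reduction fails naively because squaring in $P_{fg}$ erases the sign of the correlation: whether $\tfrac{1}{N}\sum_x f(x)g(x)$ is $+\delta$ or $-\delta$ gives the same $P_{fg}(0)$. To circumvent this I would either symmetrize the planted alternative so that only the \emph{magnitude} of the relevant Fourier coefficient is probed (which matches what Gap Hamming controls), or incorporate a publicly known character $\chi_t$ into the construction so that different Fourier coefficients of $fg$ can be selected one at a time without altering the communication cost. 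Combined with a constant-factor boosting argument to amplify weak distinguishing advantage to the standard $2/3$ threshold, this would give the claimed $\gamma\, 2^n$ lower bound.

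The main obstacle, as indicated above, is bridging the $n$-bit output of the protocol and the $\Omega(N)$-bit lower bound on the transcript. Since each sample carries so little entropy, the lower bound cannot come from the sample revealing information; it must come from the worst-case closeness requirement, which forces the same protocol to calibrate correctly for \emph{every} $(f,g)$ simultaneously. Turning this uniform-closeness condition, in the presence of shared randomness, into a rectangle or corruption bound on the transcript distribution is where I expect the heaviest technical lifting, and where care is needed to ensure that the squaring in $P_{fg}(s)=\hat{fg}(s)^2$ is handled without losing the linear-in-$N$ factor.
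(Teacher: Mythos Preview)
The paper does not contain a proof of Proposition~\ref{prop2}. It is stated in the preliminaries as a quotation of Theorem~1 of \cite{montanaro2019quantum} and is used as a black box (in the proofs of Theorem~\ref{thm:Classical lower bound} and Proposition~\ref{prop:classical-HS}). So there is nothing in this paper to compare your proposal against.

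As for the proposal itself: it is a plausible high-level plan but not yet a proof. You correctly identify the main structural fact, that $P_{fg}$ depends only on $a\oplus b$, and the planted-versus-null framing is sensible. However, the reduction you sketch from Gap Hamming is not carried through: you yourself flag that the squaring $P_{fg}(s)=\widehat{fg}(s)^2$ erases the sign of the relevant correlation, and the two suggested fixes (symmetrising the planted alternative, or inserting a public character $\chi_t$) are left as ideas rather than verified constructions. In particular, inserting a public $\chi_t$ does not change $P_{fg}$ in the way you would need, because translating $fg$ by a character permutes the Fourier coefficients but the sampler still only returns a single coordinate whose squared weight is unchanged; you would still have to argue that a single $\varepsilon$-close sample distinguishes the two Gap Hamming instances, and that is exactly where the sign issue bites. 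The final paragraph correctly names the real difficulty---turning a uniform closeness guarantee on an $n$-bit output into a linear-in-$N$ transcript bound---but does not resolve it. If you want to pursue this route you should consult the original argument in \cite{montanaro2019quantum}, which handles precisely this point.
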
 

The last result will be used in this work is the multi-player set-disjointness problem \cite{phillips2012lower}. In this problem, player $P_j$ holds a subset $T_j \in [n]$, where $j\in\{1,\ldots,k\}$, and their goal is to determine if there is a $j\in\{2,\ldots,k\}$ such that $T_1\cap T_j \neq \emptyset$. This problem was studied in the coordinator model. 

\begin{prop}[Theorem 3.1 of \cite{phillips2012lower}]
\label{prop3}
Assume that $n\geq 3200 k$, then the communication complexity of the multi-player set-disjointness problem is lower bounded by $\Omega(kn)$ in the coordinator model.
\end{prop}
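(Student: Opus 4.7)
The plan is to prove the lower bound via an information-complexity argument in the coordinator model, adapting the template of Bar-Yossef et al. First, I would fix a hard product distribution $\mu$ on pairs $(x, y) \in \{0,1\}^n \times \{0,1\}^n$ for 2-party set-disjointness: in the NO part, each coordinate $(x_i, y_i)$ is drawn i.i.d.\ from the uniform distribution on $\{(0,0), (0,1), (1,0)\}$; in the YES part, one uniformly random coordinate is re-sampled to $(1,1)$. Under $\mu$ the internal information cost of any bounded-error 2-party protocol for set-disjointness is $\Omega(n)$, which is the ``brick'' I would import as a black box.

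For the $k$-player problem I would lift $\mu$ to a distribution $\nu$ on $(T_1, T_2, \ldots, T_k)$ as follows. Sample $T_1$ coordinate-wise; for $j \in \{2, \ldots, k\}$, sample $T_j$ so that the marginal $(T_1, T_j)$ matches the NO part of $\mu$. In YES instances, a single planted index $j^{\ast} \in \{2, \ldots, k\}$ is chosen uniformly and one collision coordinate is inserted into $T_1 \cap T_{j^{\ast}}$. The hypothesis $n \geq 3200 k$ enters here: $n$ must be large enough that the single planted collision only negligibly perturbs the marginals of the other $k-2$ pairs, so that the per-pair analysis using $\mu$ goes through unchanged.

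Next, I would exploit the star topology of the coordinator model. The full transcript $\Pi$ decomposes as the concatenation of sub-transcripts $\Pi_j$ between the coordinator and each player $P_j$, and the number of bits exchanged on the $P_j$-edge upper bounds the conditional mutual information $I(T_j ; \Pi \mid T_1)$. By symmetry of $\nu$ across $j \in \{2, \ldots, k\}$, it suffices to prove the per-player bound $I(T_j ; \Pi \mid T_1) = \Omega(n)$; summing over the $k-1$ leaf players then yields total communication $\Omega((k-1) n) = \Omega(kn)$ under the assumption $n \geq 3200k$.

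The main obstacle is the OR-structure of $k$-player disjointness: a protocol only has to detect whether \emph{some} pair $(T_1, T_j)$ intersects, and so in principle need not solve each pair separately. I would handle this by a direct embedding: conditioned on $j^{\ast} = j$, all other pairs are almost-surely disjoint under $\nu$, so the coordinator (aided by shared randomness) can privately simulate the remaining players $P_{j'}$ for $j' \notin \{1, j\}$, reducing the $k$-player instance to a 2-party set-disjointness instance between $P_1$ and $P_j$. Since the protocol cannot know $j^{\ast}$ in advance, by a standard averaging argument the $\Omega(n)$ information cost from the 2-party lower bound must be paid on every leaf edge, which closes the proof. The delicate step I expect to require the most care is ensuring that the conditioning needed to expose a pair as a 2-party disjointness instance does not leak a non-trivial amount of information about $j^{\ast}$ itself, which is exactly what the sparse (single planted coordinate out of $n \geq 3200 k$) structure of $\nu$ is designed to prevent.
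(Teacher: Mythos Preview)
The paper does not give its own proof of this proposition: it is quoted verbatim as ``Theorem 3.1 of \cite{phillips2012lower}'' in the Preliminaries section and used as a black box. There is therefore no in-paper argument to compare your sketch against.

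That said, your outline is in the right family --- the information-complexity-plus-symmetrization template is indeed how the cited Phillips--Verbin--Zhang result is obtained --- but a couple of steps are stated too loosely to count as a proof. First, the inequality ``the number of bits on the $P_j$-edge upper bounds $I(T_j;\Pi\mid T_1)$'' is not correct as written: the edge communication $|\Pi_j|$ upper bounds $H(\Pi_j)$, not the mutual information of $T_j$ with the \emph{entire} transcript $\Pi$. The actual argument goes through an information-cost quantity tied to $\Pi_j$ (or, in the symmetrization approach, through the expected communication on the $j$-th edge directly), and then invokes a direct-sum/embedding step; you should be explicit about which quantity you are bounding. Second, in your embedding, it is not the coordinator who simulates the remaining $k-2$ players but the two ``real'' parties jointly (using public randomness to sample the other players' inputs from the conditional distribution), and you should verify that this conditional distribution is samplable without knowing the answer --- this is where the product structure of the NO part of $\mu$ and the $n\ge 3200k$ slack are actually used.
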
 

}

\subsection{Block-encoding}

Our quantum protocols in the multiparty model are based on quantum singular value transformation (QSVT) \cite{gilyen2019quantum}. A key ingredient of using QSVT is block-encoding. For convenience, in this section,  we list some results about block-encoding that will be used in our quantum protocols. 

\begin{defn}[Block-encoding, c.f. Definition 24 of \cite{gilyen2019quantum}]
Suppose that $A$ is an $s$-qubit operator, $\alpha\geq \|A\|, \varepsilon \in \mathbb{R}^{>0}$ and $q\in \mathbb{N}$, then we say that the $(s+q)$-qubit unitary $U_A$ is an $(\alpha, q, \varepsilon)$ block-encoding of $A$, if
\be
\|A - \alpha (\bra{0}^{\otimes q}\otimes I) U_A (\ket{0}^{\otimes q}\otimes I) \| \leq \varepsilon,
\ee
where $\|\cdot\|$ is the operator norm. In matrix form
\be
U_A = \begin{pmatrix}
A/\alpha & \cdot \\
\cdot & \cdot \\
\end{pmatrix}
\ee
up to an error $\varepsilon$.
\end{defn}

In quantum computing, we hope $\alpha$ is as small as possible. It is obvious that the optimal choice is $\alpha = \|A\|$. In the model of communication complexity, we assume that each party has unlimited computational power, so each party can first compute the singular value decomposition (SVD) of $A$ and use it to construct the block-encoding with $\alpha = \|A\|$. We can even assume that $\varepsilon=0$. More precisely, if $A=UDV^T$ is the SVD of $A$, then 
\be
\label{block-encoding by SVD}
\begin{pmatrix}\vspace{.2cm}
U & \\
  & I
\end{pmatrix}
\begin{pmatrix} \vspace{.2cm}
    D/\|A\| & \sqrt{I - D^2/\|A\|^2} \\
    \sqrt{I - D^2/\|A\|^2} & -D/\|A\|
\end{pmatrix}
\begin{pmatrix}\vspace{.2cm}
V^T & \\
  & I
\end{pmatrix}
\ee
is an $(\|A\|, 1, 0)$ block-encoding of $A$. In the above, we implicitly assumed that $A$ is square, otherwise we can add some zero rows or columns. In this paper, we will always use this block-encoding.

The following result is a direct application of the technique of linear combination of unitaries \cite{berry2015simulating}.

\begin{lem}
\label{lem:block-encoding construction}
For each $i\in\{0,1,\ldots,r-1\}$, let $U_i$ be an $(\alpha_i, q,0)$ block-encoding of $A_i \in \mathbb{R}^{d_i\times n}$, \update{where $\alpha_i>0$}. Let $V$ be a unitary such that 
\be \label{lcu:eq}
V \ket{0} = \frac{1}{\alpha} \sum_{i=0}^{r-1} \alpha_i \ket{i},
\ee
where $\alpha = \sqrt{\sum_i\alpha_i^2}$.
Then
\be
\label{block-encoding-unitary}
({\rm SWAP}_{1,2}\otimes I_n)\left(\sum_{i=0}^{r-1} \ket{i} \bra{i}  \otimes U_i\right) (V\otimes I_{2^q} \otimes I_n) 
\ee
is an $(\alpha, q+\log r,0)$ block-encoding of 
\[A := \begin{pmatrix}
A_0 \\
\vdots \\
A_{r-1}
\end{pmatrix},\]
where ${\rm SWAP}_{1,2}$ swaps the first two registers containing $\log r$ and $q$ qubits respectively.
\end{lem}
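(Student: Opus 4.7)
The plan is to verify the block-encoding property by tracing the action of the constructed unitary $W$ on an arbitrary input $\ket{0}^{\log r}\otimes \ket{0}^q\otimes \ket{\psi}$ and checking that the appropriate zero-ancilla projection recovers $A\ket{\psi}/\alpha$. This is the standard linear combination of unitaries template, specialised to the case where the summands are stacked vertically rather than added.

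The first step is to apply $V\otimes I_{2^q}\otimes I_n$, which by \eqref{lcu:eq} produces $\tfrac{1}{\alpha}\sum_i \alpha_i\ket{i}\ket{0}^q\ket{\psi}$. Next I would apply the controlled block-encoding $\sum_i \ket{i}\bra{i}\otimes U_i$, which acts term by term. Since each $U_i$ is a zero-error $(\alpha_i,q,0)$ block-encoding of $A_i$, I can expand
\[
U_i\ket{0}^q\ket{\psi} \;=\; \alpha_i^{-1}\,\ket{0}^q(A_i\ket{\psi}) + \ket{\xi_i^{\perp}},
\]
where $\ket{\xi_i^{\perp}}$ has its $q$-qubit register in a subspace orthogonal to $\ket{0}^q$. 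The crucial observation is that the preparation amplitude $\alpha_i/\alpha$ multiplies the block-encoding amplitude $1/\alpha_i$, leaving a uniform prefactor of $1/\alpha$. The resulting state is $\tfrac{1}{\alpha}\sum_i \ket{i}\ket{0}^q(A_i\ket{\psi}) + \text{junk}$, with the junk still carrying a $q$-register orthogonal to $\ket{0}^q$ because the controlled operation does not mix the $q$ qubits of different $i$'s.

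Finally, applying $\mathrm{SWAP}_{1,2}\otimes I_n$ reorders the first two tensor factors so that the $q$-register comes first, yielding
\[
\tfrac{1}{\alpha}\,\ket{0}^q\otimes\Bigl(\sum_i \ket{i}\otimes A_i\ket{\psi}\Bigr) + \text{junk}'.
\]
Recognising $\sum_i \ket{i}\otimes A_i\ket{\psi}$ as the block-stacked vector $A\ket{\psi}$, and noting that the junk is killed by the zero projection on the leading $q$ qubits, this is exactly the claimed block-encoding equation with error $0$.

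There is no serious obstacle here; the only real subtlety is the meaning of $\mathrm{SWAP}_{1,2}$ between registers of different sizes ($\log r$ and $q$ qubits), which should be read as the canonical isomorphism that permutes two tensor factors, so that the overall effect is simply to place the $q$-qubit ancilla in the leading position prescribed by the block-encoding convention. For consistency, the normalisation $\alpha=\sqrt{\sum_i \alpha_i^2}$ is forced by unitarity of $V$: from $\|V\ket{0}\|^2 = 1$ one reads off $\sum_i (\alpha_i/\alpha)^2 = 1$, matching the stated value.
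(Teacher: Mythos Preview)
Your proposal is correct and follows essentially the same approach as the paper's own proof: apply the three factors of $W$ in turn to $\ket{0}^{\log r}\ket{0}^{q}\ket{\psi}$, use that each $U_i$ is a zero-error block-encoding to extract the $\alpha_i^{-1}\ket{0}^{q}A_i\ket{\psi}$ component, and identify $\sum_i \ket{i}\otimes A_i\ket{\psi}$ with $A\ket{\psi}$ after the SWAP. Your write-up is in fact slightly more detailed than the paper's (you spell out the cancellation of the $\alpha_i$'s and comment on the meaning of the unequal-register SWAP), but the argument is the same.
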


\begin{proof}
Denote the unitary (\ref{block-encoding-unitary}) as $W$.
We can check that for any state $\ket{\psi}$
\beas
W \ket{0}^{\otimes \log r} \ket{0}^{\otimes q} \ket{\psi} &=&  \frac{1}{\alpha} \sum_{i=0}^{r-1} \ket{0}^{\otimes q} \otimes \ket{i} \otimes A_i \ket{\psi} + \text{orthogonal terms} \\
&=&  \frac{1}{\alpha} \ket{0}^{\otimes q} \otimes A \ket{\psi} + \text{orthogonal terms}.
\eeas
This means that $W$ is a block-encoding of $A$.
\end{proof}

Similarly, we have the following result.

\begin{lem}
\label{lem:block-encoding construction2}
For each $i\in\{0,1,\ldots,r-1\}$, let $U_i$ be an $(\alpha_i, q,0)$ block-encoding of $A_i \in \mathbb{R}^{m\times n}$, \update{where $\alpha_i>0$}. Let $V$ be a unitary such that 
\be \label{lcu:eq2}
V \ket{0} = \frac{1}{\sqrt{\alpha}} \sum_{i=0}^{r-1} \sqrt{\alpha_i} \ket{i},
\ee
where $\alpha = \sum_i\alpha_i$.
Then
\be
\label{block-encoding-unitary2}
(V^\dag \otimes I_{2^q} \otimes I_n) \left(\sum_{i=0}^{r-1} \ket{i} \bra{i}  \otimes U_i\right) (V\otimes I_{2^q} \otimes I_n) 
\ee
is an $(\alpha, q+\log r,0)$ block-encoding of $A_0+\cdots+A_{r-1}$.
\end{lem}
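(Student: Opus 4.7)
The plan is to verify the block-encoding condition directly, mirroring the argument used in the proof of Lemma \ref{lem:block-encoding construction} but with the key difference that the uncomputation by $V^\dagger$ now collapses the index register back onto $\ket{0}^{\otimes \log r}$ rather than just swapping it, producing a genuine sum rather than a direct sum. Let $W$ denote the unitary in \eqref{block-encoding-unitary2}. It suffices to evaluate $(\bra{0}^{\otimes \log r}\otimes \bra{0}^{\otimes q}\otimes I_n)\,W\,(\ket{0}^{\otimes \log r}\otimes \ket{0}^{\otimes q}\otimes I_n)$ and check that this equals $(A_0+\cdots+A_{r-1})/\alpha$ with $\alpha=\sum_i \alpha_i$.

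First I would use \eqref{lcu:eq2} to rewrite the projector onto the ancilla state as $V\ket{0}^{\otimes \log r} = \frac{1}{\sqrt{\alpha}}\sum_i \sqrt{\alpha_i}\,\ket{i}$, so that after absorbing $V$ on the right and $V^\dagger$ on the left into the $\ket{0}$ and $\bra{0}$ of the index register, the expression reduces to
\[
\frac{1}{\alpha}\sum_{i,j}\sqrt{\alpha_i\alpha_j}\,(\bra{j}\otimes \bra{0}^{\otimes q}\otimes I_n)\Bigl(\sum_k \ket{k}\bra{k}\otimes U_k\Bigr)(\ket{i}\otimes \ket{0}^{\otimes q}\otimes I_n).
\]
The middle controlled unitary immediately gives $\delta_{ij}$ on the index register, collapsing the double sum to a single one. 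Each remaining summand is $(\bra{0}^{\otimes q}\otimes I_n)U_i(\ket{0}^{\otimes q}\otimes I_n) = A_i/\alpha_i$ by the definition of an $(\alpha_i,q,0)$ block-encoding of $A_i$. Therefore the whole expression becomes
\[
\frac{1}{\alpha}\sum_i \alpha_i \cdot \frac{A_i}{\alpha_i} = \frac{1}{\alpha}\sum_{i=0}^{r-1} A_i,
\]
which is precisely what the block-encoding condition requires.

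The only thing left is bookkeeping: one should note that $W$ is indeed unitary because it is a composition of the unitaries $V, V^\dagger$ (on the index register) and the controlled $U_i$'s (controlled on the index register), and that the ancilla count is $q+\log r$ as asserted. The proof is genuinely straightforward — this is the textbook LCU identity — and there is no real obstacle beyond carefully matching registers. I would conclude with the one-line remark that when all $U_i$ act on the same physical space (as required here, since the $A_i$ have a common size $m\times n$), the controlled-$U$ step is well-defined, in contrast to the stacked construction of Lemma \ref{lem:block-encoding construction}.
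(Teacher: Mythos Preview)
Your proof is correct and follows the same LCU verification that the paper intends (the paper simply writes ``Similarly, we have the following result'' and defers to the argument of Lemma~\ref{lem:block-encoding construction}). Your operator-level computation of the top-left block is the same calculation as the paper's state-level one of applying $W$ to $\ket{0}^{\otimes\log r}\ket{0}^{\otimes q}\ket{\psi}$ and reading off the $\ket{0}^{\otimes\log r}\ket{0}^{\otimes q}$ component; there is no substantive difference.
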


{\bf Notation.} For any matrix $A$, with $A^{+}$, we always mean the  pseudoinverse of $A$. 
The operator norm of $A$ is denoted as $\|A\|$, which is the largest singular value. The Frobenius norm is denoted as $\|A\|_F$. It is defined as the square root of the sum of the absolute squares of the elements of $A$.  The condition number $\kappa$ of matrix $A$ is defined as the ratio of the largest singular value and the smallest nonzero singular value.

\section{Two parties}

In this section, we focus on the Alice-Bob model. In this model, Alice receives a matrix $A \in \mathbb{R}^{m\times n}$ and Bob receives a vector $\b \in \mathbb{R}^{m}$, and their goal is to solve the linear regression problem
\be
\label{linear regression problem:2 parties}
\argmin_{\x \in \mathbb{R}^n} \quad \|A\x-\b\|^2
\ee
through 1-way or 2-way communication.\footnote{Throughout, by an optimal solution of a linear regression problem $\argmin_{\x \in \mathbb{R}^n} \|A\x-\b\|^2$, we always mean the solution $A^+\b$, where $A^+$ is the pseudoinverse of $A$. It is possible that a linear regression can have infinitely many solutions; however, the solution with minimum norm is unique, which is  $A^+\b$ \cite{planitz19793}.} In the quantum case, their goal is to output the quantum state of the optimal solution. In the classical case, their goal is to sample from the optimal solution.\footnote{Classically, the task of outputting a vector solution has been studied in \cite{vempala2020communication}. In this paper, we show that sampling is also hard for classical computers. Moreover, this makes the quantum speedup in solving linear regression problems more convincing.} This sampling task is partially inspired by the quantum-inspired classical algorithms \cite{tang2019quantum}. Also, sampling is a natural application of measuring the quantum state of the solution. But outputting a quantum state could be a much harder task than sampling because we can perform many other operations on a quantum state.

If the communication is 2-way, then either Alice or Bob can output the result (i.e., the quantum state or a sample). If the communication is 1-way, then only Alice or Bob can output the result depending on the direction of the communication. 

Since our main goal is to demonstrate the quantum advantage in terms of dimension, we assume that entries of $A$ and $\b$ can be specified using $O(\log(mn))$ bits for simplicity. \update{In this work, many of the quantum protocols require the communication of the operator norm $\|A\|$ and the 2-norm $\|\b\|$. When entries are specified by $O(\log(mn))$ bits, these two norms are also specified by $O(\log(mn))$ bits. This is enough for us since we indeed only need a good upper bound of these quantities in our quantum protocols.
For the sampling task, the norm of $\b$ is unimportant, so in this paper, we do not assume that $\b$ is a unit vector even if sometimes we use the notation $\ket{\b}$.}

The optimal solution of (\ref{linear regression problem:2 parties}) is $\x_{\rm opt}=A^+\b$. Since Alice can compute $A^+$ in advance, the linear regression problem is indeed equivalent to the matrix-vector multiplication problem. 
We define
\be \label{def:gamma}
\gamma:= \frac{\|A \x_{\rm opt}\|}{\|\b\|} ,
\ee
which describes the fraction of the norm of $\b$ that lies in the column space of $A$. 
It has an interesting geometric explanation. Namely, it is the cosine of the angle between $\b$ and the column space of $A$. If $\gamma=1$, then the linear system $A\x=\b$ is consistent, and $\x_{\rm opt}$ is the solution with the minimum norm. If $\gamma = 0$, then $\x_{\rm opt} = 0$.

\subsection{The quantum protocols}

In this section, we present the quantum protocols for solving linear regression problems in the 1-way and 2-way models.

{\bf Case 1 (1-way). Only Bob can send quantum information to Alice.}

In this case, it is Alice that needs to output the quantum state of the solution. The quantum protocol is straightforward. Namely, Bob just sends the quantum state $\ket{\b}$ to Alice, then Alice applies $A^{+}$ to $\ket{\b}$. To this end, Alice constructs a block-encoding of $A^{+}$, i.e., constructs a unitary $U_{A}$ using the SVD of $A$ such that
\[
U_A = \begin{pmatrix}
A^{+}/\|A^{+}\| & \cdot \\
\cdot & \cdot
\end{pmatrix}.
\]
Then she applies $U_{A}$ to $\ket{0}\ket{\b}$ and obtains
\be \label{case 1:solution state}
\frac{1}{\|A^{+}\|} \ket{0}  \otimes A^{+}\ket{\b} + \ket{0}^\bot
=\frac{\|A^{+}\ket{\b}\|}{\|A^{+}\|} \ket{0}  \otimes \ket{\x_{\rm opt}} + \ket{0}^\bot,
\ee
where $\ket{0}^\bot$ refers to some orthogonal terms. Now Alice can measure the first register. If she receives $\ket{0}$, then the post-selected state is $\ket{\x_{\rm opt}}$.
The success probability \update{of Alice seeing $\ket{0}$ in the first register} is
\[
\frac{\|A^{+} \ket{\b}\|^2}{\|A^{+}\|^2}.
\]
Since the communication is 1-way, Alice cannot use the quantum amplitude amplification technique. Thus for Alice to obtain a copy of the state $\ket{\x_{\rm opt}}$, they need to repeat the above procedure $O(\|A^{+}\|^2/\|A^{+} \ket{\b}\|^2)$ times, i.e.,  Bob sends $O(\|A^{+}\|^2/\|A^{+} \ket{\b}\|^2)$ copies of the state $\ket{\b}$ to Alice. Therefore, the quantum communication complexity is $O((\log m)\|A^{+}\|^2/\|A^{+} \ket{\b}\|^2)$. \update{Usually, Bob does not know $\|A^{+}\|^2/\|A^{+} \ket{\b}\|^2$ exactly, which depends on $A$ and $\b$. Here we assume that Bob knows a good upper bound on it so he knows how many copies need to be sent to Alice.}

To obtain a clear intuition about the complexity, we can bound the complexity in terms of $\kappa$ (the condition number of $A$) and $\gamma$ (defined in (\ref{def:gamma})).
Suppose the SVD of $A = \sum_{i=1}^r \sigma_i \ket{u_i} \bra{v_i}$ and $\b = \sum_{i=1}^m \beta_i \ket{u_i}$, where $r={\rm Rank}(A)$ and $\sigma_1\geq \cdots\geq \sigma_r>0$. Then $\|A^+\| = 1/\sigma_r$ and
\be
\label{lower bound of b}
\|A^{+} \b\|^2 = \sum_{i=1}^r \frac{|\beta_i|^2}{\sigma_i^2}
\geq \frac{1}{\sigma_1^2} \sum_{i=1}^r |\beta_i|^2 = \frac{1}{\sigma_1^2} \frac{\|A \x_{\rm opt}\|^2}{\|\b\|^2} = \frac{\gamma^2}{\sigma_1^2}.
\ee
So the communication complexity is bounded by $O((\log m)\kappa^2/\gamma^2)$.

It is possible that $\|A^{+} \ket{\b}\|$ (or $\gamma$) is very small or even zero. This happens when $\b$ is far from the column space of $A$. In this case, there is a small success probability to obtain the solution state by measuring the first register of the state (\ref{case 1:solution state}). If after $O(m\log (mn))$ measurements Alice still does not receive $\ket{0}$, then this means that the success probability is small. When this happens, Bob can just send the whole vector to Alice. This costs $O(m\log(mn))$.



{\bf Case 2 (1-way). Only Alice can send quantum information to Bob.}

In this case, Bob needs to output the quantum state of the solution. The quantum protocol reads as follows. Alice computes $A^{+}$ and prepares the quantum state of $A^{+}$:
\[
\ket{A^{+}} := \frac{1}{\|A^{+}\|_F} \sum_{i\in[n], j \in[m]} (A^{+})_{ij} \ket{i} \ket{j} = 
\frac{1}{\|A^{+}\|_F} \sum_{j \in[m]} A^{+}\ket{j} \otimes \ket{j}.
\]
Then she sends $\ket{A^{+}}$ to Bob. Since Bob has the vector $\b$, he can construct a unitary $U_b$ such that $U_b\ket{0} = \ket{\bar{\b}}$, where $\bar{\b}$ is the complex conjugate of $\b$. Now he applies $U_b^\dag$ to the second register $\ket{j}$ of $\ket{A^{+}}$. The resulting state is
\[
\frac{1}{\|A^{+}\|_F} \sum_{j \in[m]} A^{+}\ket{j} \otimes U_b^\dag \ket{j}
=
\frac{1}{\|A^{+}\|_F} A^{+}\ket{\b} \otimes \ket{0} + \ket{0}^\bot
=
\frac{\|A^{+}\ket{\b}\|}{\|A^{+}\|_F} \ket{\x_{\rm opt}} \otimes \ket{0} + \ket{0}^\bot.
\]
\update{
Regarding the first equality, note that $U_b\ket{0} = \ket{\bar{\b}}$, i.e., the first column of $U_b$ is $\ket{\bar{\b}}$, so we have $U_b^\dag \ket{j} = b_j \ket{0} + \ket{0}^\bot$. This means
\[
\sum_{j \in[m]} A^{+}\ket{j} \otimes U_b^\dag \ket{j}
=\sum_{j \in[m]} b_j A^{+}\ket{j}  \otimes  \ket{0} + \ket{0}^\bot
=A^{+}\ket{\b} \otimes \ket{0} + \ket{0}^\bot.
\]
}
The success probability of obtaining $\ket{\x_{\rm opt}}$ is
\[
\frac{\|A^{+}\ket{\b}\|^2}{\|A^{+}\|_F^2}.
\]
This means that Alice needs to send $O(\|A^{+}\|_F^2/\|A^{+}\ket{\b}\|^2)$ copies of the state $\ket{A^{+}}$ to Bob. So the total number of qubits in communication is $O((\log(mn))\|A^{+}\|_F^2/\|A^{+}\ket{\b}\|^2)$. Note that 
\[
\|A^{+}\|_F^2=\sum_{i=1}^r \frac{1}{\sigma_i^2} \leq \frac{\min(m,n)}{\sigma_r^2},
\]
together with (\ref{lower bound of b}),
the communication complexity is bounded by
$
O((\log(mn))\min(m,n)\kappa^2/\gamma^2).
$

Similar to the discussion in case 1, if $\gamma$ is too small, then Alice can just send the whole matrix to Bob, which uses $O(mn\log(mn))$ qubits in communication.

{\bf Case 3 (2-way). Alice and Bob can send quantum information to each other.}

In this case, either one can output the solution state. They still use the protocol designed in the first case. Since it is 2-way quantum communication, they can use the quantum amplitude amplification technique \cite{brassard2002quantum}. More precisely, denote the state (\ref{case 1:solution state}) as $\ket{\psi}=U_A \ket{0}\ket{\b}$. To apply the quantum amplitude amplification, the main obstacle for Alice and Bob is to perform the reflection 
\[
2\ket{\psi} \bra{\psi} - I
=
U_A (2\ket{0}\ket{\b} \bra{0}\bra{\b} - I) U_A^\dag.
\]
They can achieve this as follows.
For any state, say in Bob's hand, if he wants to apply the reflection he can first send the state to Alice who applies $U_A^\dag$ to it. Then Alice sends the new state back to Bob who can apply the reflection $2\ket{0}\ket{\b} \bra{0}\bra{\b} - I$. After that he sends the state to Alice again and asks her to apply $U_A$ to the state. Finally, Alice sends the resulting state to Bob.

Therefore, the total number of copies Bob needs to send to Alice is $O(\|A^{+}\|/\|A^{+} \ket{\b}\|)$. This means that the quantum communication complexity is $O((\log m)\|A^{+}\|/\|A^{+} \ket{\b}\|)$, which is bounded from above by $O((\log m)\kappa /\gamma)$. In summary, we have the following result.

\begin{thm}
\label{thm:quantum protocol}
Suppose Alice has a matrix $A \in \mathbb{R}^{m\times n}$ and Bob has a vector $\b \in \mathbb{R}^{m}$. The quantum communication complexity of outputting $\ket{A^{+}\b}$ is
\begin{enumerate}
\item $O(\min\{(\log m)\|A^{+}\|^2 \|\b\|^2/\|A^{+} \b\|^2 , m\log(mn)\})$ if the communication is 1-way from Bob to Alice.
\item $O(\min\{(\log(mn))\|A^{+}\|_F^2\|\b\|^2/\|A^{+} \b\|^2,mn\log(mn)\})$ if the communication is 1-way from Alice to Bob.
\item $O(\min\{(\log m)\|A^{+}\|\|\b\|/\|A^{+} \b\|,m\log(mn)\})$ if the communication is 2-way.
\end{enumerate}
\end{thm}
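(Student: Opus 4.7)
The plan is to give three protocols, one for each direction of communication, and show that each achieves the claimed bound; essentially all the ingredients already appear in the discussion preceding the theorem, so the proof is mostly a matter of assembling them cleanly and justifying the success probability analysis.

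For Case 1 (Bob $\to$ Alice), I would have Alice precompute the SVD of $A$ and, using the recipe in \eqref{block-encoding by SVD}, construct a $(\|A^+\|,1,0)$ block-encoding $U_A$ of $A^+$ locally (this uses no communication, since Alice has $A$). Bob then prepares $\ket{\b}$ using $O(\log m)$ qubits and sends one copy to Alice, who applies $U_A$ to $\ket{0}\ket{\b}$ and measures the ancilla. The post-measurement state conditional on outcome $\ket{0}$ is $\ket{\x_{\rm opt}}$, and the success probability is $\|A^+\b\|^2/(\|A^+\|^2\|\b\|^2)$. Since the communication is one-way, Alice cannot call amplitude amplification, so Bob must send $O(\|A^+\|^2\|\b\|^2/\|A^+\b\|^2)$ copies of $\ket{\b}$. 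If this quantity exceeds $m$, Bob instead sends the entire classical description of $\b$ using $O(m\log(mn))$ bits, which Alice can read into a quantum register, yielding the $\min$ in the stated bound.

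For Case 2 (Alice $\to$ Bob), Alice instead prepares the normalized ``matrix state''
\[
\ket{A^+} \;=\; \frac{1}{\|A^+\|_F}\sum_{j\in[m]} A^+\ket{j}\otimes \ket{j},
\]
which uses $O(\log(mn))$ qubits, and sends it to Bob. Since Bob knows $\b$, he can construct (locally) a unitary $U_\b$ with $U_\b\ket{0}=\ket{\bar\b}$ and apply $U_\b^\dagger$ to the second register; as worked out in the discussion, this projects the first register onto $\ket{\x_{\rm opt}}$ with probability $\|A^+\b\|^2/\|A^+\|_F^2$. Again, one-way communication precludes amplitude amplification, so Alice must send $O(\|A^+\|_F^2\|\b\|^2/\|A^+\b\|^2)$ copies; otherwise she can fall back on sending $A$ entry by entry with $O(mn\log(mn))$ bits.

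For Case 3 (2-way), I use the same block-encoding $U_A$ as in Case 1, but now the two parties can jointly implement the reflection
\[
2\ket{\psi}\bra{\psi}-I \;=\; U_A\bigl(2\ket{0}\ket{\b}\bra{0}\bra{\b}-I\bigr)U_A^\dagger,
\]
by passing the working register back and forth: Alice applies $U_A^\dagger$, sends the register to Bob, who applies $2\ket{0}\ket{\b}\bra{0}\bra{\b}-I$ locally (which he can do since he holds $\b$), then sends it back for Alice to apply $U_A$. Standard amplitude amplification \cite{brassard2002quantum} reduces the repetition count from $O(\|A^+\|^2\|\b\|^2/\|A^+\b\|^2)$ to $O(\|A^+\|\|\b\|/\|A^+\b\|)$, each iteration costing $O(\log m)$ qubits of communication. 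The main subtlety to justify carefully is that the reflection about $\ket{0}\ket{\b}$ really can be realised with zero error via this back-and-forth pattern and that the usual amplitude amplification guarantee applies unchanged; once that is in place, the $\min$ with the trivial $O(m\log(mn))$ bound handles the small-$\gamma$ regime and the theorem follows.
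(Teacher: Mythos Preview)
Your proposal is correct and follows essentially the same approach as the paper: the three protocols (Bob sends $\ket{\b}$ and Alice applies a block-encoding of $A^+$; Alice sends the matrix state $\ket{A^+}$ and Bob uncomputes with $U_\b^\dagger$; the 2-way case uses the same $U_A$ together with amplitude amplification implemented by shuttling the register back and forth) and the fallback to sending the full classical description are exactly the constructions the paper uses. The only minor slips are notational---in Case~2 the success probability should be $\|A^+\b\|^2/(\|A^+\|_F^2\,\|\b\|^2)$ rather than $\|A^+\b\|^2/\|A^+\|_F^2$, and in Case~1 you might note (as the paper does) that Bob must be assumed to know an upper bound on the required number of copies---but neither affects the argument.
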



As a direct corollary, if $A$ is well-conditioned and $\b$ lies in the column space of $A$ (e.g., $A$ is unitary), then the communication complexity in case 1 and case 3 is $O(\log m)$, and the communication complexity in case 2 is $O(\min(m,n)\log (mn))$ \update{since $\|A^{+}\|_F^2\|\b\|^2/\|A^{+} \b\|^2=O(\text{Rank}(A))=O(\min(m,n))$}. From our lower bounds analysis in the next section, these are indeed optimal.

\subsection{Lower bounds}

In this section, we show that the quantum protocols we are given in the previous section are optimal up to a factor of $\log (mn)$. We also prove the lower bounds of classical protocols for the task of sampling from the optimal solution.
To our ends, we first compute the quantum/classical communication complexity for the permutation-index problem defined as follows.

\begin{defn}[{\bf Permutation-Index Problem}]
Suppose Alice has a permutation $P=(P_1,\ldots,P_n)$ of $[n]$, \update{where $P_i\in [n]$}. Suppose Bob has an index $j\in[n]$. The goal is for Bob to determine $P_j$, where the communication is  1-way from Alice to Bob. 
\end{defn}

This problem is a special case of the index problem, in which $P$ is a multiset \cite{jayram2013optimal}. However, we shall show that it is as hard as the index problem.

\begin{prop}
The quantum and classical 1-way communication complexity of the {\bf Permutation-Index Problem} is $\Theta(n\log n)$.
\end{prop}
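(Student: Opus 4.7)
The plan is to establish the upper bound by a trivial protocol and the lower bound by a single information-theoretic argument that works uniformly in the classical and quantum settings, so that the harder quantum bound automatically implies the classical one. The upper bound is immediate: Alice sends the entire permutation $P$, which takes $\lceil \log_2(n!)\rceil = O(n\log n)$ bits (classically) or qubits (quantumly). The substance of the claim is the matching $\Omega(n\log n)$ lower bound.

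For the lower bound, by Yao's minimax principle and its quantum analogue it suffices to lower bound the distributional complexity under a hard distribution, so I take $P$ uniform in $S_n$ and $j$ uniform in $[n]$, independent of $P$, and let $M$ denote Alice's message --- a random variable classically, a quantum state $\rho_P$ quantumly. A correct protocol outputs $\hat P_j = f(M,j)$ with $\Pr[\hat P_j\neq P_j]\leq 1/3$ on average. For any fixed $i\in[n]$, Fano's inequality applied to $P_i$ ranging over $[n]$ gives
\[
H(P_i\mid \hat P_i) \;\leq\; h(1/3) + \tfrac{1}{3}\log(n-1) \;\leq\; \tfrac{1}{3}\log n + 1,
\]
where $h$ denotes binary entropy. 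Since $\hat P_i$ is obtained from $M$ by a (possibly quantum) measurement, the data-processing inequality for classical--quantum mutual information upgrades this to
\[
S(P_i\mid M) \;\leq\; H(P_i\mid \hat P_i) \;\leq\; \tfrac{1}{3}\log n + 1.
\]

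Summing over $i$ and invoking subadditivity of conditional entropy conditioned on a quantum system (a consequence of strong subadditivity of the von Neumann entropy) yields
\[
S(P\mid M) \;\leq\; \sum_{i=1}^n S(P_i\mid M) \;\leq\; \tfrac{n}{3}\log n + n.
\]
Using $H(P)=\log(n!)=n\log n - O(n)$ from Stirling, this gives $I(P;M) \;\geq\; \tfrac{2}{3}n\log n - O(n)$. Holevo's bound finally says that $I(P;M)\leq q$, where $q$ is the number of qubits of $M$ (classically, $q$ is just $|M|$ in bits), forcing $q=\Omega(n\log n)$ and matching the upper bound.

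The only delicate point, and the one I would write out most carefully, is that Fano's inequality, the data-processing inequality, and the subadditivity step all compose correctly for the classical--quantum mutual information $I(P;M)$. These are standard consequences of strong subadditivity of the von Neumann entropy together with the fact that a quantum measurement is a CPTP map, so the same chain of inequalities goes through in both models; the main obstacle is really a bookkeeping one, making sure the identical estimate controls the classical and quantum message sizes simultaneously.
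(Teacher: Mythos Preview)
Your proof is correct, but it takes a genuinely different route from the paper's. The paper proves the lower bound by a \emph{reduction} from the ordinary index problem (Alice holds a sequence $S\in[n]^n$, Bob holds $j$, output $s_j$), whose $\Theta(n\log n)$ one-way complexity is quoted from the literature. The reduction is a short combinatorial trick: any multiset $S$ is encoded as a pair $(M,P)$ where $M=(m_1,\dots,m_n)$ is the multiplicity vector and $P$ is a permutation obtained by relabelling repeated values with consecutive integers; since $M$ can be transmitted in $\log\binom{2n-1}{n-1}\le 2n$ bits, a protocol for Permutation-Index of cost $c$ yields one for Index of cost $c+O(n)$, forcing $c=\Omega(n\log n)$.

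Your argument instead reproves the lower bound from scratch via the standard Nayak-style information-theoretic template: Fano for each coordinate, subadditivity of conditional entropy (via strong subadditivity in the quantum case), and Holevo to cap $I(P;M)$ by the number of qubits. This is more self-contained---it does not require the Index lower bound as a black box---and makes the quantum case explicit rather than relying on the cited results. The paper's reduction, on the other hand, is shorter and more combinatorial, and highlights that Permutation-Index is exactly as hard as Index up to an additive $O(n)$, which is a slightly stronger structural statement than the bare $\Omega(n\log n)$ you obtain. Both approaches are equally valid for the stated proposition; the only thing I would tighten in your write-up is that you do not actually need Yao's minimax, since the worst-case error guarantee already gives $\Pr[\hat P_i\neq P_i]\le 1/3$ for every fixed $i$ when averaged over $P$.
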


\begin{proof}
We prove that the index problem can be reduced to the {\bf Permutation-Index Problem}. Let $S=\{s_1,\ldots,s_n\}$ be a multiset of $n$ integers from $[n]$. In the index problem, Alice has $S$ and Bob has an index $j\in[n]$, their goal is for Bob to determine $s_j$. It is known that the quantum and classical 1-way communication complexity of the index problem is $\Theta(n\log n)$~\cite{jayram2013optimal,jain2008optimal}.

The reduction is as follows: In the first step, for any $i\in[n]$, Alice computes the multiplicity of $i$ in $S$, namely, Alice computes $m_i = \#\{j\in[n]:s_j=i\}$. Then she sends the information $ M = (m_1,\ldots,m_n)$ to Bob.  In the second step, Alice transforms the multiset $S$ into a permutation. Suppose $i_1<\cdots<i_p$ are the integers such that their multiplicities are nonzero. Then Alice replaces $i_1$ in $S$ with $1,2,\ldots,m_{i_1}$ (the order is not important in the replacement), replaces $i_2$ in $S$ with $m_{i_1}+1,\ldots,m_{i_1}+m_{i_2}$, and replaces $i_p$ in $S$ with $m_{i_1}+\cdots+m_{i_{p-1}}+1,\ldots,n$. In the end, Alice receives a permutation $P$. It is not hard to see that there is a one-to-one correspondence between $S$ and the pair $(M, P)$. So if there is a protocol for Alice and Bob to determine $P_j$, then from the above construction, they can use this protocol to determine $S_j$. Here, Alice needs to send $M$ to Bob first.

Next, we compute the communication complexity. In the first step, the number of bits required to transmit $M$ is bounded by $\log \binom{2n-1}{n-1}\leq 2n$, where $\binom{2n-1}{n-1} = \#\{m_1+\cdots+m_n=n:m_i\geq 0\}$ is the number of nonnegative $n$-decompositions of $n$. This means that the communication complexity of {\bf Permutation-Index Problem} is  $\Theta(n\log n) - 2n = \Theta(n\log n)$.
\end{proof}

We are now ready to prove the lower bounds of quantum and classical protocols. We first compute quantum lower bounds in terms of $\kappa$ and $\gamma$. Based on it, we then use similar ideas to prove the optimality of our quantum protocols and estimate classical lower bounds.  

\begin{thm}[Quantum lower bounds (with respect to $\kappa,\gamma$)]
\label{thm:Quantum lower bound}
Suppose Alice has a matrix $A \in \mathbb{R}^{m\times n}$ and Bob has a vector $\b \in \mathbb{R}^{m}$. To prepare the quantum state $\ket{A^{+}\b}$,
\begin{enumerate}
\item $\Omega(\kappa^2 + 1/\gamma^2 + \log \min(m,n))$ qubits of communication are required in the 1-way case from Bob to Alice.
\item $\Omega(\min(m,n)\log \min(m,n))$ qubits of communication are required in the 1-way case from Alice to Bob.
\item $\Omega(\kappa + 1/\gamma + \log \min(m,n))$ qubits of communication are required in the 2-way case.
\end{enumerate}
\end{thm}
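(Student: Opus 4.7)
The plan is to prove each of the three lower bounds by reducing from the index or disjointness problem in the direction appropriate to the case, following the strategy already sketched in Section~1.2. The trivial $\Omega(\log\min(m,n))$ summand holds in every case because by varying $\b=\e_j$ for $j\in[\min(m,n)]$, Bob (or Alice) can force the target state $\ket{A^+\b}$ to be any of $\min(m,n)$ orthogonal basis vectors, so the receiver must learn at least $\log\min(m,n)$ bits of information about the sender's input.

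For $\Omega(\kappa^2)$ in Case~1 I would adapt the paper's suggested diagonal construction. Given a 1-way disjointness instance $x,y\subseteq[N]$, Alice sets $A=D$ with $D_{ii}=1$ if $i\in x$ and $D_{ii}=1/\varepsilon$ otherwise, while Bob sets $b_i=1$ if $i\in y$ and $b_i=0$ otherwise. Then $(D^{-1}\b)_i$ equals $1$ on $x\cap y$, $\varepsilon$ on $y\setminus x$, and $0$ elsewhere, so after Alice measures $\ket{D^{-1}\b}$ in the computational basis, the probability that the outcome lies in $x$---which she can check from her own input---is $0$ in the disjoint case and $|x\cap y|/(|x\cap y|+\varepsilon^2|y|)\ge 1/2$ otherwise. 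Taking $\varepsilon=1/\sqrt N$ gives $\kappa=\sqrt N$ and $\gamma=1$; since 1-way quantum disjointness in either direction reduces from the 1-way index problem and is therefore $\Omega(N)$ by the random-access-code bound, $O(1)$ repetitions of the putative protocol solve disjointness, forcing its cost to be $\Omega(N)=\Omega(\kappa^2)$. For $\Omega(1/\gamma^2)$ I would reduce instead from the ``reverse'' 1-way index problem (Alice has $j$, Bob has $x\in\{0,1\}^N$, Alice outputs $x_j$), which inherits the $\Omega(N)$ bound by the same encoding argument. Alice builds the $(N{+}1)\times 2$ isometry whose columns are $\e_0$ and $\e_j$, Bob builds $\b=\e_0+\sum_i x_i\e_i$, and then $A^+\b=(1,x_j)^T$, so Alice distinguishes the two possible output states by a Hadamard-basis measurement on $O(1)$ copies. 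On hard instances with Hamming weight $\Theta(N)$ one has $\kappa=1$ and $\gamma^2=(1+x_j^2)/(1+\mathrm{wt}(x))=\Theta(1/N)$, producing $\Omega(N)=\Omega(1/\gamma^2)$.

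For Case~2 the reduction is from the Permutation-Index problem of the preceding proposition: Alice encodes her permutation $P$ as the orthogonal matrix $A=\sum_i\e_i\e_{P(i)}^T$, giving $\kappa=\gamma=1$; Bob sets $\b=\e_j$, and then $A^+\b=\e_{P(j)}$. A computational-basis measurement by Bob recovers $P(j)$ deterministically, so any state-preparation protocol must cost $\Omega(n\log n)=\Omega(\min(m,n)\log\min(m,n))$.

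For Case~3 I recycle the two Case~1 constructions but invoke the $\Omega(\sqrt N)$ two-way quantum disjointness lower bound instead. The diagonal construction goes through verbatim and gives $\Omega(\sqrt N)=\Omega(\kappa)$, because the decision procedure uses only Alice's own input. For $\Omega(1/\gamma)$, I use an isometric ``selector'': given a disjointness instance $(S,T)\subseteq[N]^2$ with $|S|=|T|=N/2$, Alice builds the $(N{+}1)\times(|S|{+}1)$ matrix whose columns are $\e_0$ and $\{\e_i:i\in S\}$, and Bob builds $\b=\e_0+\sum_{i\in T}\e_i$. The columns are orthonormal, so $\kappa=1$, and $A^+\b=(1,\mathbf 1_{S\cap T})^T$; a computational-basis measurement on the prepared state yields $\ket 0$ deterministically in the disjoint case and an index in $S\cap T$ with probability $\ge 1/2$ otherwise. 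Here $\gamma^2=(1+|S\cap T|)/(1+|T|)=\Theta(1/N)$, so $\Omega(\sqrt N)=\Omega(1/\gamma)$. The main obstacle throughout is the joint requirement that (i)~the encoding yield a state from which the receiver, using only her own input and $O(1)$ copies, can extract the answer without any back-communication in the 1-way cases, and (ii)~the parameters $\kappa$ and $\gamma$ scale exactly as required to match the target term; balancing these is where the careful choice of $A$ and $\b$ in the two $\gamma$-reductions really matters.
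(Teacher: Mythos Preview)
Your proposal is correct and follows essentially the same strategy as the paper: reduce from disjointness or index by encoding the sets into $A$ and $\b$ so that a computational-basis measurement of $\ket{A^+\b}$ reveals the answer, and tune $\varepsilon$ so that $\kappa$ or $\gamma$ matches the target term. The one notable variation is in the $\gamma$ constructions: the paper uses a rank-deficient square diagonal matrix $A_{ii}\in\{0,1\}$ (with an extra ``anchor'' coordinate $n{+}1$) and reduces from disjointness in both Cases~1 and~3, whereas you use tall isometries (selector matrices with orthonormal columns) and, in Case~1, reduce from the index problem instead; both routes yield $\kappa=1$, $\gamma=\Theta(1/\sqrt N)$ and are interchangeable.
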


\begin{proof}
To prove the claimed lower bounds,  our main idea is to reduce the disjointness problem or the index problem to a linear regression problem. In our reduction, the linear regressions we constructe have the property that $A$ is square. This naturally leads to lower bounds in the general case. Namely, if $m>n$ we can use the same reduction by focusing on 
$\argmin_{\x} \left\|\begin{pmatrix}
A \\
0
\end{pmatrix} \x - \begin{pmatrix}
\b \\
0
\end{pmatrix} \right\|$. If $m\leq n$, we can focus on 
$\argmin_{\x} \left\|\begin{pmatrix}
A & 0 \\
\end{pmatrix} \x - \begin{pmatrix}
\b & 0 \\
\end{pmatrix} \right\|$. This explains why the dependence on $m,n$ is $\min(m,n)$. Because of this, we below assume that $m=n$.

We prove the second claim using the hardness of {\bf Permutation-Index Problem}.  We can reduce the {\bf Permutation-Index Problem} to a linear regression problem as follows: 
Alice constructs a permutation matrix $P$ according to the permutation she has. Bob constructs the quantum state $\ket{j}$. If Bob can sample from the solution state $\ket{P_j} = P \ket{j}$, then Bob can solve the {\bf Permutation-Index Problem}. Thus the quantum lower bound of solving linear regression problems in case 2 is $\Omega(n\log n)$. 

We shall use the hardness of the disjointness problem to prove the first and third claims together. We aim to show that for any $\kappa$, there is an instance $(A,\b)$ such that \update{at least $\kappa$ (or $\kappa^2$) bits of communication are required} to prepare the quantum state of the optimal solution.

If $\kappa$ is too large, then the naive protocol of sending the whole matrix or vector will be used, so
we assume that $1\leq \kappa \leq n$. Denote $l=\lfloor \kappa \rfloor$ as the integer part of $\kappa$. 
Suppose Alice has a subset $S \subseteq [l]$ and Bob has another subset $T\subseteq [l]$. 
Without loss of generality, we assume that $S,T \neq \emptyset$ and $S, T$ are proper subsets of $[l]$ of size $\Theta(l)$.\footnote{\update{If $|T|=o(l)$, then Bob can send $T$ directly to Alice, which only uses $O(|T|)=o(l)$ bits of communication. So we assume this is not the case.}}
In the disjointness problem, their goal is to determine if $S\cap T=\emptyset$. It is known that for this problem, the quantum 1-way communication complexity is $\Theta(l)$ \cite{buhrman2001communication} and the quantum 2-way communication complexity is $\Theta(\sqrt{l})$ \cite{razborov2003quantum}, also see Proposition \ref{prop1}. 

The reduction is as follows: Choose $\varepsilon = 1/\sqrt{l}$. Alice constructs an $n\times n$ diagonal matrix $A$ as follows:
\[
A_{ii} = \begin{cases}
1 & i \in S, \\
1/\varepsilon & i \in [l] \backslash S, \\
0 & i \in  [n] \backslash [l].
\end{cases}
\]
Bob constructs an $n$ dimensional vector $\b$ as follows:
\[
b_i = \begin{cases}
1 & i \in T, \\
\varepsilon & i \in [l] \backslash T, \\
0 & i \in  [n] \backslash [l].
\end{cases}
\]
Then 
\[
\ket{A^+\b} = \frac{1}{\sqrt{L}} \left(\sum_{i\in S \cap T} \ket{i} + \varepsilon \sum_{j\in (S \backslash T) \cup (T \backslash S)} \ket{j} + \varepsilon^2 \sum_{k\in [m] \backslash {S \cup T} } \ket{k} \right) ,
\]
where
\[
L = |S \cap T| + \varepsilon^2 |(S \backslash T) \cup (T \backslash S)| + \varepsilon^4 |\overline{S \cup T}|.
\]

If $S \cap T \neq \emptyset$, then the probability of getting an $i\in S\cap T$ from measuring $\ket{A^{+}\b}$ is at least 1/2. If $S \cap T = \emptyset$, we obtain a uniformly random $i\in S \cup T$ from measurements. The disjointness problem is also hard even if $|S\cap T| \leq 1$ \cite{razborov1990distributional,razborov2003quantum}.\footnote{\update{This follows from the main theorem proved in~\cite{razborov2003quantum}. Indeed, in \cite{razborov2003quantum}, it was shown that given a predicate $D$ on $\{0,1,\ldots,n\}$, let $l_0:=\max\{l:1\leq l \leq n/2, D(l)\neq D(l-1)\}, l_1:=\max\{n-l:n/2\leq l <n, D(l)\neq D(l+1)\}$, then up to a logarithmic factor the bounded-error quantum communication complexity of $f(x,y):=D(|x\cap y|)$ is $\sqrt{n l_0}+l_1$. For the disjointness problem with the promise that $|S\cap T| \leq 1$, we have $D(l)=1$ if $l\in\{1,n\}$ and $D(l)=0$ otherwise. In this case, we have $l_0=1,l_1=1$. So the lower bound is $\sqrt{n}$. Here we added $D(n)=1$ intentionally to ensure $l_1$ is well-defined. It corresponds to the case that $S=T=[n]$, which is the trivial case.}} Under this setting if $S\cap T \neq \emptyset$ then we will see the same index from $S\cap T$ many times by measuring the state $\ket{A^{+}\b}$. If $S\cap T = \emptyset$, we will see different indices from $S\cup T$. {So preparing $\ket{A^{+}\b}$ is sufficient to solve the disjointness problem}.

It is easy to compute that  $\kappa = \sqrt{l}$ and $\gamma = \Theta(1)$. So the quantum communication complexity is at least quadratic in $\kappa$ in the first claim and at least linear in the third claim.

Regarding the dependence on $\gamma$, we use the following construction. We also assume that $|S\cap T|\leq 1$. If $|S\cap T|=1$, we denote the intersection as $\{w\}$. Alice constructs an $(n+1)\times (n+1)$ diagonal matrix $A$ by setting $A_{ii}=1$ if $i\in S \cup \{n+1\}$ and 0 otherwise. Bob constructs a vector $\b$ such that $b_i=1$ if $i\in T\cup \{n+1\}$ and 0 otherwise. Now we have $A^+ = A$ and $\x_{\rm opt} = A\b = \sum_{i\in S\cap T} \ket{i} + \ket{n+1}$. If $S\cap T = \emptyset$, we only see $n+1$ by measuring $\ket{\x_{\rm opt}}$. Otherwise, \update{we will see $w$ with probability $1/2$.
Now $\kappa = 1$ and $\gamma = \sqrt{1/|T|} = \Theta(\sqrt{1/l})$ since $|T|=\Theta(l)$ as assumed in the beginning.} Hence,  the quantum communication complexity is at least quadratic in $1/\gamma$ in the first claim and at least linear in the third claim.

Finally, the lower bound of $\log n$ comes from the index problem. In the index problem, Alice has a bit string $(x_1,\ldots,x_n)$ and Bob has an index $j$. The goal is to output $x_j$. If the communication is from Bob to Alice or 2-way, then Bob can just send the index to Alice, and Alice outputs $x_j$. This costs $\Theta(\log n)$ communication.\footnote{By \cite[Theorem 3.7]{kremer1999randomized}, it is known that the VC-dimension of this function is $\Theta(\log n)$. By \cite[Theorem 3]{klauck2000quantum-vc-dim}, VC-dimension is a lower bound of one-way quantum communication complexity. Thus the one-way quantum communication complexity is lower bounded by $\Omega(\log n)$.}
To build the connection between this problem and the linear regression problem, Alice constructs a permutation matrix $U\in \mathbb{R}^{2n \times 2n}$. It is a block-diagonal matrix, each block has dimension 2. If $x_j=1$, then the $j$-th block is $I_2$. Otherwise, the $j$-th block is Pauli-$X$. Bob constructs $\ket{0}\ket{j}$. So if $x_j=1$, then $U\ket{0}\ket{j} = \ket{0}\ket{j}$. Otherwise, $U\ket{0}\ket{j} = \ket{1}\ket{j}$. This means that if we can prepare $U\ket{0}\ket{j}$ we then can solve the index problem.
\end{proof}

Recall from Theorem \ref{thm:quantum protocol} that the communication complexity of the quantum protocol we proposed for case 1 is $O((\log m)\|A^{+}\|^2\|\b\|^2/\|A^{+} \b\|^2)$ and for case 3 is $O((\log m)\|A^{+}\|\|\b\|/\|A^{+} \b\|)$. For the constructions in the above proof, we have $\|A^+\| = 1$ and $\|A^{+} \b\|^2/\|\b\|^2  = \Theta(|S\cap T|/|T|)$. So we indeed proved that the lower bound is $\Omega(\|\b\|^2/\|A^{+} \b\|^2)$ for case 1 and $\Omega(\|\b\|/\|A^{+} \b\|)$ for case 3. \update{However, it is not clear what is the dependence on $\|A^+\|$. To understand this,}
we can make appropriate scaling so that the above construction shows that the lower bound is $\Omega(\|A^{+}\|^2\|\b\|^2/\|A^{+} \b\|^2)$ for case 1 and $\Omega(\|A^{+}\|\|\b\|^2/\|A^{+} \b\|)$ for case 3. This suggests that the quantum protocols for case 1 and case 3 are optimal up to a factor of $\log m$. We can use a similar construction to show the optimality of the quantum protocol for case 2 up to a factor of $\log(mn)$.
We state this in the following theorem. The proof is similar to that of Theorem \ref{thm:Quantum lower bound}, so we defer it to Appendix \ref{deferred proof}.

\begin{thm}[Quantum lower bounds]
\label{Quantum lower bounds-general}
Suppose Alice has a matrix $A \in \mathbb{R}^{m\times n}$ and Bob has a vector $\b \in \mathbb{R}^{m}$. To prepare the quantum state $\ket{A^{+}\b}$,
\begin{enumerate}
\item $\Omega(\|A^+\|^2\|\b\|^2/{\|A^+\b\|^2})$ qubits of communication are required in the 1-way case from Bob to Alice.
\item $\Omega(\|A^+\|_F^2\|\b\|^2/\|A^+\b\|^2)$ qubits of communication are required in the 1-way case from Alice to Bob.
\item $\Omega(\|A^+\|\|\b\|/\|A^+\b\|)$ qubits of communication are required in the 2-way case.
\end{enumerate}
\end{thm}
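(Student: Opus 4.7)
My plan is to reuse the reductions from Theorem~\ref{thm:Quantum lower bound}, but to express the lower bounds directly in terms of $\|A^+\|$, $\|\b\|$, and $\|A^+\b\|$ rather than in terms of the compound quantities $\kappa$ and $\gamma$.

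For cases 1 and 3, I would apply the same disjointness-based diagonal construction. For parameter $l$ and $\varepsilon = 1/\sqrt{l}$, Alice's diagonal $A$ has singular values in $\{1, 1/\varepsilon\}$, so $\|A^+\| = 1$. Under the standard promise $|S\cap T|\leq 1$ and $|S|, |T| = \Theta(l)$, a direct calculation gives $\|\b\|^2 = |T|+\varepsilon^2(l-|T|) = \Theta(l)$ and $\|A^+\b\|^2 = |S\cap T|+\varepsilon^2|S\triangle T|+\varepsilon^4|[l]\setminus(S\cup T)| = \Theta(1)$. Hence $\|A^+\|^2\|\b\|^2/\|A^+\b\|^2 = \Theta(l)$ and $\|A^+\|\|\b\|/\|A^+\b\| = \Theta(\sqrt{l})$, and combining with the $\Omega(l)$ quantum one-way and $\Omega(\sqrt{l})$ quantum two-way disjointness bounds already invoked in the proof of Theorem~\ref{thm:Quantum lower bound} yields the first and third claims. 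Both ratios are scale-invariant in $A$ and $\b$, so the same construction with rescaling covers all target values of $\|A^+\|$ and $\|\b\|$.

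For case 2, I would reduce from the Permutation-Index problem. Given $k\leq\min(m,n)$, Alice encodes her permutation $P$ of $[k]$ as a $k\times k$ permutation matrix padded with zero rows and columns to dimensions $m\times n$, and Bob takes $\b = \ket{j}$ for his index $j\in[k]$. A short calculation shows that $A^+\b = \ket{P_j}$, $\|A^+\|_F^2 = k$, and $\|\b\|^2 = \|A^+\b\|^2 = 1$, so the target ratio equals $k$. A single computational-basis measurement of $\ket{A^+\b}$ reveals $P_j$, so by the $\Omega(k\log k)$ one-way Permutation-Index lower bound the protocol must communicate $\Omega(k) = \Omega(\|A^+\|_F^2\|\b\|^2/\|A^+\b\|^2)$ qubits.

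The main obstacle, such as it is, concerns robustness to the $\varepsilon$ trace-distance tolerance in the preparation of $\ket{A^+\b}$. For case 2 this is immediate: since $\ket{A^+\b}$ is a computational basis state, a single measurement still identifies $P_j$ with probability $1-O(\varepsilon)$. For cases 1 and 3 the intersecting and disjoint regimes give measurement distributions differing by $\Omega(1)$ in total variation -- the intersecting distribution places $\Omega(1)$ mass on $S\cap T$ while the disjoint one places none -- so $O(1)$ independent copies of the approximate state suffice to distinguish them with constant advantage, which is all the disjointness reduction needs.
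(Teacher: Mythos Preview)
Your proposal is correct. For cases 1 and 3 it is essentially what the paper does: the paper itself observes, in the paragraph preceding Theorem~\ref{Quantum lower bounds-general}, that the disjointness construction from Theorem~\ref{thm:Quantum lower bound} already has $\|A^+\|=1$ and $\|A^+\b\|^2/\|\b\|^2=\Theta(|S\cap T|/|T|)$, and that scale-invariance of the ratio handles arbitrary $\|A^+\|$. The appendix proof just reparametrises the diagonal entries as $\sqrt{\varepsilon},1/\sqrt{\varepsilon}$, which is cosmetic.

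Case 2 is where you genuinely diverge. You reduce from Permutation-Index via a single padded permutation matrix, so that $\|A^+\|_F^2=k$, $\|\b\|=\|A^+\b\|=1$, and the $\Omega(k\log k)$ Permutation-Index bound immediately yields $\Omega(k)$. The paper instead reduces from the index problem on an $m\times n$ bit matrix: Alice builds, from each column of her $(0,1)$-matrix, a diagonal block $D_j$ with entries in $\{\sqrt{\varepsilon},1/\sqrt{\varepsilon}\}$, stacks these into a tall $mn\times m$ matrix $D$, and Bob's vector encodes both his row index $i$ and column index $j$. A page of computation then shows $\|D^+\|_F^2\|\b\|^2/\|\x_{\rm opt}\|^2=\Theta(mn)$, matching the $\Theta(mn)$ index lower bound. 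Your route is decidedly shorter and more transparent; the paper's buys a construction whose hard instance is genuinely a dense, non-orthogonal matrix rather than a padded permutation, which one might view as a slightly stronger hardness statement, though not one the theorem as stated requires. One small point: make explicit that Alice takes $A$ so that $A^+$ (not $A$) is the permutation matrix sending $\ket{j}\mapsto\ket{P_j}$, i.e.\ she sets $A$ to be the transpose of that permutation matrix, padded with zeros.
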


\update{In Theorem \ref{thm:Quantum lower bound}, the lower bound is additive with respect to $\kappa$ and $\gamma$, which are two quantities with nice explanations. Note that $\kappa = \|A\|\|A^+\|, \gamma = \|AA^+\b\|/\|\b\|$, so $\kappa/\gamma  \geq \|A^+\|\|\b\|/\|A^+\b\|$. Although the lower bound given in Theorem \ref{Quantum lower bounds-general} is multiplicative, we cannot say it is a stronger lower bound. We indeed did not prove that $\Omega(\kappa/\gamma)$ is a lower bound. Actually, theorem \ref{Quantum lower bounds-general} can be viewed as an alternative statement of Theorem \ref{thm:Quantum lower bound} using $A,\b$ rather than $\kappa, \gamma$.}


The classical communication complexity of the disjointness problem is $\Theta(n)$ in 2-way communication. So similar to the proof of Theorem \ref{thm:Quantum lower bound}, we have the following lower bounds for classical protocols.

\begin{thm}[Classical lower bounds]
\label{thm:Classical lower bound}
Suppose Alice has a matrix $A \in \mathbb{R}^{m\times n}$ and Bob has a vector $\b \in \mathbb{R}^{n}$. To sample from the solution $A^{+}\b$,
\begin{enumerate}
\item $\Omega(\min(m,n))$ bits communication are required in the 1-way case from Bob to Alice.
\item $\Omega(\min(m,n)\log \min(m,n))$ bits communication are required in the 1-way case from Alice to Bob.
\item $\Omega(\min(m,n))$ bits communication are required in the 2-way case.
\end{enumerate}
All the lower bounds are also true even if $A$ is well-conditioned, \update{i.e., $\kappa=O(1)$.}
\end{thm}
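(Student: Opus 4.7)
The plan is to mirror the structure of the proof of Theorem~\ref{thm:Quantum lower bound}, but substitute the classical communication lower bounds from Proposition~\ref{prop1} (items 1 and 3) in place of the quantum ones, together with the Permutation-Index lower bound proved above. Throughout I will use the same WLOG reduction as in Theorem~\ref{thm:Quantum lower bound}: by padding $A$ with zero rows when $m>n$ or zero columns when $m<n$, I may assume $A$ is square of dimension $\min(m,n)$, so any lower bound I obtain in terms of the side length $l$ of the square instance automatically reads as $\Omega(\min(m,n))$ (or $\Omega(\min(m,n)\log\min(m,n))$) in the original problem. In every construction below the matrix $A$ will be unitary (a permutation matrix) or a $0/1$ diagonal, so $\kappa=1$; this gives the final ``even if $A$ is well-conditioned'' clause.

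For case 2 (1-way from Alice to Bob), I would reduce from the Permutation-Index Problem, whose classical 1-way complexity was just shown to be $\Theta(n\log n)$. Given a permutation $P=(P_1,\dots,P_n)$, Alice constructs the $n\times n$ permutation matrix $A$ with $A e_j = e_{P_j}^{-1}$ meaning I choose $A$ so that $A^{+}e_j=e_{P_j}$ (concretely, take $A$ to be the permutation matrix of the inverse of $P$). Bob, holding the index $j$, forms $\b=e_j$. Then $A^{+}\b=e_{P_j}$, and a single classical sample from the distribution $\mathcal{P}(\ket{A^{+}\b})$, which is the point mass at $P_j$, reveals $P_j$. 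Any $\varepsilon$-close protocol (for $\varepsilon$ a small constant) still identifies $P_j$ with probability $\geq 2/3$, so it must use $\Omega(n\log n)$ bits.

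For cases 1 and 3, I would reduce from the classical disjointness problem on a universe of size $l=\min(m,n)-1$, whose 1-way and 2-way randomised complexities are both $\Theta(l)$. I use the second (well-conditioned) construction from the proof of Theorem~\ref{thm:Quantum lower bound}: Alice forms the diagonal matrix $A$ with $A_{ii}=1$ if $i\in S\cup\{l+1\}$ and $0$ otherwise, Bob forms $\b$ with $b_i=1$ if $i\in T\cup\{l+1\}$ and $0$ otherwise. Then $A^{+}\b$ is supported on $(S\cap T)\cup\{l+1\}$ with equal weights. Under the standard hard-instance promise $|S\cap T|\leq 1$, the induced distribution puts mass $1$ on $l+1$ when $S\cap T=\emptyset$ and mass $1/2$ on $l+1$ when $|S\cap T|=1$; these distributions have total-variation distance $1/2$, so any sampler with TV error at most a sufficiently small constant, used for $O(1)$ independent copies, distinguishes the two cases with probability $\geq 2/3$. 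Hence the sampling task requires $\Omega(l)=\Omega(\min(m,n))$ bits in both the 1-way (Bob $\to$ Alice) model and the 2-way model.

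The step that needs the most care is case~2: I must be sure that the Permutation-Index reduction preserves the extra logarithmic factor. This is exactly why the earlier proposition worked with permutations (and established $\Theta(n\log n)$ via a reduction from the multiset index problem at a cost of only $O(n)$ extra bits for transmitting the multiplicity vector $M$); that additive $O(n)$ overhead is negligible against the $\Theta(n\log n)$ bound, so my reduction inherits the tight $\Omega(n\log n)$ lower bound. The rest of the argument is a direct translation of the quantum case.
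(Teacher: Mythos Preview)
Your proof is correct and largely mirrors the paper's own argument: for case~2 you reduce from the Permutation-Index problem exactly as in Theorem~\ref{thm:Quantum lower bound}, and for cases~1 and~3 you reuse the well-conditioned $0/1$-diagonal disjointness instance from that same proof, invoking the classical $\Theta(l)$ bound for disjointness in place of the quantum one. The paper does the same, stating that ``the reductions in the proof of Theorem~\ref{thm:Quantum lower bound} are also true for classical protocols.''

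The one substantive difference concerns the ``even if $\kappa=O(1)$'' clause for the 2-way case. You observe (correctly) that the $0/1$-diagonal instance already has $\kappa=1$, so nothing further is needed for the theorem as stated. The paper instead supplies a separate argument via Distributed Fourier Sampling (Proposition~\ref{prop2}): Alice sets $A=H^{\otimes d}D_f$ and Bob sets $\ket{\b}=\ket{g}$, so that sampling from $A\ket{\b}$ is exactly the Fourier-sampling task with an $\Omega(2^d)=\Omega(n)$ classical lower bound. The payoff of the paper's route is that $A$ is \emph{unitary}, so the hard instance has both $\kappa=1$ and $\gamma=1$, matching the stronger claim in Table~\ref{comparison: Alice Bob case} that the lower bounds hold even when $\kappa=O(1)$ \emph{and} $1/\gamma=O(1)$. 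Your $0/1$-diagonal instance has $\kappa=1$ but $\gamma=\Theta(1/\sqrt{l})$, which suffices for the theorem exactly as written but not for the sharper version advertised in the table.
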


\begin{proof}
Similar to the analysis at the beginning of the proof of Theorem \ref{thm:Quantum lower bound}, we only need to consider the case that $m=n$.
The reductions in the proof of Theorem \ref{thm:Quantum lower bound} are also true for classical protocols, so we now only need to prove the claim that $\Omega(n)$ bits communication are required in the 2-way case even if $A$ is well-conditioned. Regarding this, we use the hardness of the Distributed Fourier Sampling problem studied in \cite{montanaro2019quantum}. \update{For convenience, we assume that $n=2^d$ for some integer $d>0$. In the Distributed Fourier Sampling problem, Alice has a function $f:\{0,1\}^d \rightarrow \{\pm 1\}$, Bob has another function $g:\{0,1\}^d \rightarrow \{\pm 1\}$.} Their goal is to sample from the distribution corresponding to the Fourier coefficients of $fg$, i.e., to sample from the state
\[
\ket{P_{fg}}:=\sum_{s\in\{0,1\}^n} \left( \frac{1}{2^d} \sum_{x\in\{0,1\}^d} f(x)g(x) (-1)^{s\cdot x}\right) \ket{s}.
\]
We can reduce this problem to a linear regression problem as follows. Alice constructs the matrix $H^{\otimes d} D_f$, where $H$ is the Hadamard matrix, and $D_f$ is diagonal with $x$-th diagonal entry equals $f(x)$, where $x\in\{0,1\}^d$. Bob constructs a vector whose quantum state is $\ket{g} = {2^{-d/2}} \sum_{x\in\{0,1\}^d} g(x) \ket{x}$. Then
$
\ket{P_{fg}} = H^{\otimes d} D_f \ket{g}.
$
It was shown in \cite[Theorem 1]{montanaro2019quantum} that any classical 2-way communication protocol for this problem must communicate $\Omega(2^d)=\Omega(n)$ bits, also see Proposition \ref{prop2}.
\end{proof}

Theorem \ref{thm:Classical lower bound} indicates that if the communication is from 1-way from Bob to Alice or 2-way, then the naive protocol for Bob sending the whole vector to Alice is optimal.
Regarding the second claim, we believe that the lower bound is $\Omega(n^2)$. However, we could not {prove this claim}. For this, we make the following conjecture.

\begin{con}
Suppose Alice has a matrix $A \in \mathbb{R}^{m\times n}$ and Bob has a vector $\b \in \mathbb{R}^{m}$. For Bob to sample from the solution $A^{+}\b$, $\Omega(mn)$ bits of communication are required in the 1-way case, where the communication is from Alice to Bob.
\end{con}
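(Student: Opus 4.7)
My plan is to reduce from the $m \times n$ matrix-index problem (Alice holds $M \in \{0,1\}^{m \times n}$, Bob holds $(i,j) \in [m] \times [n]$ and outputs $M_{ij}$), which by Proposition~\ref{prop1} requires $\Omega(mn)$ one-way classical bits from Alice to Bob. The main difficulty, relative to the permutation-index construction in Theorem~\ref{thm:Classical lower bound} that only gives $\Omega(n \log n)$, is to encode $M$ into a matrix $A$ of the matching dimensions $m \times n$ rather than inflating the dimensions by a factor of $\Theta(mn)$ via a block-diagonal encoding of the $mn$ bits in $2 \times 2$ blocks.

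A clean attempt is a packing construction. Fix a $2\epsilon$-separated packing $\mathcal{F}$ of $(n/2)$-subsets of $[n]$ under the total variation distance between their uniform distributions; such a packing of size $2^{\Omega(n)}$ exists by a standard volume argument. Let Alice's input be $(S_1, \ldots, S_m) \in \mathcal{F}^m$, which has entropy $\Omega(mn)$. She sets $A = B^+$ where $B \in \mathbb{R}^{n \times m}$ has $i$-th column $\tfrac{1}{\sqrt{n/2}} \mathbf{1}_{S_i}$, so that $A^+ = B$ by the Moore--Penrose identity $(B^+)^+ = B$. Bob, given $i \in [m]$, takes $\b = \e_i$ and samples from $A^+ \b = \tfrac{1}{\sqrt{n/2}} \mathbf{1}_{S_i}$, i.e., from the uniform distribution on $S_i$. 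For a deterministic protocol, the per-message distribution $\widetilde{P}_i(\Pi)$ is $\epsilon$-close to $\mathrm{Unif}(S_i)$, and by the packing property it uniquely determines $S_i$; since $\widetilde{P}_i$ is a deterministic function of $\Pi$, the message recovers $(S_1, \ldots, S_m)$ and one obtains $|\Pi| \geq H((S_i)_i) = \Omega(mn)$.

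The main obstacle is that a randomized sampling protocol --- which is needed, since any deterministic sampler produces only point masses --- only guarantees that the averaged output distribution $Q_i(X) = \mathbb{E}_{r_A}[\widetilde{P}_i(\Pi(X, r_A), i)]$ is $\epsilon$-close to $P_i(X)$; the per-message distributions $\widetilde{P}_i(\Pi, i)$ can be arbitrarily far. In fact, the naive protocol in which Alice samples one $y_i \in S_i$ uniformly for each $i$ and transmits $(y_1, \ldots, y_m)$ achieves the task exactly using only $O(m \log n)$ bits, with per-message distributions that are point masses far from $\mathrm{Unif}(S_i)$. This shows that the packing reduction as stated cannot give more than an $\Omega(m \log n)$ lower bound, essentially matching what Theorem~\ref{thm:Classical lower bound} already establishes. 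To close the remaining gap I would enlarge Bob's input space so that pre-sampling becomes infeasible, for instance by letting $\b \in \{\pm 1\}^m$ with $A \in \{\pm 1\}^{m \times n}$, so that Alice must support sampling from the $2^m$ distributions $P_\b(A) = |A^+ \b|^2 / \|A^+ \b\|^2$; the collection $\{P_\b(A)\}_{\b}$ jointly determines $A$ up to trivial symmetries, and one would hope that any one-way protocol requires $\Omega(mn)$ bits. Formalising this --- the step that appears to have stymied the authors --- is the main technical challenge; it would require a new direct-sum theorem for one-way randomized sampling that handles the $r_A$-averaging in the correctness guarantee, in the spirit of the classical vector-output bound of \cite{vempala2020communication}.
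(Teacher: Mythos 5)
Note first that the statement you are proving is explicitly labelled a \emph{conjecture} in the paper: the authors state that they could not prove it, and only sketch a possible line of attack via composing the Distributed Fourier Sampling problem with the index problem (Alice holds $m$ Boolean functions $f_1,\ldots,f_m$; Bob holds a Boolean function $g$ and an index $j$; the target is to sample from $\ket{P_{f_j g}}$). So there is no proof in the paper to compare against, only a proposed direction, and your write-up should not be judged for failing to supply a complete argument.

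With that understood, your proposal is a sound and useful analysis of why the obvious reductions fail, and it is essentially consistent with the authors' intuition. The pre-sampling counterexample you give is the crux: whenever Bob's input selects a single target distribution (e.g., $\b = \e_i$ picking out row $i$), Alice can use her private randomness to draw one sample per row and ship the $m$ results in $O(m\log n)$ bits, so a packing-style reduction from matrix-index cannot beat the $\Omega(\min(m,n)\log\min(m,n))$ bound of Theorem~\ref{thm:Classical lower bound}. Your suggested fix --- enlarging Bob's input to $\b\in\{\pm 1\}^m$ so that Alice must implicitly support $2^m$ target distributions that she cannot pre-sample --- is the same structural idea as the authors' Distributed-Fourier-Sampling-composed-with-index proposal, where the target $P_{f_j g}$ depends on Bob's function $g$ for exactly this reason. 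What is missing in both is the same piece you identify: a direct-sum or information-cost theorem for one-way randomized \emph{sampling} protocols that controls the averaging over Alice's private randomness. Neither your route nor the authors' has been carried through; the conjecture remains open, and your contribution here is a crisp articulation of the obstruction rather than a proof.
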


It is possible {that one may be able} to use the composition of the Distributed Fourier Sampling problem and the index problem to prove the conjecture. In this composed problem, Alice has $m$ Boolean functions $f_1,\ldots,f_m$, and Bob has a Boolean function $g$ as well as an index $j$. The goal is to sample from $\ket{P_{f_jg}}$.

\section{Multiple parties}

In this section, we consider linear regression problems in a more general setting. Suppose there are $s$ parties $P_0, \ldots, P_{r-1}$. For each $i$, the party $P_i$ receives a matrix $A_i \in \mathbb{R}^{d_i \times n}$ and a vector $\b_i\in \mathbb{R}^{d_i}$. Their goal is to solve the linear regression problem
\be
\label{general linear regression}
\argmin_{\x} \quad \|A \x - \b\|,
\ee
where
\be
\label{case 1}
A = \begin{pmatrix}
A_0 \\
\vdots \\
A_{r-1}
\end{pmatrix}, \quad 
\b = \begin{pmatrix}
\b_0 \\
\vdots \\
\b_{r-1}
\end{pmatrix}.
\ee
%


We assume that there is a referee such that each party can only send information to the referee. In the simultaneous message passing (SMP) model, the communication is 1-way, i.e., the referee is not allowed to send information to other parties. 
From the second claim of Theorem \ref{thm:Quantum lower bound}, we have that it is hard to solve the linear regression problem (\ref{general linear regression}) in the SMP model (see Proposition \ref{prop:lower bound SMP 1-way} below). 
So similar to the classical coordinator model \cite{vempala2020communication}, we assume that the communication is 2-way between each party and the referee. We call it the quantum coordinator model. We will {discuss this model in more detail} in Section \ref{section:Multiparty communication complexity of disjointness problem}.

\subsection{The quantum protocol}
\label{The quantum protocol:multiple parties}

In this section, we aim to propose a quantum protocol for solving (\ref{general linear regression}) based on the technique of quantum singular value transformation (QSVT). With QSVT, we have a near-optimal quantum algorithm for solving linear regression problems in terms of time and query complexity \cite{gilyen2019quantum}. Below, we show that this algorithm is still effective in the quantum coordinator model. \update{For completeness, we list all the invoked results in Appendix \ref{appC}.}

We first present a general result for QSVT in terms of communication complexity. 

\begin{prop}
\label{prop:QSVT in CC}
Suppose $P_i$ has a matrix $A_i \in \mathbb{R}^{d_i \times n}$, where $i\in\{0,\ldots,r-1\}$. Let $A$ be given in (\ref{case 1}). Assume that $A$ is Hermitian. Let $f\in \mathbb{R}[x]$ be a polynomial of degree $d$ satisfying that $|f(x)|\leq 1/2$ for all $x \in [-1,1]$.
Then there is a quantum protocol in the quantum coordinator model for the referee to construct an $(1, 3+\log r, 0)$ block-encoding of $f(A/\alpha)$ with $O(rd\log n)$ qubits of communication, where $\alpha = \sqrt{\sum_{i=0}^{r-1} \|A_i\|^2}$. Moreover, if $A=\sum_i A_i$, then the result still holds with $\alpha = \sum_i \|A_i\|$.
\end{prop}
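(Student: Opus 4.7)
The plan is a three-layer construction: locally form block-encodings, combine them via the LCU construction of Lemma~\ref{lem:block-encoding construction} (resp.\ Lemma~\ref{lem:block-encoding construction2}) in a communication-efficient way, and then apply QSVT to realise $f(A/\alpha)$.

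First, each party $P_i$ builds the $(\|A_i\|,1,0)$-block-encoding $U_i$ from the SVD of $A_i$ using the explicit formula in (\ref{block-encoding by SVD}); this is a purely local computation and uses no communication. Each $P_i$ then sends the scalar $\|A_i\|$ to the referee. Since every matrix entry is specified by $O(\log n)$ bits, each $\|A_i\|$ is also specified by $O(\log n)$ bits, so this stage costs $O(r\log n)$ bits. The referee now computes $\alpha$ and the $\log r$-qubit unitary $V$ of (\ref{lcu:eq}) locally.

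Second, I will show that a single query to the block-encoding $W$ of $A$ defined in (\ref{block-encoding-unitary}) can be simulated with $O(r\log n)$ qubits of communication. The unitary $V$ and the SWAP in (\ref{block-encoding-unitary}) are local to the referee. The only nontrivial factor is the controlled operator $C:=\sum_{i=0}^{r-1}\ket{i}\bra{i}\otimes U_i$. To apply $C$, the referee hands the $(1+\log r+\log n)$-qubit register holding the LCU index, block-encoding ancilla, and data to party $P_i$ in order $i=0,1,\ldots,r-1$; each $P_i$ applies the controlled unitary $\ket{i}\bra{i}\otimes U_i + (I-\ket{i}\bra{i})\otimes I$, which she can do locally since she knows $U_i$, and returns the register. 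Because the controls of different parties act on orthogonal subspaces of the index register, the composition of these $r$ controlled unitaries equals $C$ regardless of the order. Thus each query to $W$ costs $O(r(\log n+\log r))=O(r\log n)$ qubits of communication (padding with zero rows/columns if $r>n$). The inverse $W^\dag$ is handled symmetrically.

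Third, the referee invokes the QSVT construction from Appendix~\ref{appC} on $W$ for the polynomial $f$. Because $A$ is Hermitian, the transformed block encodes the eigenvalue polynomial $f(A/\alpha)$; and because $|f(x)|\le 1/2$ on $[-1,1]$, one can obtain a subnormalisation-$1$ block-encoding at the price of $2$ extra ancilla qubits, yielding the stated $(1,3+\log r,0)$-block-encoding. QSVT queries $W$ and $W^\dag$ a total of $O(d)$ times, interleaved with projector-controlled phase rotations that act only on ancilla qubits held by the referee and are therefore free. Multiplying $O(d)$ queries by the $O(r\log n)$ per-query cost yields the overall bound $O(rd\log n)$. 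The case $A=\sum_{i=0}^{r-1}A_i$ is identical, substituting Lemma~\ref{lem:block-encoding construction2} for Lemma~\ref{lem:block-encoding construction} and $\alpha=\sum_i\|A_i\|$ for $\alpha=\sqrt{\sum_i\|A_i\|^2}$. The main obstacle is Step two: one must realise $C$ with only one round-trip per party per query. The observation that each party's controlled piece acts as the identity off its designated LCU index, so the pieces commute and can be serialised at linear-in-$r$ cost, is what keeps the per-query cost $O(r\log n)$ rather than $\Omega(r^2\log n)$.
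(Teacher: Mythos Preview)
Your proposal is correct and follows essentially the same approach as the paper's own proof: build local SVD-based block-encodings, send the norms so the referee can form $V$, simulate one call to the LCU block-encoding $W$ of $A$ by a round-robin of controlled $U_i$'s at $O(r\log n)$ communication, and then run the alternating QSVT sequence, noting that the interleaving unitaries depend only on $f$ and are free. The paper's argument is identical in structure; your extra remark that the controlled pieces commute because they act on orthogonal LCU-index sectors is a nice explicit justification of the serialisation step, and your ancilla accounting matches the paper's $(1,3+\log r,0)$ outcome.
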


\begin{proof} The quantum protocol contains two steps.

{\bf Step 1.} The referee needs a block-encoding of $A$. This is achieved by Lemma \ref{lem:block-encoding construction}.
The party $P_i$ constructs an $(\|A_i\|, 1, 0)$ block-encoding of $A_i$ based on SVD (see (\ref{block-encoding by SVD})), i.e., $P_i$ computes the following unitary
\[
U_i = \begin{pmatrix}
A_i/\|A_i\| & \cdot \\
\cdot & \cdot
\end{pmatrix}.
\]
By Lemma \ref{lem:block-encoding construction}, 
\be
\label{U}
U = ({\rm SWAP}_{1,2}\otimes I_n)\left(\sum_{i=0}^{r-1} \ket{i} \bra{i}  \otimes U_i\right) (V\otimes I_{2} \otimes I_n) 
\ee
is an $(\alpha,1+\log r,0)$ block-encoding of $A$, where $\alpha = \sqrt{\sum_i \|A_i\|^2}$, and 
\be \label{unitary V}
V \ket{0} = \frac{1}{\alpha} \sum_{j=0}^{r-1} \|A_j\| \, \ket{j}.
\ee
For the referee to use this unitary, each party $P_i$ sends $\|A_i\|$ to the referee so that the referee can construct the unitary $V$ satisfying
(\ref{unitary V}).
For any state, to apply $U$ to it, the referee can first apply $V\otimes I_{2} \otimes I_n$ to it. Next, the referee sends the state to $P_0$ and asks $P_0$ {to apply} $U_0$ to the second and third register if the first register is $\ket{0}$. After that, $P_0$ sends the state back to the referee so that the referee can ask $P_1$ to do a similar control operation based on $U_1$. They need to repeat this process $r$ times. Finally, the referee applies ${\rm SWAP}_{1,2}\otimes I_n$ to the resulting state. This process totally uses $O(r \log n)$ qubits of communication.

{\bf Step 2.} The referee constructs the block-encoding of $f(A/\alpha)$. This is achieved by QSVT. By \cite[Theorem 56]{gilyen2019quantum},  with the block-encoding of $A$ and the polynomial $f$, we can construct a $(1, 3+\log r,0)$ block-encoding $\widetilde{U}$ of $f(A/\alpha)$.
The interesting part is the quantum circuit of $\widetilde{U}$, which has a decomposition of the form (see \cite[Lemma 19]{gilyen2019quantum})
\be
\label{block-encoding of A}
\widetilde{U} = W_0 U W_1 U^\dag W_2 U W_3 U^\dag \cdots W_d U,
\ee
where $W_0, W_1,\ldots, W_d$ are generated by one- and two-qubit unitaries depending on the polynomial $f$ and some other public unitaries. Since these unitaries and function $f$ are public, the referee can use $W_0, W_1,\ldots, W_d$ without any communication. \update{As discussed in step 1, the referee can use $U$ and $U^\dag$ once with $O(r\log n)$ qubits of communication. In (\ref{block-encoding of A}), the referee uses $U$ and $U^\dag$ $O(d)$ times}. So to use $\widetilde{U}$ once they communicate $O( r d \log n)$ qubits in total.

The last claim can be proved similarly based on Lemma \ref{lem:block-encoding construction2}.
\end{proof}

Recall that in terms of time complexity, given an $(\alpha, q, \varepsilon)$ block-encoding of $A$ in cost $T$,  we can construct a $(1,q+2, 4d\sqrt{\varepsilon/\alpha})$ block-encoding  of $f(A/\alpha)$ in cost $O(dT)$ \cite[Theorem 56]{gilyen2019quantum}. By Proposition \ref{prop:QSVT in CC}, we still have the same formula for the communication complexity of using QSVT. The difference is that we can compute
$T=O(r \log n)$ and $\alpha=\sqrt{\sum_i \|A_i\|^2}$ precisely.
If $A$ is not Hermitian, the result in  Proposition \ref{prop:QSVT in CC} is also true except that the matrix function $f(A/\alpha)$ is defined with respect to singular value decomposition (see \cite[Definition 16]{gilyen2019quantum}). With the above proposition, we now can propose a quantum protocol for solving linear regressions. 


\begin{thm}
\label{thm:solve the general linear regression}
For the problem (\ref{general linear regression}) in the quantum coordinator model, there is a quantum protocol for the referee to prepare $\ket{A^{+}\b}$ by using 
$\widetilde{O}(r^{1.5}\kappa/\gamma)$
qubits of communication.
\end{thm}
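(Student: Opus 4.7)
The plan is to instantiate the QSVT-based pseudoinverse algorithm of \cite[Corollary~31]{chakraborty2019power} inside the quantum coordinator model, using Proposition~\ref{prop:QSVT in CC} to realise the block-encoding of $A$ and an analogous sequential-routing construction to realise a preparation unitary for $\ket{\b}$.

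First, by Lemma~\ref{lem:block-encoding construction} and the proof of Proposition~\ref{prop:QSVT in CC}, the referee can implement a block-encoding $U$ of $A$ with normalisation $\alpha = \sqrt{\sum_i\|A_i\|^2}\le\sqrt{r}\,\|A\|$ at communication cost $T_A = O(r\log n)$ per use: each $P_i$ sends $\|A_i\|$ once so that the referee can assemble the LCU control unitary $V$ of (\ref{unitary V}), and thereafter each application of $U$ routes the ancilla-plus-data register sequentially through the parties so that $P_i$ applies $|i\rangle\langle i|\otimes U_i$ locally. The same pattern yields a preparation unitary $U_\b$ for
\[
\ket{\b}=\sum_{i=0}^{r-1}\frac{\|\b_i\|}{\|\b\|}\ket{i}\ket{\b_i}
\]
at cost $T_b=O(r\log n)$: each $P_i$ sends $\|\b_i\|$ once, the referee locally prepares $\sum_i(\|\b_i\|/\|\b\|)\ket{i}$, and then routes the register through the parties so that each $P_i$ applies the controlled preparation $|i\rangle\langle i|\otimes U_{\b_i}$.

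Next, I would feed these oracles into \cite[Corollary~31]{chakraborty2019power}, which prepares a state $\varepsilon$-close to $\ket{A^+\b}$ using
\[
\widetilde{O}\!\left(\frac{(T_A+T_b)\,\alpha}{\gamma\,\sigma_{\min}}\right)
\]
invocations of $U,U^\dagger,U_\b,U_\b^\dagger$. The variable-time amplitude amplification built into that result is exactly what reduces the dependence on $\kappa$ from quadratic (the naive QSVT + Grover bound, since after applying a QSVT block-encoding of a scalar multiple of $(A/\alpha)^+$ to $\ket{0}\ket{\b}$ the $\ket{0}$-amplitude is $\Theta(\gamma/\kappa)$) to linear. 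Substituting $T_A=T_b=O(r\log n)$, $\alpha\le\sqrt{r}\,\|A\|$, and $\sigma_{\min}=\|A\|/\kappa$ gives the target bound
\[
\widetilde{O}\!\left(r\log n \cdot \frac{\sqrt{r}\,\|A\|}{\gamma\cdot\|A\|/\kappa}\right) = \widetilde{O}(r^{1.5}\kappa/\gamma).
\]

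The step I expect to be the main obstacle is arguing that the \emph{time}-complexity guarantee of \cite[Corollary~31]{chakraborty2019power} transfers unchanged to the \emph{communication}-complexity setting of the coordinator model. For this, one must note that every non-local step of that algorithm is an invocation of one of $U,U^\dagger,U_\b,U_\b^\dagger$, with all interleaved operations (the QSVT phase rotations, the reflections driving variable-time amplitude amplification, and any gate on the referee's private workspace) performed locally by the referee and hence free; the total qubits of communication is therefore $O(r\log n)$ times the oracle count quoted above. A minor secondary point is that $A$ is rectangular and non-Hermitian, but QSVT is defined via singular values (\cite[Definition~16]{gilyen2019quantum}), so the pseudoinverse is handled without modification.
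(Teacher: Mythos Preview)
Your proposal is correct and follows essentially the same approach as the paper: the paper likewise builds the block-encoding of $A$ with $\alpha=O(\sqrt{r}\,\|A\|)$ and $T_A=O(r\log n)$ via Proposition~\ref{prop:QSVT in CC}, prepares $\ket{\b}$ by routing through the parties at cost $T_b=O(r\log n)$, and then invokes the QSVT pseudoinverse together with variable-time amplitude amplification to reach $\widetilde{O}(r^{1.5}\kappa/\gamma)$. The only presentational differences are that the paper first derives the naive $\widetilde{O}(r^{1.5}\kappa^2/\gamma)$ bound explicitly before appealing to VTAA, Hermitises $A$ via the dilation $\begin{pmatrix}0&A\\A^\dagger&0\end{pmatrix}$ rather than invoking the singular-value formulation directly, and devotes an appendix to verifying that the two VTAA subroutines (gapped phase estimation via Hamiltonian simulation of $A$, and the truncated block-encoding of $A^{-1}$) indeed decompose into calls to $U,U^\dagger$ interleaved with referee-local gates---which is precisely the obstacle you flag.
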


\begin{proof}
\update{Let $\delta\in(0,1]$ be a threshold of the singular values of $A$. Our idea below depends on QSVT. When using QSVT to a polynomial approximation of $3\delta/4x$, we will obtain $A^+_{\geq \delta}$ automatically. Here $A^+_{\geq \delta}$ is the truncated matrix by removing the singular values of $A$ that are smaller than $\delta$.}
Just for the convenience of the statement of complexity analysis below, we assume that $\delta= \Theta(\sigma_{\min})$ so that $\|A\|/\delta = \Theta(\kappa)$ and $A^+_{\geq \delta}=A^+$.
Also for convenience, we denote $m = d_0+\cdots+d_{r-1}$, the row dimension of $A$. 

When solving linear regression problems, we can assume that $A$ is Hermitian. Otherwise we can consider
$
\begin{pmatrix}
0 & A \\
A^\dag & 0 
\end{pmatrix}.
$
Its block-encoding is 
$
\begin{pmatrix}
0 & U \\
U^\dag & 0 
\end{pmatrix}
$,
where $U$ is given in (\ref{U}).
With a similar argument, the referee can still use this block-encoding with $O(r \log (mn))$ qubits of communication. So below we assume that $A$ is Hermitian.

To apply Proposition \ref{prop:QSVT in CC}, the referee needs a polynomial approximation of $1/x$ in the interval $[-1,1]\backslash [-\delta',\delta']$. This function is public and its polynomial approximation is known, e.g., see \cite[Corollary 69]{gilyen2019quantum}. Indeed, \cite[Corollary 69]{gilyen2019quantum} gives a polynomial approximation of $3\delta'/4x$, which is enough for solving linear regression problems.
The degree of the polynomial is $d = O((1/\delta')\log(1/\varepsilon))$.
In Proposition \ref{prop:QSVT in CC}, $f$ will be applied to the singular values of $A$. However, the singular values of $A$ can be larger than 1. To overcome this, we can apply Proposition \ref{prop:QSVT in CC} to the matrix $A/\alpha$, where $\alpha = \sqrt{\sum_{i=0}^{r-1} \|A_i\|^2}=O(\sqrt{r}\|A\|)$ \update{because $\|A_i\|\leq \|A\|$ for all $i$}. This means $\delta'=\delta/\alpha$.
By Proposition \ref{prop:QSVT in CC}, there is a quantum protocol for the referee to construct a $(1,3+\log r,0)$ block-encoding 
$\widetilde{U}$ of $(3\delta/4)A^{+}$ with $O( (r\alpha/\delta) (\log 1/\varepsilon) \log (mn))$ qubits of communication in total. 

Regarding the quantum state of $\b$, each party sends the norm information of $\b_i$ to the referee, and then the referee prepares 
\be\label{state}
\frac{1}{\|\b\|} \sum_{i=0}^{r-1} \|\b_i\|  \, \ket{i} \ket{0}.
\ee
Similar to the application of the block-encoding of $A$, to prepare the target state 
\[
\ket{\b} = \frac{1}{\|\b\|} \sum_{i=0}^{r-1} \|\b_i\|  \, \ket{i} \ket{\b_i}
\]
the referee can send the state (\ref{state}) to each party gradually and ask that party to prepare $\ket{\b_i}$ using a control operator. This requires $O(r \log (mn))$ qubits of communication in total.

Finally, the referee applies $\widetilde{U}$ to $\ket{0} \ket{\b}$ to prepare 
\[
\frac{3\delta}{4} \ket{0} \otimes A^{+} \ket{\b} + \ket{0}^\bot.
\]
The success probability is $\Omega(\delta^2\gamma^2/\|A\|^2)$, where $\gamma$ is defined in (\ref{def:gamma}). Also, see a similar analysis in (\ref{lower bound of b}). Since the communication is 2-way, they can use amplitude amplification. Therefore, in total, they communicated
\[
O\left(\frac{r \alpha\|A\|}{\delta^2\gamma}
(\log 1/\varepsilon) \log(mn) \right)
=
O\left(r^{1.5} (\kappa^2/\gamma)
(\log 1/\varepsilon) \log(mn) \right)
\]
qubits.

The dependence on $\kappa$ can be reduced to be linear by the technique of variable-time amplitude amplification. This technique is still effective in the quantum coordinator model. We defer the analysis of this part to Appendix \ref{app}. 
\end{proof}

Recall that when solving linear regressions on a quantum computer, the time complexity is $\widetilde{O}((T_A+T_b)\alpha/\delta\gamma)$ \cite[Corollary 31]{chakraborty2019power}, where $T_A$ is the time complexity to construct the block-encoding of $A$ and $T_b$ is the time complexity to prepare the quantum state $\ket{\b}$. Using our notation in the communication complexity model, $\alpha = O(\sqrt{r}\|A\|)$ and $T_A =T_b = \widetilde{O}(r)$, where $T_A,T_b$ should be understood as the communication complexity of constructing the block-encoding and preparing the quantum state respectively. This leads to a complexity of $\widetilde{O}(r^{1.5}\kappa/\gamma)$, which is exactly the result described in Theorem \ref{thm:solve the general linear regression}. This means that the formula $\widetilde{O}((T_A+T_b)\alpha/\delta\gamma)$ is true for both time and communication complexity. The difference is that for communication complexity we can compute $T_A, T_b$ precisely, while for time complexity $T_A, T_b$ are usually hard to estimate.

\update{In the quantum case, the dependence of the complexity on $r$ is $r^{1.5}$, where $\sqrt{r}$ comes from the construction of the block-encoding of $A$ and $r$ comes from the number of parties. Regarding the time and query complexity, QSVT usually leads to the best algorithm for linear regression. So in the communication complexity, $r^{1.5}$ might be optimal. In comparison, the complexity is linear in $r$ classically \cite{vempala2020communication}. It was shown in \cite{vempala2020communication} that for the harder task of outputting a vector solution, the naive protocol, that is player $P_i$ sends $A_i^TA_i, A_i^T\b$ to the referee, is optimal. That's why the dependence on $r$ is linear. In the quantum case, some other techniques may be required if we aim to reduce the dependence on $r$.
}

Finally, we consider a general linear regression problem by setting
\be
\label{case 2}
A = A_0 + \cdots + A_{r-1}, \quad 
\b = \b_0 + \cdots + \b_{r-1}.
\ee
in (\ref{general linear regression}). Here we have to assume that $d_0=\cdots=d_{r-1}$. We also assume that $A,\b \neq 0$. Note that for $A,\b$ defined in (\ref{case 2}), the linear regression (\ref{general linear regression}) is equivalent to $\sum_i A_i^T A_i \x =\sum_i A_i^T \b_i$, which is a special case of the setting of (\ref{case 2}). For the setting (\ref{case 2}), by Proposition \ref{prop:QSVT in CC}, for any polynomial $f$ of degree $d$, the referee can construct a $(1,3+\log r,0)$ block-encoding of $f(A/\alpha)$ with $O(rd\log n)$ qubits of communication, where $\alpha = \sum_{i=0}^{r-1}\|A_i\|$. Regarding the quantum state of $\b$, the referee can prepare
\[
\frac{1}{\sum_{i} \|\b_i\|}  \sum_{i=0}^{r-1} \|\b_i\| \, \ket{0}\ket{\b_i} + \ket{0}^\bot
=
\frac{\|\b\|}{\sum_{i} \|\b_i\|}  \ket{0}\ket{\b} + \ket{0}^\bot
\]
by linear combination of unitaries with $O(s\log n)$ qubits of communication.\footnote{The proof is basically the same as Lemma \ref{lem:block-encoding construction2}.} Therefore, similar to the protocol in Theorem \ref{thm:solve the general linear regression}, we have the following result.

\begin{prop}
    For the setting (\ref{case 2}),
    there is a quantum protocol for the referee to prepare $\ket{A^{+}\b}$ by using
    \be
    \widetilde{O}\left(\frac{r \sum_{i=0}^{r-1}\|A_i\|}{\gamma \sigma_{\min}} \frac{\sum_{i=0}^{r-1} \|\b_i\|}{\|\b\|}  \right)
    \ee
    qubits of communication, where $\sigma_{\min}$ is the minimal nonzero singular value of $A$.
\end{prop}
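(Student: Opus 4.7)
The plan is to adapt the proof of Theorem \ref{thm:solve the general linear regression} to the additive decomposition (\ref{case 2}). The only structural changes are (i) swap the block-stacking LCU construction for the additive one given in Lemma \ref{lem:block-encoding construction2}, and (ii) insert an LCU preparation of $\ket{\b}$, since here $\b$ is a sum rather than a concatenation of the $\b_i$. As before, I would first reduce to the Hermitian case by replacing $A$ with $\sm{0 & A \\ A^\dag & 0}$, and fix $\delta = \Theta(\sigma_{\min})$ so that $A^+_{\geq \delta} = A^+$.

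Next, I would invoke the ``moreover'' clause of Proposition \ref{prop:QSVT in CC} with $\alpha = \sum_i \|A_i\|$: the referee obtains a $(1, O(\log r), 0)$ block-encoding $\widetilde{U}$ of $(3\delta/4) A^+$, built from a degree-$\widetilde{O}(\alpha/\delta)$ odd polynomial approximation of $3\delta/(4x)$ on $[-1,1]\setminus[-\delta/\alpha,\delta/\alpha]$ (cf.\ \cite[Corollary 69]{gilyen2019quantum}), at a cost of $\widetilde{O}(r\alpha/\delta)$ qubits per application. In parallel, I would prepare $\ket{\b}$ via LCU exactly as sketched just above the proposition: each party sends $\|\b_i\|$, the referee prepares the amplitude register $\tfrac{1}{\sum_i\|\b_i\|}\sum_i \|\b_i\|\ket{i}\ket{0}$, and then threads the state through the parties so that each $P_i$ applies a controlled preparation of $\ket{\b_i}$. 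This yields $\tfrac{\|\b\|}{\sum_i\|\b_i\|}\ket{0}\ket{\b}+\ket{0}^\bot$ at $O(r\log n)$ qubits per attempt. Applying $\widetilde{U}$ to this state produces a flagged $A^+\ket{\b}$ with overall amplitude $\Omega\!\bigl(\delta\gamma\|\b\|/(\|A\|\sum_i\|\b_i\|)\bigr)$, using $\|A^+\b\|\geq \gamma\|\b\|/\|A\|$ from (\ref{lower bound of b}).

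Since the channel is 2-way, amplitude amplification is available as in Theorem \ref{thm:solve the general linear regression}, multiplying the per-round cost by $O(\|A\|\sum_i\|\b_i\|/(\delta\gamma\|\b\|))$ and giving a naive bound $\widetilde{O}(r\alpha\|A\|\sum_i\|\b_i\|/(\delta^2\gamma\|\b\|))$. One application of variable-time amplitude amplification, as used in Theorem \ref{thm:solve the general linear regression} and detailed in Appendix \ref{app}, shaves one factor of $\|A\|/\delta = \Theta(\kappa)$ and produces the claimed bound after substituting $\alpha = \sum_i \|A_i\|$ and $\delta = \Theta(\sigma_{\min})$. The main obstacle I anticipate is verifying that the VTAA argument of Appendix \ref{app} is insensitive to the two differences from Theorem \ref{thm:solve the general linear regression}: the altered value of $\alpha$ and the extra $\|\b\|/\sum_i\|\b_i\|$ factor from LCU preparation of $\ket{\b}$. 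Since VTAA treats $\widetilde{U}$ and the state-preparation unitary as black boxes and needs only a lower bound on the combined success amplitude, it should go through essentially unchanged, but this is the step where careless accounting could hide a logarithmic factor from the amplitude-estimation subroutine.
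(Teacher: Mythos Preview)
Your proposal is correct and follows essentially the same route as the paper: the paper's proof is simply the sentence ``similar to the protocol in Theorem \ref{thm:solve the general linear regression}'', and you have accurately unpacked what that entails---swapping in Lemma \ref{lem:block-encoding construction2} (so $\alpha=\sum_i\|A_i\|$), preparing $\ket{\b}$ by LCU with success amplitude $\|\b\|/\sum_i\|\b_i\|$, and then applying the QSVT/VTAA machinery of Theorem \ref{thm:solve the general linear regression} and Appendix \ref{app}. Your caution about VTAA is warranted but the paper takes the black-box applicability for granted, exactly as you surmise.
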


Unlike Theorem \ref{thm:solve the general linear regression}, here we do not have $\sum_{i=0}^{r-1}\|A_i\| \leq r\|A\|$ and we also cannot give a nice bound for $\sum_{i=0}^{r-1} \|\b_i\|/\|\b\|$.


\subsection{Lower bounds}

In this section, we prove certain quantum/classical lower bounds for solving the linear regression problem (\ref{general linear regression}). First, we show that it is hard to solve the linear regression (\ref{general linear regression}) in the SMP model. This can be seen as evidence of why it is more interesting to consider the quantum coordinator model.

\begin{prop}
\label{prop:lower bound SMP 1-way}
Assume that $\sum_{i=0}^{r-1} d_i \geq n$.
In the SMP model, $\Omega(n\log n)$ qubits of communication are required to prepare the state $\ket{A^+\b}$, and $\Omega(n\log n)$ bits communication are required to sample from the solution $A^+\b$.
\end{prop}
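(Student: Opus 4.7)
The plan is to reduce the Permutation-Index problem to SMP linear regression in a way that separates the permutation and the index across two distinct parties, and then to invoke the $\Omega(n\log n)$ lower bound for Permutation-Index already proved in the paper. Take $r=2$ (for general $r$ with $\sum_i d_i\ge n$, the remaining parties are padded with $A_i=0$, $\b_i=0$) and let $P_0$ hold a permutation $P$ of $[n/2]$ while $P_1$ holds an index $j\in[n/2]$, so that the stacked matrix has exactly $n$ columns. Let $P_0$ set $A_0=(P,-I_{n/2})\in\mathbb{R}^{(n/2)\times n}$ and $\b_0=\mathbf{0}$, and let $P_1$ set $A_1=(I_{n/2},\mathbf{0})\in\mathbb{R}^{(n/2)\times n}$ and $\b_1=e_j$. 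Stacking gives
\[
A = \begin{pmatrix} P & -I_{n/2} \\ I_{n/2} & 0 \end{pmatrix}, \qquad \b = \begin{pmatrix} 0 \\ e_j \end{pmatrix},
\]
an invertible square system of dimension $n$ whose condition number is an absolute constant (independent of $P$).

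The solution reads off blockwise: writing $\x=(y,z)$, the equations $Py-z=0$ and $y=e_j$ give $\x_{\rm opt}=(e_j,\,Pe_j)=(e_j,\,e_{P(j)})$, so the normalised solution state is
\[
\ket{\x_{\rm opt}} \;=\; \tfrac{1}{\sqrt{2}}\bigl(\ket{0}\ket{j}+\ket{1}\ket{P(j)}\bigr).
\]
A measurement of the first qubit yields $1$ with probability $1/2$, after which a measurement of the second register outputs $P_j$; classically, a sample from the distribution proportional to $|(\x_{\rm opt})_k|^2$ is uniform on $\{j,\,n/2+P(j)\}$, and any sample $\ge n/2$ encodes $P_j$ directly. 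A constant number of repetitions therefore converts any quantum SMP protocol that prepares $\ket{A^+\b}$ with $C$ qubits (respectively, any classical SMP protocol that samples from $A^+\b$ with $C$ bits) into an SMP protocol for Permutation-Index using $O(C)$ qubits (resp.\ bits), with $P_0$ holding $P$ and $P_1$ holding $j$.

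It remains to lower bound the SMP cost of this Permutation-Index instance. Given any SMP protocol with messages $M_0(P)$ and $M_1(j)$, party $P_1$ can regenerate $M_1$ from its own input, so upon receiving $M_0$ it can simulate the referee; this collapses any SMP protocol into a 1-way Alice-to-Bob protocol of cost $|M_0|$. Since the proposition on Permutation-Index earlier in the paper already shows that the classical and quantum 1-way Alice-to-Bob complexity is $\Omega((n/2)\log(n/2))=\Omega(n\log n)$, we conclude $|M_0|+|M_1|\ge|M_0|=\Omega(n\log n)$, and both claims of the proposition follow. The only subtle step is designing the reduction itself: the block structure must (i) split $P$ and $j$ across different parties, (ii) keep $A$ nonsingular so that $A^+\b$ is unique rather than $0$ (which is what the naive splits $A_0=P,\b_0=0,A_1=0,\b_1=e_j$ or similar would give), and (iii) expose $P_j$ with a single measurement or classical sample. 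The block form above accomplishes all three, and everything else in the proof is an appeal to results already in the paper.
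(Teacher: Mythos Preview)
Your proof is correct and follows the same high-level route as the paper --- reduce the SMP problem to the 1-way Alice$\to$Bob lower bound of Theorems~\ref{thm:Quantum lower bound} and~\ref{thm:Classical lower bound} (ultimately Permutation-Index) --- but you carry out explicitly what the paper compresses into a single sentence. The paper just says ``assume $P_0$ knows $A_0,\ldots,A_{r-1}$ and the referee knows $\b_0,\ldots,\b_{r-1}$; then it is the Alice-to-Bob 1-way model,'' and invokes the earlier theorems. You, by contrast, actually build the hard multiparty instance: the block system
\[
A=\begin{pmatrix}P&-I\\ I&0\end{pmatrix},\qquad \b=\begin{pmatrix}0\\ e_j\end{pmatrix},
\]
with $P_0$ holding the permutation row-block and $P_1$ holding the index row-block, and then collapse SMP to 1-way by letting $P_1$ recompute its own message and simulate the referee.

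The extra work you do is genuinely the nontrivial part of the argument. In the multiparty format each party holds \emph{both} a slice $A_i$ and the matching slice $\b_i$, so one cannot simply hand the permutation matrix to one party and $e_j$ to another: the obvious splits (e.g.\ $A_0=P,\b_0=0,A_1=0,\b_1=e_j$) make $A^+\b=0$, as you note. Your $2\times2$ block construction is exactly what is needed to separate the permutation from the index across parties while keeping $A$ invertible with constant condition number and a solution state $\tfrac1{\sqrt2}(\ket{j}+\ket{P(j)})$ that reveals $P(j)$. So your proof can be read as a rigorous instantiation of the paper's sketch rather than an alternative argument.
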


\begin{proof}
This is a direct corollary of the second claim of Theorems \ref{thm:Quantum lower bound}, \ref{thm:Classical lower bound}. We can assume that the party $P_0$ knows $A_0,\ldots,A_{r-1}$ and the referee knows $\b_0,\ldots,\b_{r-1}$. Then this is equivalent to a linear regression problem in the Alice-Bob model, where the communication is 1-way from Alice to Bob.
\end{proof}




We can similarly prove lower bounds in the quantum/classical coordinator model. But this only gives lower bounds in terms of $\kappa$ or $n$.
Below, we consider the lower bound with respect to $s$ and provide a much stronger one. We will use the hardness of a multi-player set-disjointness problem considered in \cite{vempala2020communication}. In this problem, the party $P_j$ receives a subset $T_j \subseteq [n]$, and their goal is to determine if $T_0 \cap T_j \neq \emptyset$ for some $j\geq 1$. As shown in \cite[Theorem 3.1]{phillips2012lower} and \cite[Theorem 1]{woodruff2017distributed} that for any classical protocol that succeeds with probability $1-1/r^3$, the communication complexity is lower bounded by $\Omega(r n)$. In the quantum case, we have the following result.

\begin{lem}
\label{lem:quantum bound for s-OR}
    In the quantum coordinator model, the quantum communication complexity for the multi-player set-disjointness problem is $\Theta(\sqrt{n} r)$.
\end{lem}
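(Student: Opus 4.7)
For the upper bound $O(r\sqrt{n})$, the plan is to have the coordinator sequentially simulate, for each $j\in\{1,\ldots,r-1\}$, the 2-party bounded-error quantum disjointness protocol of Aaronson--Ambainis--Razborov (Proposition~\ref{prop1}) between $P_0$ and $P_j$, routing every message through the coordinator. Each instance uses $O(\sqrt{n})$ qubits once its success probability is boosted to $1-1/r^{O(1)}$ (this boosting costs only a logarithmic factor which I would absorb into the $\sqrt{n}$ term for the $\Theta(\cdot)$ statement), and at the end the coordinator outputs the OR of the $r-1$ answers. Summing across $j$ gives total communication $O(r\sqrt{n})$.

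For the lower bound $\Omega(r\sqrt{n})$, my plan is a symmetrization-style reduction from 2-party quantum DISJ on universe $[n]$ to the multi-player problem in the coordinator model. I would take Razborov's hard distribution $\mu$ for 2-party DISJ (for which the bounded-error quantum communication is $\Omega(\sqrt{n})$) and build a distribution on $r$-tuples $(T_0,T_1,\ldots,T_{r-1})$ by choosing a uniformly random ``active'' index $j^{*}\in\{1,\ldots,r-1\}$, drawing $(T_0,T_{j^{*}})\sim\mu$, and using public/shared randomness to fix the other $T_j$'s so that $T_0\cap T_j=\emptyset$ for $j\neq j^{*}$ with probability $1$. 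Under this distribution, the multi-player protocol's output equals the 2-party DISJ answer on $(T_0,T_{j^{*}})$. Letting $c_j$ be the number of qubits exchanged between $P_j$ and the coordinator in the $r$-party protocol, an averaging argument over $j^{*}$ shows that on average $c_{j^{*}}$ can be used to build a 2-party protocol for DISJ (with $P_0$ and the coordinator jointly playing Alice and $P_{j^{*}}$ playing Bob, since all other parties' inputs are determined by shared randomness and contribute no communication). Razborov's bound then forces $\mathbb{E}_{j^{*}}[c_{j^{*}}]=\Omega(\sqrt{n})$, and by symmetry over $j$ we get $c_j=\Omega(\sqrt{n})$ for each $j$, hence the total $\sum_j c_j=\Omega(r\sqrt{n})$.

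The main obstacle will be making the per-player lower bound rigorous, because OR-type problems often permit a protocol to ``cheat'' by quickly locating an intersecting pair. The two points I expect to need care on are: (i) verifying that one may really simulate the $r$-party protocol as a 2-party protocol on the active pair using only $c_{j^{*}}$ qubits of communication, i.e., that the contributions of the inactive players can be absorbed into public randomness shared by $P_0$--coordinator on one side and $P_{j^{*}}$ on the other (this is standard in the coordinator model because inactive players' inputs are determined by the shared tape); and (ii) that the symmetrization step, which averages the role of ``active player'' uniformly over $j$, legitimately lifts the per-index expectation bound to a bound on every $c_j$ up to constants, which follows from re-labelling players. Once these two technical points are in place, combining with the upper bound completes the $\Theta(r\sqrt{n})$ claim.
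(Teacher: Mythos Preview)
Your core reduction is the right one and matches the paper: embed two-party DISJ by letting one party $P_{j}$ be Bob, letting Alice simulate $P_0$, the coordinator, and all remaining parties (whose inputs are set to the empty set), so that the only Alice--Bob communication is the traffic through channel $C_j$. The paper's proof (Theorem~\ref{thm:multiparty disj}) does exactly this, but in a more direct form that sidesteps the machinery you propose: for \emph{each fixed} $j\in\{1,\ldots,r-1\}$ the reduction above is already a bona fide two-party protocol for DISJ with communication $c_j$, so Razborov's $\Omega(\sqrt{n})$ bound gives $c_j\ge\Omega(\sqrt{n})$ immediately, and summing yields $\sum_j c_j=\Omega(r\sqrt{n})$. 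No hard distribution, no random active index, and no symmetrization are needed.

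Two small points on your version. First, your step ``$\mathbb{E}_{j^{*}}[c_{j^{*}}]=\Omega(\sqrt{n})\Rightarrow c_j=\Omega(\sqrt{n})$ for each $j$ by symmetry'' is not justified for an arbitrary protocol (the $c_j$ need not be equal), and it is also unnecessary: since $\mathbb{E}_{j^{*}}[c_{j^{*}}]=\tfrac{1}{r-1}\sum_j c_j$, the expectation bound already gives $\sum_j c_j=\Omega(r\sqrt{n})$ directly. Second, ``use public randomness to fix the other $T_j$'s so that $T_0\cap T_j=\emptyset$'' only works in general if those $T_j$'s are empty (otherwise they would have to depend on $T_0$, which Bob does not see); the paper simply sets them to $\emptyset$, which also makes point (i) in your plan trivial. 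With these simplifications your argument collapses to the paper's two-line proof.
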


\begin{proof}
Since the two-player set-disjointness problem can be solved with $O(\sqrt{n})$ qubits of communication, $O(r \sqrt{n} )$ provides a natural upper bound. Regarding the lower bound, we prove it in Section \ref{section:Multiparty communication complexity of disjointness problem}, see Theorem \ref{thm:multiparty disj}.
\end{proof}

\begin{thm}
\label{thm stronger lower bound}
Assume that $\sum_{i=0}^{r-1} d_i \geq n$, and $r = O(\sqrt{n})$.
In the coordinator model,  $\Omega(r \kappa)$ qubits of communication are required to prepare the state $\ket{A^+\b}$ and $\Omega(r n)$ bits communication are required to sample from $A^+\b$.
\end{thm}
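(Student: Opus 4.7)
The plan is to reduce the multi-player set-disjointness problem directly to the linear regression problem~(\ref{general linear regression}), and then invoke Lemma~\ref{lem:quantum bound for s-OR} for the quantum bound and Proposition~\ref{prop3} for the classical bound. Given a disjointness instance with party $P_j$ holding $T_j\subseteq[n]$, each party will transform its set \emph{locally} into a piece of a regression instance on a matrix with $n$ columns, so that one measurement of $\ket{A^{+}\b}$ (plus an $O(\log n)$-bit query to $P_0$) decides whether $T_0\cap T_j\neq\emptyset$ for some $j\geq 1$.

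The construction: $P_0$ takes $d_0=n$, sets $A_0=D_0$ where $D_0$ is the $n\times n$ diagonal matrix with $(D_0)_{ii}=1$ for $i\in T_0$ and $(D_0)_{ii}=\sqrt{nr}$ for $i\notin T_0$, and sets $\b_0=\0$. Each party $P_j$ with $j\geq 1$ takes $d_j=n$, $A_j=I_n$, and $\b_j=\mathbf{1}_{T_j}$. A direct computation gives $A^TA=D_0^2+(r-1)I_n$, whose diagonal entries are $r$ on $T_0$ and $nr+r-1$ on $[n]\setminus T_0$, so $\kappa=\Theta(\sqrt{n})$ and the hypothesis $\sum_i d_i=rn\geq n$ holds. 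Writing $f_i=|\{j\geq 1:i\in T_j\}|$, we have $A^T\b=(f_i)_i$, whence $(\x_{\rm opt})_i=f_i/((D_0)_{ii}^2+r-1)$, equal to $f_i/r$ on $T_0$ and $\Theta(f_i/(nr))$ off $T_0$.

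I will then analyse the measurement distribution of $\ket{\x_{\rm opt}}$. In the disjoint case the mass sits entirely on $[n]\setminus T_0$ and $\|\x_{\rm opt}\|^2\leq \sum_i f_i^2/(nr)^2$, which is bounded by $O(k(r-1)^2/(nr)^2)$ for $|T_j|\leq k$ and is therefore dominated by the intersection term below. In the non-disjoint case some $i^*\in T_0\cap T_{j^*}$ contributes at least $1/r^2$ to $\|\x_{\rm opt}\|^2$; the hypothesis $r=O(\sqrt{n})$ is exactly what is needed for this term to dominate, ensuring that sampling from $\ket{\x_{\rm opt}}$ returns an index in $T_0$ with constant probability. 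The reduced disjointness protocol is then: run the linear regression protocol to prepare (or sample from) $\ket{A^{+}\b}$, obtain an index $i\in[n]$, and spend $O(\log n)$ extra bits asking $P_0$ whether $i\in T_0$; the answer is always "no" in the disjoint case, and is "yes" with constant probability in the non-disjoint case.

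Hence the communication cost of multi-player set-disjointness is at most the cost of preparing/sampling $\ket{A^{+}\b}$ plus $O(\log n)$, and Lemma~\ref{lem:quantum bound for s-OR} and Proposition~\ref{prop3} yield the claimed lower bounds $\Omega(r\sqrt{n})=\Omega(r\kappa)$ qubits and $\Omega(rn)$ bits (the additive $\log n$ is absorbed). The main obstacle is calibrating the scaling in $D_0$ so that simultaneously (i) the condition number equals the target $\sqrt{n}$ and (ii) the non-disjoint measurement probability at an index of $T_0$ is bounded away from zero; the choice $\sqrt{nr}$ for the off-$T_0$ entries is what makes both hold under $r=O(\sqrt{n})$. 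A minor secondary issue is that if the disjointness lower bound requires success probability $1-1/r^3$ rather than $2/3$, one amplifies by $O(\log r)$ repetitions, costing only a polylogarithmic factor that is absorbed into the $\widetilde{\Omega}$ notation.
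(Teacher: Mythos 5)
Your reduction is essentially the same as the paper's: $P_0$ encodes $T_0$ in a diagonal matrix whose two eigenvalue scales enforce $\kappa=\Theta(\sqrt{n})$, each $P_j$ contributes a scaled identity block and a (near-)indicator vector for $T_j$, the solution $(A^TA)^{-1}A^T\b$ concentrates its mass on $T_0\cap\bigcup_j T_j$, and the lower bounds of Lemma~\ref{lem:quantum bound for s-OR} and Proposition~\ref{prop3} are then imported. Your scaling (off-$T_0$ diagonal entries $\sqrt{nr}$, identity blocks $I_n$, $\b_j=\mathbf{1}_{T_j}$) differs from the paper's (off-$T_0$ entries $\sqrt{n}$, blocks $I_n/\sqrt{r}$, $\b_j$ with tiny nonzero tail $\eta$) only by a global rescaling plus your choice to zero out the tail; both give $\kappa=\Theta(\sqrt{n})$ and a constant-probability witness index in $T_0$ in the intersecting case, so the arguments are interchangeable. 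Two very minor points worth flagging: your $\b_j=\mathbf{1}_{T_j}$ makes $\b=0$ if all $T_j$, $j\geq1$, are empty (the paper's nonzero $\eta$ sidesteps this degenerate case, though one can simply exclude it WLOG); and your remark about amplifying to error $1/r^3$ would lose a $\log r$ factor — but Proposition~\ref{prop3}, as stated and used in the paper, already holds at constant error, so no amplification is needed and the clean $\Omega(rn)$ bound follows directly.
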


\begin{proof}
The proof is based on the hardness of the multi-player set-disjointness problem discussed above. 
Let $\varepsilon = 1/\sqrt{n},  \xi = 1/\sqrt{r}$ and $\eta = 1/\sqrt{n}r$.
We consider the following reduction. The party $P_0$ constructs a diagonal matrix $D$ of dimension $n$ by setting the $i$-th diagonal entry as
\[
D_i = \begin{cases}
1 & i \in T_0, \\
1/\varepsilon & i \notin T_0.
\end{cases}
\]
For any $j\geq 1$, the party $P_j$ constructs a vector $\b_j$ by setting
the $i$-th entry as
\[
\b_j(i) = \begin{cases}
1 & i \in T_j, \\
\eta & i \notin T_j.
\end{cases}
\]
Using a similar idea to the proof of Theorem \ref{thm:Quantum lower bound}, we want to construct a linear regression problem such that the optimal solution is close to $D^{-1}(\b_1+\cdots+\b_{r-1})$. For this, we consider the following linear regression problem
\bes
\argmin_{\x} \|A\x-\b\|, \quad \text{ where }
A = \begin{pmatrix}
D \\
\xi I_n \\
\vdots \\
\xi I_n \\
\end{pmatrix}, \, 
\b = \begin{pmatrix}
\0 \\
\b_1 \\
\vdots \\
\b_{r-1} \\
\end{pmatrix}.
\ees
Up to normalization, the optimal solution is
\be
\label{solution}
\ket{\x_{\rm opt}} = (D^2 + (r-1)\xi^2 I_n)^{-1} \ket{\b_1+\cdots+\b_{r-1}}. 
\ee

It is easy to see that the $i$-th diagonal entry of $(D^2 + (r-1)\xi^2 I_n)^{-1}$ equals
\be
\label{entries}
\begin{cases}
\ds \frac{1}{1 + (r-1)\xi^2} & i \in T_0, \\
\ds \frac{1}{\varepsilon^{-2} + (r-1)\xi^2} & i \notin T_0.
\end{cases}
\ee
We use $c_i$ to denote the $i$-th entry of $\b_1+\cdots+\b_{r-1}$. Then it is easy to check that if $i\in T_1\cup\cdots\cup T_{r-1}$, we have $1\leq c_i\leq r-1$. Otherwise, $c_i=(r-1)\eta$. 

The quantum state of the optimal solution is 
\[
\ket{\x_{\rm opt}} =  \sum_{i\in T_0} \frac{c_i}{1 + (r-1)\xi^2} \, \ket{i} + \sum_{j\notin T_0} \frac{c_j}{\varepsilon^{-2} + (r-1)\xi^2} \, \ket{j}.
\]
We can reformulate it more precisely as follows
\beas
\ket{\x_{\rm opt}} &=&  \sum_{i\in T_0\cap (T_1\cup\cdots\cup T_{r-1})} \frac{c_i}{1 + (r-1)\xi^2}  \,  \ket{i} 
+\sum_{j\in T_0\backslash (T_1\cup\cdots\cup T_{r-1})} \frac{(r-1)\eta}{1 + (r-1)\xi^2}  \, \ket{j}  \\
&& +\, \sum_{k\in (T_1\cup\cdots\cup T_{r-1}) \backslash T_0} \frac{c_k}{\varepsilon^{-2} + (r-1)\xi^2}  \, \ket{k} 
+ \sum_{l\notin T_0\cup T_1\cup\cdots\cup T_{r-1} } \frac{(r-1)\eta}{\varepsilon^{-2} + (r-1)\xi^2}  \, \ket{l} ,
\eeas
where $1\leq c_i,c_k \leq r-1$.
The total probability weights before normalization of the last three summations are respectively bounded by
\[
n(r-1)^2\eta^2=O(1), \quad 
n(r-1)^2\varepsilon^4=O(1),\footnote{Here we used the fact that $r=O(\sqrt{n})$.} \quad 
n(r-1)^2\varepsilon^4\eta^2=O(\varepsilon^4). 
\]
The amplitude of the first summation is at least $\Omega(1)$ if $T_0\cap (T_1\cup\cdots\cup T_{r-1}) \neq \emptyset$. In this case, if measuring the state $\ket{\x_{\rm opt}}$ in the computational basis, we will see an index from the intersection with a probability of at least $1/3$.
We can assume that the size of the intersection has order 1 because the disjointness problem remains hard with this promise.
So if the intersection is nonempty, then we will see the same index many times. Otherwise, we will see many different indices uniformly.
This reduction shows that the lower bound for any classical protocol of solving linear regression (\ref{general linear regression}) is $\Omega(r n)$.

From (\ref{entries}), it is easy to see that the condition number $A$ is $\kappa = \Theta(1/\varepsilon) = \Theta(\sqrt{n})$. By Lemma \ref{lem:quantum bound for s-OR}, we obtain the claimed lower bound for quantum protocols.
\end{proof}


In \cite{chia2020sampling,jethwani2020quantum}, it was shown that QSVT can be dequantized, which implies that many quantum algorithms based on QSVT do not have exponential speedups in terms of time and query complexity. When studying communication complexity, we can still use QSVT due to Proposition \ref{prop:QSVT in CC}; however, the quantum speedups can be exponential in terms of communication complexity. This suggests that it is quite hard to use the techniques for dequantized algorithms to propose efficient classical protocols with low communication complexity.

\section{Hamiltonian simulation}

As a byproduct, in this section, we consider the problem of Hamiltonian simulation in the coordinator model. We define the problem as follows: Suppose $P_i$ holds a Hamiltonian $H_i$ of dimension $n$, the referee holds a quantum state $\ket{\psi}$, and their goal is to prepare the state $e^{i(H_0+\cdots+H_{r-1})t} \ket{\psi}$ quantumly or sample from it classically. By Proposition \ref{prop:QSVT in CC}, we can use QSVT to achieve the goal. The lower bounds analysis are also corollaries of the lower bounds we obtained previously.

We start from the simple case: the Alice-Bob model. Suppose Alice has a Hamiltonian $H$ of dimension $n$, Bob has a quantum $\ket{\psi}$, and their goal is to prepare the state $e^{iHt} \ket{\psi}$ quantumly or sample from it classically. As a corollary of Theorems \ref{thm:quantum protocol} and \ref{thm:Quantum lower bound}, we have the following result.

\begin{prop}
\label{prop:quantum-HS}
Suppose Alice has a Hamiltonian matrix $H \in \mathbb{C}^{n\times n}$ and Bob has a quantum state $\ket{\psi} \in \mathbb{C}^{n}$. Then the quantum communication complexity of outputting $e^{iHt} \ket{\psi}$ is
\begin{enumerate}
\item $\Theta(\log n)$ if the communication is 1-way from Bob to Alice or 2-way.
\item $\Theta(n \log n)$ if the communication is 1-way from Alice to Bob.
\end{enumerate}
\end{prop}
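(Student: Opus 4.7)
The plan is to read off both the upper and lower bounds as direct corollaries of Theorem~\ref{thm:quantum protocol} and Theorem~\ref{thm:Quantum lower bound}, with the key bridging observation that any unitary $U$ on $n$ dimensions can be written as $e^{iHt}$ for some Hermitian $H$ (take $H=-(i/t)\log U$ on any branch of the matrix logarithm). This lets us translate the adversarial unitary constructions used in the lower bounds of Theorem~\ref{thm:Quantum lower bound} into Hamiltonian-simulation instances with essentially no overhead.

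For the upper bounds, I would invoke Theorem~\ref{thm:quantum protocol} with $A=e^{-iHt}$ and $\b=\ket{\psi}$, so that $A^+=e^{iHt}$ and $A^+\b=e^{iHt}\ket{\psi}$. Because $A$ is unitary on $n$ dimensions we have $\kappa=1$, $\|A^+\|=1$, $\|A^+\|_F=\sqrt{n}$, $\|\b\|=1$ and $\|A^+\b\|=1$, so the three bounds in Theorem~\ref{thm:quantum protocol} collapse to $O(\log n)$, $O(n\log n)$, and $O(\log n)$ respectively. The protocols are also transparent in this language: when Bob can talk to Alice, he sends the $\lceil \log n\rceil$-qubit state $\ket{\psi}$ and Alice locally applies $e^{iHt}$, with no measurement overhead since the success probability $\|A^+\b\|^2/\|A^+\|^2$ equals $1$; in the Alice-to-Bob direction the generic protocol of Case~2 in Theorem~\ref{thm:quantum protocol} already gives the matching $O(n\log n)$ upper bound.

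For the lower bounds I would reuse the two reductions inside the proof of Theorem~\ref{thm:Quantum lower bound}. The $\Omega(n\log n)$ bound in the Alice-to-Bob direction follows from the Permutation-Index reduction of claim~2: given a permutation $P$, Alice chooses any Hermitian $H$ with $e^{iHt}=P$, Bob prepares $\ket{j}$, and the joint output $e^{iHt}\ket{j}=\ket{P_j}$ still lets Bob recover $P_j$. The $\Omega(\log n)$ bound in the other two directions comes from the index-problem construction at the end of that same proof, in which Alice's unitary is a block-diagonal matrix built from $I_2$ and Pauli-$X$ blocks; once again she simply replaces this unitary by an exponent of it. The one item worth verifying is that nothing in the reductions uses the matrix $A$ for anything other than the unitary it represents, so Alice's choice of branch of the logarithm is immaterial and local computation is free; with this, the main obstacle (reconciling the Hamiltonian specification with reductions phrased in terms of an explicit unitary) dissolves and no new lower-bound machinery is needed.
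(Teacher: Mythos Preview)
Your proposal is correct and follows essentially the same approach as the paper, which simply applies Theorem~\ref{thm:quantum protocol} with $A=e^{iHt}$ (unitary) and implicitly invokes the lower bounds of Theorem~\ref{thm:Quantum lower bound}. You are in fact more careful than the paper: you make explicit the ``any unitary is $e^{iHt}$'' bridge needed to port the Permutation-Index and index-problem reductions into Hamiltonian-simulation instances, and your sign choice $A=e^{-iHt}$ so that $A^+\b=e^{iHt}\ket{\psi}$ is cleaner than the paper's.
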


\begin{proof}
We apply Theorem \ref{thm:quantum protocol} to $A=e^{iHt}$ and $\b = \ket{\psi}$. Now $A$ is unitary.
\end{proof}

In the classical setting, the goal is to sample from the state $e^{iHt} \ket{\psi}$. Regarding the lower bound for classical protocols, we have the following result by Theorem \ref{thm:Classical lower bound}. The result is quite obvious because we can always write a unitary as $e^{iHt}$ for some $H$.

\begin{prop}
\label{prop:classical-HS}
Assume that Alice has a Hamiltonian matrix $H \in \mathbb{C}^{n\times n}$ and Bob has a vector $\ket{\psi} \in \mathbb{C}^{n}$. Then $\Omega(n)$ bits communication are required to sample from $e^{iHt} \ket{\psi}$ if the communication is 2-way.
\end{prop}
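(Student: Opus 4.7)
The plan is to reduce directly from the 2-way classical lower bound in Theorem \ref{thm:Classical lower bound}, specifically from the Distributed Fourier Sampling reduction used in its proof. The key observation, as hinted, is that the matrix exponential $H \mapsto e^{iH}$ is surjective from Hermitian matrices onto unitaries: fixing $t = 1$, any unitary $U$ admits a Hermitian logarithm $H = -i \log U$ (principal branch), so sampling from $e^{iH}\ket{\psi}$ captures, as a special case, sampling from $U\ket{\psi}$ for arbitrary unitary $U$.

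First I would recall the reduction inside the proof of Theorem \ref{thm:Classical lower bound}: Alice's matrix $A = H_{\mathrm{Had}}^{\otimes d} D_f$ is unitary, so $A^+\b = A\b$, and sampling from $A\ket{g}$ is exactly the Distributed Fourier Sampling task for $(f,g)$, which requires $\Omega(2^d) = \Omega(n)$ bits of 2-way classical communication by Proposition \ref{prop2}. To transfer this to Hamiltonian simulation, I would have Alice, on input $f$, compute the Hermitian matrix $H' := -i \log\bigl(H_{\mathrm{Had}}^{\otimes d} D_f\bigr)$, and Bob takes $\ket{\psi} := \ket{g}$. Any protocol that samples (up to total variation $\varepsilon$) from $e^{iH' \cdot 1}\ket{\psi}$ then samples from $H_{\mathrm{Had}}^{\otimes d} D_f \ket{g}$, solving Distributed Fourier Sampling within the same inaccuracy. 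Since computation is free in the communication model, Alice's matrix logarithm incurs no cost.

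The only potential obstacle is that entries of $H'$ may be irrational and not exactly representable in $O(\log n)$ bits, whereas the paper's convention specifies inputs in $O(\log n)$ bits. This is not a genuine issue: the lower bound is for any 2-way protocol and does not require a specific input encoding on Alice's side, and in any case truncating $H'$ to $\mathrm{poly}(n)$ bits of precision perturbs $e^{iH'}$ in operator norm by at most $1/\mathrm{poly}(n)$, which preserves the $\varepsilon$-closeness of the sampled distributions for any fixed constant $\varepsilon$. Thus the $\Omega(n)$ lower bound transfers directly, giving the claimed bound for 2-way (and hence also 1-way) classical protocols.
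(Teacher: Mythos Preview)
Your reduction from Distributed Fourier Sampling is correct and matches the paper's overall strategy. The only difference is in how the Hermitian matrix is produced: you take the abstract matrix logarithm $H' = -i\log(H_{\mathrm{Had}}^{\otimes d} D_f)$, whereas the paper writes down an explicit closed form. Using the identity $H_2 = e^{i\frac{\pi}{2}(I_2-H_2)}$ for the single-qubit Hadamard, they set $L = \sum_{j=1}^d I_2^{\otimes(j-1)}\otimes(I_2-H_2)\otimes I_2^{\otimes(d-j)}$ so that $H_2^{\otimes d} = e^{i\frac{\pi}{2}L}$, and then conjugate by the diagonal sign matrix $D_f$ to get the Hermitian $D_f L D_f$ with $e^{i\frac{\pi}{2}D_f L D_f} = D_f H_2^{\otimes d} D_f$. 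Since $D_f$ only flips signs, sampling from $D_f H_2^{\otimes d} D_f\ket{g}$ and from $H_2^{\otimes d} D_f\ket{g}$ yield the same distribution.

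Your approach is slightly more general (it works for any unitary $A$), at the cost of the precision caveat you already addressed. The paper's explicit $D_f L D_f$ sidesteps that caveat entirely: its entries are simple combinations of $\pm 1$ and $\pm 1/\sqrt{2}$, so no truncation argument is needed. Either route gives the $\Omega(n)$ bound.
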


\begin{proof}
We still use the notation defined in the proof of Theorem \ref{thm:Classical lower bound}.
Note that the Hadamard matrix has the decomposition $H_2=e^{i\frac{\pi}{2}(I_2-H_2) }$. Let
\bes
L = \sum_{j=1}^d I_2^{\otimes (j-1)}\otimes (I_2-H_2)\otimes I_2^{\otimes (d-j)},
\ees
then $H_2^{\otimes d} = e^{i\frac{\pi}{2}L }$. In the proof of Theorem \ref{thm:Classical lower bound}, we can also consider the distribution $D_f H_2^{\otimes d} D_f \ket{g}$, which is equivalent to $H^{\otimes d} D_f \ket{g}$. Now we have $D_f H^{\otimes d} D_f = e^{i\frac{\pi}{2} D_f L D_f }$. So similar to the proof of Theorem \ref{thm:Classical lower bound}, 
Alice constructs the Hamiltonian $D_f L D_f$ and Bob constructs the quantum state $\ket{g}$. If they can sample from the resulting state, then they can solve the Distributed Fourier Sampling problem.
Hence, the lower bound of classical protocols is $\Omega(n)$.
\end{proof}


Finally, as an application of Proposition \ref{prop:QSVT in CC}, we consider the communication complexity of Hamiltonian simulation in the coordinator model when there are multiple parties.

\begin{prop}
\label{prop:SMP HS}
For any $i\in\{0,\ldots,r-1\}$, suppose the party $P_i$ receives a Hamiltonian $H_i$ of dimension $n$. Suppose the referee receives a quantum state $\ket{\psi}$. Then in the quantum coordinator model, there is a quantum protocol that costs 
\be
O\left((r\log n) \left(\sum_{i=0}^{r-1} \|H_i\| \, |t| + \frac{\log(1/\varepsilon)}{\log(e+(\sum_{i=0}^{r-1}  \|H_i\| \, |t|)^{-1}\log(1/\varepsilon))}  \right) \right)
\ee
qubits of communication to prepare the state $e^{i(H_0+\cdots+H_{r-1})t} \ket{\psi}$ up to error $\varepsilon$.
\end{prop}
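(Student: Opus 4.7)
The plan is to apply the second ("moreover") part of Proposition \ref{prop:QSVT in CC} to the Hamiltonian $H := H_0 + \cdots + H_{r-1}$, taking $\alpha := \sum_{i=0}^{r-1}\|H_i\|$ so that the spectrum of $H/\alpha$ lies in $[-1,1]$. Each party $P_i$ sends $\|H_i\|$ to the referee (negligible cost under our $O(\log(mn))$-bit entry assumption), the referee builds the LCU unitary $V$ of (\ref{lcu:eq2}), and then one query to the block-encoding $U$ of $H/\alpha$ costs $O(r\log n)$ qubits of communication by the construction in Lemma \ref{lem:block-encoding construction2}.

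Next, approximate $e^{iHt} = e^{i\tau \,(H/\alpha)}$, where $\tau := \alpha t = t\sum_i\|H_i\|$, by a polynomial in the variable $x = H/\alpha$. The Jacobi--Anger expansion yields a polynomial of degree
\[
d = O\!\left(|\tau| + \frac{\log(1/\varepsilon)}{\log\!\bigl(e + |\tau|^{-1}\log(1/\varepsilon)\bigr)}\right)
\]
that approximates $e^{i\tau x}$ uniformly on $[-1,1]$ to error $\varepsilon$; this is the optimal degree for Hamiltonian simulation and is recorded, e.g., in \cite{gilyen2019quantum}. To comply with the $|f(x)|\le 1/2$ requirement of Proposition \ref{prop:QSVT in CC}, decompose $\tfrac12 e^{i\tau x}$ into real and imaginary parts and either feed each part through QSVT separately and combine by a $1$-qubit LCU, or invoke the complex QSVT variant of \cite{gilyen2019quantum}; either route multiplies the cost by a universal constant.

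By Proposition \ref{prop:QSVT in CC} (moreover clause), the referee can construct a $(1,3+\log r,0)$ block-encoding $\widetilde U$ of $\tfrac12 e^{iHt}$ (to error $\varepsilon$) using $O(rd\log n)$ qubits of communication. The referee prepares $\ket{0}\ket{\psi}$ locally (no communication), applies $\widetilde U$, and obtains a state whose amplitude on the $\ket{0}$-flag tensor $e^{iHt}\ket{\psi}$ equals $1/2$. Since communication is $2$-way in the coordinator model and the reflection around $\ket{0}\ket{\psi}$ is a purely local operation on the referee's registers, the referee can run standard amplitude amplification through $O(1)$ additional invocations of $\widetilde U$ and $\widetilde U^\dagger$ to boost the success probability to $\Omega(1)$, preserving the communication bound up to a constant factor. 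Substituting $|\tau| = |t|\sum_i\|H_i\|$ into $O(rd\log n)$ yields exactly the bound stated in the proposition.

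The only nontrivial step is the polynomial-degree bound; everything else is an immediate packaging of Proposition \ref{prop:QSVT in CC}, Lemma \ref{lem:block-encoding construction2}, and fixed-point amplitude amplification. No step requires honest computation—the quoted Jacobi--Anger degree and the cost formula $O(rd\log n)$ can be cited directly.
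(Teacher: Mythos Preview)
Your proposal is correct and follows essentially the same route as the paper: invoke the ``moreover'' clause of Proposition~\ref{prop:QSVT in CC} with $\alpha=\sum_i\|H_i\|$, cite the Jacobi--Anger degree bound from \cite[Lemma 59]{gilyen2019quantum}, and substitute $\tau=\alpha t$ into $O(rd\log n)$. The paper's proof is terser---it does not spell out the $|f|\le 1/2$ rescaling or the final constant-probability amplification step---but these are exactly the implementation details you fill in, and your handling of them is sound.
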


\begin{proof}
By \cite[Lemma 59]{gilyen2019quantum}, {there is a polynomial that approximates $e^{it}$ up to error $\varepsilon$ with degree}
\[
d = O\left(|t| + \frac{\log(1/\varepsilon)}{\log(e+|t|^{-1}\log(1/\varepsilon))}  \right).
\]
By Proposition \ref{prop:QSVT in CC}, the referee can construct an $(1,3+\log r,0)$ block-encoding of $e^{it\sum_i H_i/\alpha}$ with $O(rd\log n)$ qubits of communication, where $\alpha = \sum_{i=0}^{r-1}  \|H_i\|$. We replace $t$ with $\alpha t$. Putting it all together, we obtain the claimed result.
\end{proof}

\section{Multiparty quantum communication complexity of disjointness}
\label{section:Multiparty communication complexity of disjointness problem}

In this section, we {complete the proof of our lower bounds in the coordinator model via proving bounds on} the quantum communication complexity of the disjointness problem in the multiparty case. We will consider a quantum model that is analogous to the classical coordinator model. Recall that in the coordinator model, there are $s$ parties $P_1,\ldots,P_r$, and there is a coordinator (here we call it the referee) $R$. The communication is 2-way between $P_i$ and $R$. If $P_i$ wants to send a message to $P_j$, then $P_i$ has to send the message to $R$ first, then $R$ will send the message to $P_j$. In the quantum case, we define a similar model. Different from the previous quantum multiparty model \cite{lee2009lower} which considers the blackboard model (i.e., if $P_i$ sends a message, then everyone else can see it), here we focus on the coordinator model (i.e., if $P_i$ sends a message, then only the referee can see the message). This model is almost equivalent to the message-passing {(``number in hand'')} model (i.e., no referee in this model, the party $P_i$ can send a message directly to another party $P_j$ and only $P_j$ can see the message) up to a factor of 2. 

In the model, we define the input as
\be
\ket{{\rm In}} = \ket{\phi(x_1)}_{P_1} \cdots \ket{\phi(x_r)}_{P_r} 
\ket{\vec{0}}_{C_1} \cdots \ket{\vec{0}}_{C_r} \ket{\phi(y)}_R , 
\ee
where $x_i$ is the initial information in $P_i$'s hand, $y$ is the initial information in the referee's hand. The states $\ket{\phi(x_i)}_{P_i}$, $\ket{\phi(y)}_{R}$ depend on the initial information. The register $\ket{\vec{0}}_{C_i}$ is the $i$-th channel. A quantum protocol is a quantum algorithm that applies a series of unitaries of forms
\be
U_{P_1,C_1}, \quad \cdots, \quad 
U_{P_r,C_r}, \quad U_{C_i,R}
\ee
to $\ket{{\rm In}}$. The unitary $U_{P_i,C}$ operates on the space of $P_i$ and the channel $C_i$. The unitary $U_{C_i,R}$ operates on the $i$-th channel and the space of the referee.
At the beginning of a quantum protocol, $P_1$ applies a unitary of the form $U_{P_1,C_1}$ to his space and the channel $C_1$. This corresponds to his private computation as well as to putting a message on the channel $C_1$. The length of this first message is the number of channel qubits affected by $P_1$'s operation. In the second round, the referee speaks and applies a unitary of the form $U_{C_1,R}$ to his space and the first channel. Then $P_2$ applies $U_{P_2,C_2}$, etc. If the referee speaks in the end, then a quantum protocol of $R:=2rt$ rounds defines an output state of the form
\be
\ket{{\rm Out}} = 
\prod_{i=1}^t
 \, 
U_{C_r,R}^{(i,2r)} \, U_{P_r,C_r}^{(i,2r-1)}
 \, \cdots  \, 
U_{C_1,R}^{(i,2)}  \,  U_{P_1,C_1}^{(i,1)} 
 \,  \ket{{\rm In}}.
\ee
\update{Here, for simplicity we assume that the number of rounds is a multiplier of $2r$.}
We assume that at the end of the protocol, the referee's register contains the answer. A measurement of this register then determines the output of the protocol. The quantum communication complexity is the number of qubits used in the whole procedure, which is $t(r+1)T$. Here $T$ is the total number of qubits in the channels.


We below consider the multiparty disjointness problem in this quantum coordinator model. The disjointness problem we are mainly interested in is defined as follows: $P_i$ has a subset $x_i$ of $[n]$, and the players aim to determine if there is an $i\geq 2$ such that $x_1 \cap x_i \neq \emptyset$. Equivalently, define the Boolean function that describes the 2-party disjointness problem as
\be \label{char fun of DISJ}
f(x,y) = \begin{cases}
    1 & |x\wedge y| \geq 1, \\
    0 & |x\wedge y| = 0.
\end{cases}
\ee
Then the disjointness problem defined above aims to compute
\be
f^r_{{\rm OR}}(x_1,x_2,\ldots,x_r) = f(x_1,x_2) \vee \cdots \vee  f(x_1,x_r) .
\ee

We use $Q_\varepsilon(f^r_{{\rm OR}})$ to denote the quantum communication complexity of computing $f^r_{{\rm OR}}$ with error $\varepsilon$. Namely, there is a quantum protocol without prior entanglement that computes $f^r_{{\rm OR}}$ of cost $Q_\varepsilon(f^r_{{\rm OR}})$ such that the acceptance probability on every $(x_1,x_2,\ldots,x_r)$ is at most $\varepsilon$ whenever $f^r_{{\rm OR}}(x_1,x_2,\ldots,x_r) = 0$ and at least $1-\varepsilon$ whenever $f^r_{{\rm OR}}(x_1,x_2,\ldots,x_r) = 1$. We use $Q^*_\varepsilon(f^r_{{\rm OR}})$ to denote the quantum communication complexity with prior entanglement.
The main result we aim to prove is as follows.

\begin{thm}
\label{thm:multiparty disj}
$Q^*_{\varepsilon}(f^r_{{\rm OR}}) = \Theta(r\sqrt{n})$.
\end{thm}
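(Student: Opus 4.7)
The plan is to prove matching bounds (up to logarithmic factors on the upper side) by reducing between $f^r_{\rm OR}$ and the 2-party disjointness problem, for which $Q^*_{\varepsilon}({\rm DISJ}_n) = \Theta(\sqrt{n})$ by the Aaronson--Ambainis upper bound and Razborov's lower bound (Proposition~\ref{prop1}; the latter extends routinely to the prior-entanglement model, as superdense coding only changes the bound by a constant factor). For the upper bound, I would have $P_1$ sequentially execute the Aaronson--Ambainis 2-party DISJ protocol against each $P_i$ for $i \in \{2,\ldots,r\}$, with every message relayed through the referee at constant overhead. Amplifying each instance to error $\varepsilon/r$ and taking a union bound, the referee can compute the OR of the $r-1$ outcomes correctly with probability $1-\varepsilon$, giving a total cost of $\widetilde{O}(r\sqrt{n})$.

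For the lower bound, let $\Pi$ be any protocol in the quantum coordinator model with prior entanglement that computes $f^r_{\rm OR}$ with error $\varepsilon$ in worst-case communication $C$, and write $c_i(\vec{x})$ for the number of qubits exchanged on the $P_i$--referee channel when $\Pi$ is run on input $\vec{x}=(x_1,\ldots,x_r)$; then $\sum_{i=2}^{r} c_i(\vec{x}) \leq C$ for every $\vec{x}$. I would reduce 2-party DISJ on $[n]$ to $f^r_{\rm OR}$ as follows: on input $(x,y)$, Alice and Bob draw $i^* \in \{2,\ldots,r\}$ from shared randomness, then Alice locally simulates $P_1$ with $x_1=x$, every $P_j$ with $j \notin \{1,i^*\}$ with $x_j=\emptyset$, and the referee, while Bob simulates $P_{i^*}$ with $x_{i^*}=y$. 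Since $x_1 \cap x_j = \emptyset$ for all $j \neq i^*$, the output of $\Pi$ equals $f(x,y)$, and the only qubits actually transmitted across the Alice--Bob cut are those on the $P_{i^*}$--referee channel of $\Pi$, whose expected count over $i^*$ is at most $C/(r-1)$. The shared entanglement used by $\Pi$ is available as a public resource to Alice and Bob.

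I would then apply Markov's inequality and abort (outputting $0$) whenever the simulation exceeds $2C/(r-1)$ qubits, producing a one-sided-error protocol for DISJ of worst-case cost $O(C/r)$: on disjoint inputs the false-positive rate is at most $\varepsilon/2$, while on intersecting inputs the false-negative rate is at most $1/2 + \varepsilon/2$. A constant number of independent repetitions combined by OR restores standard bounded error, and the $\Omega(\sqrt{n})$ lower bound for $Q^*_\varepsilon({\rm DISJ}_n)$ then forces $C/(r-1) = \Omega(\sqrt{n})$, i.e.\ $C = \Omega(r\sqrt{n})$. The main obstacle will be handling the interaction between this expected-to-worst-case conversion and the quantum coordinator model, in particular verifying that a mid-protocol truncation is well-defined when the parties share prior entanglement and that the locally simulated parties do not introduce hidden communication across the Alice--Bob cut; a secondary point is to confirm that Razborov's DISJ lower bound applies to the shared-entanglement, public-coin setting with the one-sided error profile that emerges from the abort step.
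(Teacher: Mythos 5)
Your proposal is correct, and it takes a more careful route to the lower bound than the paper does. The paper's argument is a terse per-channel summation: for each $i$, fix $x_j=\emptyset$ for $j\notin\{1,i\}$, observe that the induced two-party protocol (Bob $=P_i$, Alice $=$ everyone else plus referee) solves DISJ and therefore forces $\Omega(\sqrt{n})$ communication on the $P_i$--referee channel, and then sum over $i$. That argument is sound only because the paper's coordinator model has an \emph{oblivious} schedule --- the channel sizes and round counts are fixed unitaries independent of the input --- so the $\Omega(\sqrt{n})$ channel load witnessed on the all-but-$(1,i)$-empty input slice is the same number that contributes to the global cost. You correctly identified that in a more general (input-adaptive) setting this per-channel summation is not valid, since the hard inputs for different $i$ differ, and you patched it with the standard Phillips--Verbin--Zhang symmetrization: random $i^\ast$, Markov to bound expected channel cost by $C/(r-1)$, and abort with output $0$ to turn the expected bound into a worst-case one-sided-error DISJ protocol, followed by constant-round OR-amplification. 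Both approaches bottom out at the same two ingredients --- the zero-restriction reduction to two-party DISJ and the $\Omega(\sqrt{n})$ DISJ lower bound --- so the extra machinery buys you robustness to non-oblivious protocols at the cost of a slightly more delicate error analysis (and your exact false-positive/false-negative constants should be $\le\varepsilon$ and $\le 1/2+\varepsilon$ rather than halved, though this does not affect the asymptotics).

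Two small points worth fixing. First, the justification ``superdense coding only changes the bound by a constant factor'' for extending Razborov's bound to $Q^\ast$ is not right; superdense coding relates entanglement-assisted \emph{classical} communication to qubit communication, not $Q$ to $Q^\ast$, and it is a major open problem whether $Q^\ast=\Omega(Q)$ in general. The correct reason the entanglement-assisted $\Omega(\sqrt{n})$ DISJ bound holds is that Razborov's spectral argument (equivalently, the generalized discrepancy / pattern-matrix method) directly applies to the prior-entanglement model; this should be cited rather than hand-waved. Second, your upper bound is $\widetilde{O}(r\sqrt{n})$ because of the per-instance amplification to error $\varepsilon/r$; the paper states $O(r\sqrt{n})$ without the $\log r$ factor, calling the upper bound ``obvious,'' but the naive OR-of-DISJ protocol does incur the $\log r$ factor, so your bound is the honest one and the paper's claimed tightness is slightly overstated. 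Finally, note that in the paper's own model the schedule is oblivious, so the abort-on-overrun step you worried about is trivial (the channel cost is known in advance and does not depend on the input); your concern about mid-protocol truncation with prior entanglement is only an issue if one allows input-adaptive communication patterns.
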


\begin{proof}
The upper bound is obvious. We below focus on the proof of the lower bound. For each $i\in\{2,\ldots,r\}$, let Bob plays the role of $P_i$ and Alice plays the role of the remaining parties as well as the referee. If there is a protocol that computes $f^r_{{\rm OR}}$, then the protocol allows us to determine if $x_1\cap x_i=\emptyset$ using at least $\Omega(\sqrt{n})$ qubits of communication by setting other subsets as the empty set. This means that in this protocol, $P_i$ needs to apply at least $\Omega(\sqrt{n})$ unitaries. Therefore, in total, the communication complexity is at least $\Omega(r\sqrt{n})$.
\end{proof}

\section{Connections between communication complexity and quantum-inspired classical algorithms}

\update{
In the quantum-inspired classical algorithms, we use a model that allows sampling and query (SQ) access to the input data. Using this model, it was proved that classically we could solve some problems, e.g., linear regressions, in cost polylog in the dimension in the low-rank case \cite{chia2020sampling}. We below discuss the connection between communication complexity and quantum-inspired classical algorithms. We will mainly focus on the Alice-Bob model.

First, we recall some definitions about quantum-inspired classical algorithms \cite[Definitions 2.5, and 2.10]{chia2020sampling}. For a vector $\b=(b_1,\ldots,b_n)\in\mathbb{C}^n$, we have $SQ(\b)$ if we can do the following three things: (i) for any $i$ we can query for $b_i$; (ii) we can sample from the distribution defined by $\text{Prob}(i)=|b_i|^2/\|\b\|^2$; (iii) we can query for the norm $\|\b\|$. For a matrix $A\in \mathbb{C}^{m\times n}$, we have $SQ(A)$ if (i) we have $SQ(A_{i*})$ for any $i$, where $A_{i*}$ is the $i$-th row of $A$; (ii) let $\a=(\|A_{1*}\|,\ldots,\|A_{m*}\|)$, then we  have $SQ(\a)$. 

For the linear regression problem $\argmin\|A\x-\b\|$, by a quantum-inspired classical algorithm of complexity $O(T)$ we mean we can compute $SQ(\x_*)$, where $\|\x_*-A^+\b\|\leq \varepsilon \|A^+\b\|$, by applying $SQ(A), SQ(\b)$ $O(T)$ times and $O(T)$ other arithmetic operations. For example, assuming $\b$ lies in the column spaces of $A$, then there is a quantum-inspired classical algorithm for linear regression with complexity $\widetilde{O}(\|A\|_F^4\|A\|^2\|A^+\|^6/\varepsilon^2)$ \cite{shao2021faster}. Without the assumption, the complexity is $\widetilde{O}(\|A\|_F^6\|A\|^6\|A^+\|^{12}/\varepsilon^4\gamma^2)$ \cite{gilyen2022improved}. 

In the Alice-Bob model, we assume the communication is 2-way. By communicating with each other once, Alice can use $SQ(\b)$ or Bob can use $SQ(A)$ once. Therefore, it is easy to obtain the following result.

\begin{prop}
If there is a quantum-inspired classical algorithm for $\argmin\|A\x-\b\|$ of complexity $O(T)$, then there is a classical protocol to solve this linear regression in the Alice-Bob model of communication complexity $O(T)$, where Alice holds $A$ and Bob holds $\b$, the communication is 2-way, and the goal is to sample from a distribution $\varepsilon$-close to the one defined by $\ket{A^+\b}$.
\end{prop}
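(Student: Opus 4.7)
The plan is to have Alice and Bob jointly simulate the given quantum-inspired classical algorithm, with each call to the $SQ$ oracle serviced by a single short round of communication. By hypothesis, the algorithm uses $SQ(A)$ and $SQ(\b)$ at most $O(T)$ times and performs $O(T)$ additional arithmetic operations, after which it produces $SQ(\x_*)$ for some $\x_*$ satisfying $\|\x_*-A^+\b\|\le \varepsilon\|A^+\b\|$.

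First, I would designate Bob, the party tasked with producing the sample, as the executor of the algorithm: he maintains all internal state and performs every arithmetic step locally. Whenever the algorithm issues an $SQ$ query to $A$ (reading some entry $A_{ij}$, sampling a row index from the distribution $(\|A_{i*}\|^2/\|A\|_F^2)_i$, sampling a column index from $\bigl(|A_{ij}|^2/\|A_{i*}\|^2\bigr)_j$ for a chosen row $i$, or querying one of the norms $\|A\|_F$, $\|A_{i*}\|$), Bob sends a constant-size request encoding the query type together with at most one $O(\log(mn))$-bit index to Alice, and Alice replies from her copy of $A$ with a response of $O(\log(mn))$ bits. Every $SQ$ query to $\b$ is resolved by Bob locally at zero communication cost since he holds $\b$. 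Arithmetic operations likewise require no communication. Thus the total traffic is $O(T)$ round-trips of $O(\log(mn))$ bits each, which is $O(T)$ under the paper's standing convention that the $O(\log(mn))$ factor for entry specification is absorbed into the leading $O(\cdot)$.

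At termination Bob holds $SQ(\x_*)$ and in particular can draw a sample from the distribution $i\mapsto |(\x_*)_i|^2/\|\x_*\|^2$. It remains to check that this distribution lies within total variation distance $O(\varepsilon)$ of the distribution defined by $\ket{A^+\b}$. This is a standard consequence of the fact that if two nonzero vectors $\v,\w$ satisfy $\|\v-\w\|\le \delta\|\w\|$, then their normalised versions differ in $\ell_2$-norm by $O(\delta)$, and the induced squared-amplitude distributions differ in total variation by $O(\delta)$; rescaling $\varepsilon$ by a constant factor absorbs this loss.

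I do not anticipate a serious obstacle, since the argument is essentially a black-box simulation. The only point requiring care is that every primitive in the $SQ$ model must be serviceable with $O(\log(mn))$ bits of communication, and this is immediate from the definitions of $SQ(A)$ and $SQ(\b)$ recalled just before the proposition — each query returns either an entry, an index, or a norm, all of which fit into $O(\log(mn))$ bits under the paper's precision convention.
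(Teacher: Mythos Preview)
Your proposal is correct and follows essentially the same approach as the paper, which dispatches the proposition in a single sentence observing that ``by communicating with each other once, Alice can use $SQ(\b)$ or Bob can use $SQ(A)$ once.'' Your write-up simply fleshes out this observation with the natural bookkeeping (who runs the simulation, why each oracle call fits in $O(\log(mn))$ bits, and the closeness of the sampled distribution), none of which the paper bothers to spell out.
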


Similar to the proof of Theorem \ref{thm:Classical lower bound}, using the hardness of the Distributed Fourier Sampling problem (see Proposition \ref{prop2}), it is easy to conclude that in the low-rank case, the classical communication complexity is lower bounded by the Rank($A$), while the quantum communication is $O(1)$ for well-conditioned linear regressions.

Next, let us see two examples that suggest that low rank is not the only assumption for the efficiency of quantum-inspired classical algorithms. We consider the disjointness problem. Recall that in this problem, Alice and Bob respectively have $\a=(a_1,\ldots,a_n), \b=(b_1,\ldots,b_n)\in \{0,1\}^n$, they want to determine if there is an $i$ such that $a_i=b_i=1$. Without loss of generality, we assume that the hamming weights $|\a|,|\b|=\Theta(n)$. Consider the following construction
\[
A=\ket{0}\bra{0} + \frac{1}{n} \ket{\a}\bra{\a},
\quad
\b =  \ket{0} + \ket{\b},
\]
where $\ket{\a}, \ket{\b}$ are the quantum states of $\a,\b$ respectively. Now $A$ has rank 2, and the solution is
\[
A^+\b =  \ket{0} + n \braket{\a|\b} \ket{\a}.
\]
If there is no $i$ such that $a_i=b_i=1$, then $\braket{\a|\b}=0$, so $A^+\b = \ket{0}$. If there is an $i$ such that $a_i=b_i=1$, then we have $A^+\b \approx \ket{0} + \ket{\a}$. Here we assumed that there is only one such $i$, which is the worst case. Thus, if we can sample from the solution, then in the latter case, we will see some indices from $\{1,\ldots,n\}$ with probability 1/2. As a result, we can solve the disjointness problem. This means $\Omega(n)$ bits of communication are required to solve this linear regression. It also means that to solve this linear regression, any quantum-inspired classical algorithm costs $\Omega(n)$. In this example, $A$ has a low rank, while the complexity is linear in $n$. This is indeed not a contradiction. In this example, we have $\|A\|_F = \Theta(1), \|A^+\|= n$, and usually quantum-inspired classical algorithms are highly affected by $\|A\|_F\|A^+\|$, which is $\Theta(n)$ now. 

Let us below consider another example, we set
\[
A=\ket{0}\bra{0} + \ket{\a}\bra{\a},
\quad
\b = \frac{1}{n}  \ket{0} + \ket{\b}.
\]
Now the solution is
\[
A^+\b = \frac{1}{n} \ket{0} +  \braket{\a|\b} \ket{\a}.
\]
If there is no $i$ such that $x_i=y_i=1$, then $A^+\b = \frac{1}{n} \ket{0}$. Otherwise, we have $A^+\b \approx \frac{1}{n}(\ket{0} + \ket{\a})$. Similarly, by measuring, we can also solve the disjointness problem. In this example, $\|A\|_F\|A^+\|=\sqrt{2}$. However, now $\b$ is far away from the column space of $A$, i.e., $\gamma:=\|AA^+\b\|/\|\b\| \approx 1/n $. So similar to quantum algorithms \cite{chakraborty2019power}, quantum-inspired classical algorithms for linear regressions are also affected by $\gamma$.

Usually, it is not easy to analyze the lower bounds for classical computation, and communication complexity provides us with an efficient tool to prove some nontrivial lower bounds. So it is possible that we can find some other interesting properties of quantum-inspired classical algorithms through communication complexity.

}

\section{Conclusions}

In this work, we showed that quantum computers have provable polynomial or exponential speedups for solving linear regression problems and Hamiltonian simulation in terms of communication complexity. We also found that in the quantum coordinator model, we can still efficiently use the quantum singular value transformation technique. Because of this, we believe that for many other linear algebra problems, \update{it is possible to obtain provable quantum speedups using this technique in terms of communication complexity.}

\subsection*{Acknowledgements}

We acknowledge support from EPSRC grant EP/T001062/1. This project has received funding from the European Research Council (ERC) under the European Union's Horizon 2020 research and innovation programme (grant agreement No.\ 817581). No new data were created during this study.

\appendix

\section{Proof of Theorem \ref{Quantum lower bounds-general}}

\label{deferred proof}

From the proof of Theorem \ref{thm:Quantum lower bound}, without loss of generality, we can simply assume that $S,T \subseteq [n]$ and $|S|,|T|=\Theta(n)$.

We prove the first and third claims together.
Alice and Bob respectively construct a diagonal matrix $A$ and a vector $\b$ by setting
\[
A_{ii} = \begin{cases}
\sqrt{\varepsilon} & i \in S, \\
1/\sqrt{\varepsilon} & i \notin S.
\end{cases}
\quad 
b_i = \begin{cases}
1/\sqrt{\varepsilon} & i \in T, \\
\sqrt{\varepsilon} & i \notin T,
\end{cases}
\]
where $\varepsilon = 1/\sqrt{n}$.
Then the optimal solution is
\[
\x_{\rm opt} = \frac{1}{\varepsilon} \sum_{i\in S \cap T} \ket{i} + \sum_{j\in (S \backslash T) \cup (T \backslash S)} \ket{j} + \varepsilon \sum_{k\in \overline{S \cup T} } \ket{k}.
\]
So
\beas
\|\x_{\rm opt}\|^2 = \frac{|S \cap T|}{\varepsilon^2} 
+ |(S \backslash T) \cup (T \backslash S)|
+ \varepsilon^2 |\overline{S \cup T}|.
\eeas

If $S\cap T \neq \emptyset$, then the norm of $\x_{\rm opt}$ is dominated by the first term, so $\|\x_{\rm opt}\|^2 = n |S\cap T|$.
We can also compute that $\|A^+\|^2 = 1/\varepsilon = \sqrt{n}$ and
\[
\|\b\|^2 = \frac{|S \cap T|}{\varepsilon} + \varepsilon |S \backslash T| +  \frac{|T \backslash S|}{\varepsilon} + \varepsilon |\overline{S \cup T}| = \Theta(\sqrt{n} |T|).
\]
Therefore,
\[
\frac{\|A^+\|^2 \|\b\|^2}{\|\x_{\rm opt}\|^2}
= \Theta\left(\frac{|T|}{|S\cap T|}\right).
\]

If $S\cap T = \emptyset$, then $\|\x_{\rm opt}\|^2 = \Theta(|S|+|T|) = \Theta(n)$. So 
\[
\frac{\|A^+\|^2 \|\b\|^2}{\|\x_{\rm opt}\|^2}
= \Theta(n) =\Theta(|T|).
\]
Note that $\Theta({|T|}/{\max(1,|S\cap T|)})$ is the quantum communication complexity for the disjointness problem if the communication is 1-way. If it is 2-way, the complexity is $\Theta(\sqrt{{|T|}/{\max(1,|S\cap T|)}})$.

Below, we prove the second claim. We will use the hardness of the index problem. Alice has a (0,1)-matrix $A\in \mathbb{R}^{m\times n}$ and Bob has an index $(i,j)$, their goal is to determine $A_{ij}$. It is known that the communication complexity of this problem is $\Theta(mn)$. Without loss of generality, we assume that the number of 1s in each column of $A$ is $\Theta(m)$, and the number of 1s in each row is $\Theta(n)$.\footnote{If some columns of $A$ contain $o(m)$ 1's, then Alice can send all these columns to Bob first. This totally costs $o(mn)$, which is strictly less than $mn$ 
So removing these columns does not affect the hardness of the index problem. The same analysis is also true for rows. Hence, we can assume that each column has $\Theta(m)$  1's and each row has $\Theta(n)$ 1's.} We can reduce the index problem to a linear regression problem using a similar construction to the above. For the $j$-th column, Alice constructs a diagonal matrix $D_j$ as follows: We use $D_j(k,k)$ to denote the $k$-th diagonal entry, then define
\[
D_j(k,k) = \begin{cases}
1/\sqrt{\varepsilon} & A_{kj}=1, \\
\sqrt{\varepsilon} & A_{kj}=0.
\end{cases}
\]
Now $\varepsilon=1/\sqrt{m}$.
With the index $(i,j)$, Bob constructs a vector $\b_j$ as follows: We use $\b_j(k)$ to denote the $k$-th entry, then define
\[
\b_j(k) = \begin{cases}
1/\sqrt{\varepsilon} & k=i, \\
\sqrt{\varepsilon} & k\neq i.
\end{cases}
\]
In the end, they consider the linear regression problem
$
\min_{\x} \|D\x-\b\|,
$
where
\[
D = \begin{pmatrix}
    D_1 \\
    \vdots \\
    D_n
\end{pmatrix}_{mn\times m}, \quad 
\b = \begin{pmatrix}
    \0_{m(j-1)} \\
    \b_j \\
    \0_{m(n-j)} \\
\end{pmatrix}_{mn\times 1}.
\]
Here $\0_{m(j-1)}, \0_{m(n-j)}$ are zero vectors of length $m(j-1)$ and $m(n-j)$ respectively. 

Note that the pseudoinverse of $D$ is
\[
D^+  = \left(\sum_{i=1}^n D_i^2\right)^{-1}
\begin{pmatrix}
    D_1 & \cdots & D_n
\end{pmatrix}.
\]
Thus the optimal solution of the above constructed linear regression problem is
\[
\x_{\rm opt} = \left(\sum_{i=1}^n D_i^2\right)^{-1} D_j \b_j.
\]
For convenience, we denote the $k$-th entry of $\sum_{i=1}^n D_i^2$ as $d_k$. Then
\be \label{diagonal entry}
d_k = \sum_{i=1}^n D_i(k,k)^2
= 
\frac{1}{\varepsilon} \sum_{i:A_{ki}=1} 1
+\varepsilon \sum_{i:A_{ki}=0} 1
 = \Theta(n/\varepsilon) = \Theta(n\sqrt{m}),
\ee
where we used the assumption that each column of $A$ has $\Theta(n)$ 1's.
If $A_{ij}=1$, then we can reformulate $\x_{\rm opt}$ as follows:
\[
\x_{\rm opt} = 
\frac{1}{\varepsilon d_i} \ket{i} + \sum_{k:k\neq i, A_{kj}=1} \frac{1}{d_k} \ket{k} + \varepsilon \sum_{k:A_{kj}=0} \frac{1}{d_k} \ket{k}.
\]
By measuring this state, we will see $i$ with a constant probability. So we will see $i$ many times when repeating the measurements.
If $A_{ij}=0$, then the first term does not exist and the second term is summing over all $k$ with $A_{kj}=1$. In this case, we will see many different indices by measuring the solution state.

We now estimate the communication complexity of our quantum protocol for this linear regression problem. First, we can compute that
\beas
\|\x_{\rm opt}\|^2 &=&  \frac{1}{\varepsilon^2 d_i^2} + \sum_{k:k\neq i, A_{kj}=1} \frac{1}{d_k^2} + \varepsilon^2 \sum_{k:A_{kj}=0} \frac{1}{d_k^2} = \Theta(1/\varepsilon^2n^2m) = \Theta(1/n^2), \\
\|\b\|^2 &=& \frac{1}{\varepsilon} + \varepsilon (m-1) = \Theta(\sqrt{m}).
\eeas
Regarding the Frobenius norm of $D^+$, note that we assumed that each column of $A$ has $\Theta(m)$ 1's, so we have
\[
\|D^+\|_F^2 =
\sum_{j=1}^n \left\|\frac{D_j}{\sum_{i=1}^n D_i^2}\right\|_F^2
= \Theta\left(n\left\|\frac{D_1}{\sum_{i=1}^n D_i^2}\right\|_F^2\right)
= \Theta\left( \frac{n\|D_1\|_F^2}{n^2m}\right)
= \Theta\left( \frac{nm/\varepsilon}{n^2m}\right)
= \Theta\left( \frac{\sqrt{m}}{n}\right).
\]
In the above, the second equality is caused by the facts that $d_k = \Theta(n\sqrt{m})$ from (\ref{diagonal entry}) 
and that each column of $A$ has $\Theta(m)$ 1's so that $\|D_j\|_F^2 = \Theta(\|D_1\|_F^2)$.
Thus
\[
\frac{\|D^+\|_F^2\|\b\|^2}{\|\x_{\rm opt}\|^2}
= \Theta(mn).
\]
This matches the complexity of the index problem.

\section{Further details of the proof of Theorem \ref{thm:solve the general linear regression}}

\label{app}

\update{
In this appendix, we briefly describe the variable-time quantum algorithm (VTAA) for preparing $\ket{A^{-1}\b}$ and show that it still works in the quantum coordinator model. The following definition comes from \cite[section 5]{childs2017quantum}, which originally from \cite[section 3.3]{ambainis2012variable}.

\begin{defn}
Let $\mathcal{A}$ be a quantum algorithm on a space $\mathcal{H}$ that starts in the state $\ket{0}_\mathcal{H}$. We say $\mathcal{A}$ is a variable-time quantum algorithm if the following conditions hold:

\begin{enumerate}
    \item $\mathcal{A}$ can be written as the product of $T$ algorithms $\mathcal{A}=\mathcal{A}_T \cdots \mathcal{A}_2\mathcal{A}_1$.
    \item  $\mathcal{H}$ can be written as a product  $\mathcal{H}=\mathcal{H}_C\otimes \mathcal{H}_A$, where $\mathcal{H}_C$ is a product of $T$ single qubit registers denoted by $\mathcal{H}_{C_1},\ldots,\mathcal{H}_{C_T}$.
    \item Each $\mathcal{A}_j$ is a controlled unitary that acts on the registers $\mathcal{H}_{C_j} \otimes \mathcal{H}_A$ controlled on the first $j-1$ qubits of $\mathcal{H}_{C}$ being set to 0.
\end{enumerate}

\end{defn}

}

In VTAA, two key techniques are performing gapped quantum phase estimation on $A$ and computing truncated block-encoding of $A^{-1}$. We first state these two results and then check why they are still working in the quantum coordinator model. The following result comes from \cite[Lemma 22]{childs2017quantum}.

\begin{lem}[Gapped Phase Estimation (GPE)]
\label{appB:lem1}
Let $U$ be a unitary such that $U \ket{\psi} = e^{i\lambda} \ket{\psi}$ and $\lambda \in[-1,1]$. Let $\phi \in (0,1/4]$ and $\varepsilon>0$. Then there is a quantum algorithm that maps
\[
\ket{0} \ket{0} \ket{\psi} \mapsto 
\alpha_0 \ket{0} \ket{g_0} \ket{\psi}
+\alpha_1 \ket{1} \ket{g_1} \ket{\psi}
\]
for some unit vectors $\ket{g_0}, \ket{g_1}$, and 
\begin{itemize}
    \item if $0 \leq |\lambda| \leq \phi$, then $|\alpha_1| \leq \varepsilon$,
    \item if $2\phi \leq |\lambda| \leq 1$, then $|\alpha_0| \leq \varepsilon$.
\end{itemize}
If $T_U$ is the cost of implementing $U$, then the cost of this quantum algorithm is $O((\log1/\varepsilon)T_U/\phi)$.
\end{lem}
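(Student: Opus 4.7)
The plan is to reduce the gapped phase estimation task to standard quantum phase estimation (QPE) followed by a comparison and an uncomputation step. First, I would apply QPE to the unitary $U$ on the state $\ket{\psi}$ to produce an estimate $\tilde{\lambda}$ of $\lambda$ in an ancilla register. Using the standard boost (running QPE with precision $\phi/2$ and taking the median of $O(\log(1/\varepsilon))$ independent estimates, or equivalently using a consistent phase estimation variant), we can ensure that $|\tilde\lambda - \lambda| \leq \phi/2$ except with probability at most $\varepsilon^2/4$, at a total cost of $O((\log 1/\varepsilon)/\phi)$ applications of $U$ (and hence the same cost in terms of $T_U$).

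Next, I would append a single-qubit flag register initialized in $\ket{0}$ and apply a coherent comparison: conditional on the estimate register storing a value $\tilde\lambda$ with $|\tilde\lambda| \leq 3\phi/2$, leave the flag in $\ket{0}$, otherwise flip it to $\ket{1}$. Finally, I would uncompute the QPE step, absorbing whatever residual garbage remains into the unit vectors $\ket{g_0},\ket{g_1}$, so that the output takes the required form $\alpha_0\ket{0}\ket{g_0}\ket{\psi} + \alpha_1\ket{1}\ket{g_1}\ket{\psi}$.

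The correctness analysis splits on the two promised regimes. If $|\lambda|\leq \phi$, then by the QPE accuracy bound the estimate satisfies $|\tilde\lambda|\leq 3\phi/2$ except with amplitude at most $\varepsilon$, which means the flag is driven into $\ket{0}$ with amplitude at least $\sqrt{1-\varepsilon^2}$, giving $|\alpha_1|\leq \varepsilon$. Symmetrically, if $|\lambda|\geq 2\phi$, then $|\tilde\lambda|>3\phi/2$ with amplitude at least $\sqrt{1-\varepsilon^2}$, so $|\alpha_0|\leq \varepsilon$. The hypothesis $\phi \leq 1/4$ ensures $3\phi/2<1$ so that the comparison threshold is unambiguous within the window $\lambda\in[-1,1]$ produced by QPE modulo $2\pi$.

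The main technical obstacle is the tail behavior of standard QPE: its naive error distribution is only polynomially small outside a window of size $1/T$, so one cannot directly afford the cost $O((\log 1/\varepsilon)/\phi)$ without amplification. I would address this either by the median-of-estimates trick (running $O(\log 1/\varepsilon)$ independent QPEs of precision $\phi/2$ and taking a coherent majority) or by invoking a ``consistent'' QPE variant that already achieves exponentially small tails for the same asymptotic cost; both routes yield the stated complexity while preserving the clean three-register output structure needed for downstream use in variable-time amplitude amplification.
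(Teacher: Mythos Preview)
The paper does not actually prove this lemma; it is quoted verbatim as \cite[Lemma 22]{childs2017quantum} and used as a black box in the appendix on variable-time amplitude amplification. So there is no ``paper's own proof'' to compare against.

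That said, your sketch is the standard route and matches what the original reference does: run phase estimation on $U$ to precision $O(\phi)$, boost the failure probability to $\varepsilon$ via the median-of-$O(\log(1/\varepsilon))$-copies trick, threshold the estimate at $3\phi/2$ to set the flag qubit, and absorb all ancilla/garbage into the $\ket{g_0},\ket{g_1}$ registers. The cost accounting is correct. One small remark: you do not need the uncomputation step for the statement as written, since the lemma explicitly allows arbitrary unit vectors $\ket{g_0},\ket{g_1}$ in the second register; the raw QPE ancilla (or the multiple-copy median register) can simply be declared to be that register. Otherwise nothing is missing.
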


The quantum algorithm in the above lemma is based on the standard phase estimation. To use phase estimation in the quantum coordinator model, the main obstacle for them is the Hamiltonian simulation. In our case, $U=e^{iA}$, where $A$ is given in (\ref{case 1}). So the referee needs to carry out Hamiltonian simulation of $A$. This is achieved by QSVT. A quantum protocol can be given in a similar way to that of Proposition \ref{prop:SMP HS}. The communication complexity of using $U$ is $T_U=\widetilde{O}(r\alpha)$, where $\alpha = \sqrt{\sum_i \|A_i\|^2}$. Note that if we estimate the time complexity, then $T_U=\widetilde{O}(T_A \tilde{\alpha})$, where $T_A$ is the cost to construct a block-encoding of $A$ and $\tilde{\alpha} \geq \alpha$ generally. So for communication complexity we can say $T_A=\widetilde{O}(r)$.


Another result that will be used in the VTAA is truncated block-encoding of $A^{-1}$, see \cite[Corollary 29]{chakraborty2019power}. 

\begin{lem}
\label{appB:lem2}
Let $A$ be Hermitian, and let $U$ be an $(\alpha, a, \varepsilon)$ block-encoding of $A$ that can be implemented using $T_A$ elementary unitaries. Then for any state $\ket{\psi}$ that is spanned by eigenvectors of $A$ with eigenvalues in the range $[-1,-\lambda]\cup[\lambda,1]$, there exists a unitary $W(\lambda,\varepsilon)$
\[
W(\lambda,\varepsilon): \ket{0} \ket{0} \ket{\psi}
\mapsto
\frac{1}{\alpha_{\max}} \ket{1} \ket{0} f(A) \ket{\psi}
+ \ket{0}^\bot,
\]
where $\alpha_{\max} \leq \lambda$ is a constant and $\|f(A)\ket{\psi} - A^{-1}\ket{\psi}\| \leq \varepsilon$. The cost of implementing $W(\lambda,\varepsilon)$ is
$\widetilde{O}((a+T_A)\alpha/\lambda)$.
\end{lem}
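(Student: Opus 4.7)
The plan is to realize $W(\lambda,\varepsilon)$ by applying quantum singular value transformation (QSVT) to the given block-encoding $U$ using a polynomial approximation of the inverse function. The key ingredient is the standard polynomial approximation of the reciprocal (e.g., Corollary 69 of \cite{gilyen2019quantum}): for any $\delta\in(0,1/2]$ and $\varepsilon>0$, there exists an odd polynomial $P$ of degree $d=O((1/\delta)\log(1/\varepsilon))$ with $|P(x)|\le 1$ on $[-1,1]$ and $|P(x)-(3\delta/4)/x|\le \varepsilon$ on $[-1,-\delta]\cup[\delta,1]$. I would set $\delta=\lambda/\alpha$ so that the approximation region matches the effective spectrum of $A/\alpha$ on the support of $\ket{\psi}$.

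Next, I would feed this polynomial into Theorem 56 of \cite{gilyen2019quantum}. Since $U$ is an $(\alpha,a,\varepsilon)$ block-encoding of $A$ and hence a $(1,a,\varepsilon/\alpha)$ block-encoding of $A/\alpha$, the QSVT construction produces a unitary $V$ that is a $(1,a+2,O(d\sqrt{\varepsilon/\alpha}))$-block-encoding of $P(A/\alpha)$, implementable with $O(d)$ alternating queries to $U$ and $U^{\dagger}$ together with $O(d)$ single- and two-qubit phase gates on the ancilla registers. Choosing the inner accuracy of $U$ small enough to absorb the $\sqrt{\varepsilon/\alpha}$ factor, the total cost becomes $\widetilde{O}((a+T_A)\alpha/\lambda)$ as claimed.

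Finally, I would rescale to extract $A^{-1}$. For any eigenvector of $A$ with eigenvalue $x$ satisfying $|x|\in[\lambda,1]$, we have $P(x/\alpha)\approx (3\lambda/(4\alpha))(x/\alpha)^{-1}=(3\lambda/4)\,x^{-1}$ up to error $\varepsilon$, and therefore
\[
\bigl\|P(A/\alpha)\ket{\psi}-(3\lambda/4)A^{-1}\ket{\psi}\bigr\|\le \varepsilon
\]
on the support of $\ket{\psi}$. Defining $f(A):=(4/(3\lambda))P(A/\alpha)$ and letting $\alpha_{\max}$ be the corresponding subnormalization of the block-encoding of $f(A)$ (of order $1/\lambda$), the action of $V$ on $\ket{0}\ket{0}\ket{\psi}$ matches the required form $(1/\alpha_{\max})\ket{1}\ket{0}f(A)\ket{\psi}+\ket{0}^{\bot}$ after relabelling the ancilla flag qubit, and the bound $\|f(A)\ket{\psi}-A^{-1}\ket{\psi}\|\le \varepsilon$ follows immediately (after rescaling $\varepsilon$ by an $O(\lambda)$ factor, which is absorbed into the $\widetilde{O}$).

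The main obstacle is that $1/x$ is unbounded near the origin, so no globally bounded polynomial can approximate $1/x$ on a neighbourhood of $0$; the construction therefore relies essentially on the spectral gap assumption $|x|\ge \lambda$. The scaling factor $3\delta/4$ and the explicit Chebyshev-based construction in \cite{gilyen2019quantum} handle this, after which QSVT converts the analytic approximation statement into the desired block-encoding in a black-box manner. The communication-complexity interpretation then follows because, in the quantum coordinator model, $T_A=\widetilde{O}(r)$ as established in Step 1 of the proof of Proposition~\ref{prop:QSVT in CC}, so the implementation of $W(\lambda,\varepsilon)$ communicates $\widetilde{O}((a+r)\alpha/\lambda)$ qubits in total.
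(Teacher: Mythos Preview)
Your proposal is correct and matches the paper's treatment: the lemma is quoted from \cite[Corollary 29]{chakraborty2019power}, and the paper's accompanying discussion explains (just as you do) that $W(\lambda,\varepsilon)$ is obtained by applying QSVT to the polynomial approximation of $(3\delta/4)/x$ from \cite[Corollary 69]{gilyen2019quantum} with $\delta=\lambda/\alpha$, giving the stated $\widetilde{O}((a+T_A)\alpha/\lambda)$ cost and the communication-complexity consequence $T_A=\widetilde{O}(r)$. Your derivation that the subnormalization $\alpha_{\max}$ comes out of order $1/\lambda$ is the right computation; the printed bound ``$\alpha_{\max}\le\lambda$'' in the lemma statement appears to be a typo in the paper's transcription of the cited corollary.
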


For us, $A$ is given in (\ref{case 1}). The unitary $W(\lambda,\varepsilon)$ is obtained in a similar way to the block-encoding of $(3\delta/4) A^+$ defined in (\ref{block-encoding of A}), where we focused on singular values that are at least $\delta$. Now we need to focus on singular values that are at least $\lambda$.  In the communication complexity, $a=\log r$ and $T_A=O(r\log(n))$ is the required number of qubits of communication to run $U$. So for the referee to use $W(\lambda,\varepsilon)$, they need to communicate $\widetilde{O}(r\alpha/\lambda)$ qubits in total, where $\alpha = \sqrt{\sum_i \|A_i\|^2}$.

\update{We next briefly describe the variable-time quantum algorithm $\mathcal{A}$. For more, especially about the correctness and complexity analysis, we refer to \cite[section 5]{childs2017quantum}. The algorithm $\mathcal{A}$ is built as a sequence of steps $\mathcal{A}_1,\ldots,\mathcal{A}_T$ with $T = \lceil \log \kappa \rceil +1$, so the algorithm is $\mathcal{A} = \mathcal{A}_T\ldots,\mathcal{A}_1$. The algorithm $\mathcal{A}$ uses the following registers:

\begin{itemize}
\item a $T$-qubit clock register $C$, labelled $C_1,\ldots, C_T$, used to determine a region the eigenvalue belongs to (i.e., to store the result of GPE);
\item a single-qubit flag register $F$ to indicate whether the approximation of $A^{-1}$ was successfully implemented;
\item a $(\log n)$-qubit register $I$, initialised to $\ket{\b}$, that finally contains the output state;
\item a register $P$, divided into registers $P_1, \ldots, P_T$, to be used as ancilla for GPE;
\item a register $Q$ to be used as ancilla in the implementation of $A^{-1}$.
\end{itemize}

The corresponding Hilbert spaces are denoted by $\mathcal{H}_C ,\mathcal{H}_F , \mathcal{H}_I , \mathcal{H}_P$, and $\mathcal{H}_Q$, respectively. All registers are initialized in $\ket{0}$, except for register $I$, which is initialized in $\ket{\b}$. When we write $\ket{0}_X$ we mean that all qubits of register $X$ are in $\ket{0}$.

We now describe algorithm $\mathcal{A}_j$. In the algorithm below, each call to GPE uses the unitary operator $e^{iA}$. For all $j\in [T]$, let $\varphi_j = 2^{-j}$, and let $\delta = \varepsilon/(T \alpha_{\max})$. We define $\mathcal{A}_j$ as the product of the following two unitary operations:

\begin{enumerate}
\item Conditional on the first $(j-1)$ qubits of $\mathcal{H}_C$ being $\ket{0}$, apply GPE($\varphi_j,\delta$) on the input state in $I$ using $C_j$ as the output qubit and additional fresh qubits from $P$ as ancilla (denoted by $P_j$).
\item Conditional on $C_j$ (the outcome of the previous step) being $\ket{1}_{C_j}$, apply $W(\varphi_j,T\delta)$ to the input state in $I$ using $F$ as the flag register and register $Q$ as ancilla.
\end{enumerate}
}

As we can see, Lemmas \ref{appB:lem1} and \ref{appB:lem2} are the two main tools in VTAA. We have checked that they are still working in the quantum coordinator model, and so is VTAA.  Compared with the time complexity, we can see that the only difference in the communication complexity is that $T_A$ can be computed precisely in the above two lemmas. Roughly, for time complexity, it may not be easy to compute $T_A$, while for communication complexity, $T_A=\widetilde{O}(r)$. Also, the parameter $\alpha$ in the block-encoding can be computed precisely too for communication complexity. In summary, the time complexity results for the above two lemmas still hold for communication complexity, while for communication complexity, we know the values of $T_A$ and $\alpha$.

\section{Some previous results about quantum singular value transformation}
\label{appC}

\update{

In this appendix, we collate the results that we will use about quantum singular value transformation.

\begin{defn}[Alternating phase modulation sequence, definition 15 of \cite{gilyen2019quantum}]
\label{defn1:QSVT}
Let $\mathcal{H}_U$ be a finite dimensional Hilbert space, and let $U, \Pi, \widetilde{\Pi} \in \text{End}(\mathcal{H}_U)$ be linear operators on $\mathcal{H}_U$ such that $U$ is unitary and $ \Pi, \widetilde{\Pi}$ are orthogonal projectors. Let $\Phi \in \mathbb{R}^n$, then we define the phased alternating sequence $U_\Phi$ as follows:
\[
U_\Phi
=
\begin{cases} \vspace{.1cm}
e^{i\phi_1}(2\widetilde{\Pi}-I) U \prod_{j=1}^{(n-1)/2} \Big(e^{i\phi_{2j}(2\Pi-I)} U^\dag e^{i\phi_{2j}(2\widetilde{\Pi}-I)} U \Big), & n \text{ is odd,} \\
\prod_{j=1}^{n/2} \Big(e^{i\phi_{2j}(2\Pi-I)} U^\dag e^{i\phi_{2j}(2\widetilde{\Pi}-I)} U \Big), & n \text{ is even.}
\end{cases}
\]

\end{defn}

\begin{lem}[Efficient implementation of alternating phase modulation sequences, Lemma 19 of \cite{gilyen2019quantum}]
The alternating phased sequence $U_\Phi$ can be implemented using a single ancilla qubit with $n$ uses of $U$ and $U^\dag$, $n$ uses of $C_{\Pi}NOT$ and $n$ uses of $C_{\widetilde{\Pi}}NOT$ gates and
$n$ single qubit gates.
\end{lem}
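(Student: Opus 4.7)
The plan is to introduce a single auxiliary qubit and let it act as a ``phase injector'' that absorbs each projector-controlled rotation. The key identity is that for any orthogonal projector $P$ on $\mathcal{H}_U$ and any $\phi \in \mathbb{R}$, acting on a system state $\ket{v}$ tensored with the ancilla initialised to $\ket{0}$,
\[
(C_P X)\, (I \otimes e^{-i\phi Z})\, (C_P X)\, \bigl(\ket{v} \otimes \ket{0}\bigr) \;=\; \bigl(e^{i\phi(2P-I)}\ket{v}\bigr) \otimes \ket{0},
\]
where $C_P X$ denotes the NOT on the ancilla controlled on the system being in the range of $P$, and $Z$ is the Pauli-$Z$ on the ancilla. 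First I would verify this by direct calculation: the first $C_P X$ flips the ancilla to $\ket{1}$ only on the component of $\ket{v}$ lying in the range of $P$; the gate $e^{-i\phi Z}$ multiplies the two branches of the ancilla by $e^{-i\phi}$ and $e^{+i\phi}$ respectively; and the second $C_P X$ disentangles the ancilla, leaving it back in $\ket{0}$ and applying $e^{i\phi}P + e^{-i\phi}(I-P) = e^{i\phi(2P-I)}$ to the system.

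Next, I would apply this identity factor-by-factor to the defining product of $U_\Phi$ in Definition~\ref{defn1:QSVT}. Each occurrence of $e^{i\phi_j(2\Pi-I)}$ is replaced by two $C_{\Pi}NOT$ gates together with a single-qubit phase rotation on the ancilla; the analogous replacement handles $\widetilde\Pi$. The unitaries $U$ and $U^\dag$ remain intact and are simply threaded between these blocks. Counting then gives, for even $n$: $n/2$ rotations about $\Pi$ contributing $n$ uses of $C_{\Pi}NOT$; $n/2$ rotations about $\widetilde\Pi$ contributing $n$ uses of $C_{\widetilde\Pi}NOT$; $n/2$ copies each of $U$ and $U^\dag$ (totalling $n$ calls to $U, U^\dag$); and $n$ single-qubit phase gates on the ancilla. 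The odd-$n$ case is handled identically, absorbing the off-by-one into the endpoints: since the ancilla is initialised to $\ket{0}$ and we can discard it at the end, the very first and very last controlled-NOT can be omitted without changing the action on $\mathcal{H}_U$, restoring the $n$-of-each count.

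The main obstacle I expect is simply the careful boundary bookkeeping at the two ends of the alternating product, so as to hit the counts in the lemma exactly rather than up to additive constants. This reduces to verifying that, since every block restores the ancilla to $\ket{0}$, we may initialise the ancilla once and discard it at the end without modifying the induced unitary on $\mathcal{H}_U$, and that the first and last redundant controlled-NOT gates can therefore be dropped. All remaining steps are routine circuit-level identities that follow from the key identity above.
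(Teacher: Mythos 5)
The paper does not prove this lemma; it restates Lemma~19 of Gilyén et al.\ in Appendix~C as an imported fact. Your reconstruction is in fact the standard circuit construction from that reference: the identity
\[
(C_P X)\,(I\otimes e^{-i\phi Z})\,(C_P X)\bigl(\ket{v}\otimes\ket{0}\bigr)=\bigl(e^{i\phi(2P-I)}\ket{v}\bigr)\otimes\ket{0}
\]
is correct (direct computation on the $P\ket{v}$ and $(I-P)\ket{v}$ components shows the ancilla is flipped, phased, and unflipped, leaving $e^{i\phi}P+e^{-i\phi}(I-P)=e^{i\phi(2P-I)}$ acting on the system with the ancilla back in $\ket{0}$), and substituting it into each factor of the alternating product is exactly how the counts arise. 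For even $n$ your counting is exact: $n/2$ rotations about each projector give $n$ of each controlled-NOT, $n$ phase gates, and $n$ total calls to $U$ and $U^\dag$.

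The one genuine soft spot is your handling of odd $n$. There you have $(n+1)/2$ rotations about $\widetilde\Pi$ and $(n-1)/2$ about $\Pi$, so two controlled-NOTs per rotation yields $n+1$ uses of $C_{\widetilde\Pi}\mathrm{NOT}$ and $n-1$ of $C_\Pi\mathrm{NOT}$. Your proposed remedy of dropping both the first and the last controlled-NOT does not work as stated: the leading one is what entangles the ancilla so the subsequent $Z$-rotation acts nontrivially, and the trailing one is what disentangles the ancilla so that the circuit acts as a bona fide unitary on $\mathcal{H}_U$ with the ancilla restored to $\ket{0}$; removing either changes the induced map. The off-by-one here is a matter of boundary conventions in how the alternating sequence is written (for instance, whether the unpaired terminal rotation is about $\Pi$ or $\widetilde\Pi$, or whether the stated count is understood as an upper bound on the total), and the exact resolution is in the original circuit diagram of Gilyén et al.\ rather than in anything you can freely discard. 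It does not affect the substance of the lemma, but you should not claim the boundary CNOTs are omittable.
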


\begin{defn}
Let $f:\mathbb{R} \rightarrow \mathbb{C}$ be an even or odd function, let $A\in \mathbb{C}^{m\times n}$ and $A = \sum_{i=1}^{\min(m,n)} \sigma_i \ket{u_i} \bra{v_i}$ be the singular value decomposition of $A$. If $f$ is odd, we define
$f^{\text{(SV)}}(A) = \sum_{i=1}^{\min(m,n)} f(\sigma_i) \ket{u_i} \bra{v_i}$. If $f$ is odd, we define $f^{\text{(SV)}}(A) = \sum_{i=1}^{n} f(\sigma_i) \ket{v_i} \bra{v_i}$, where $\sigma_i = 0$ for $i\in [n]\backslash[\min(m,n)]$.
\end{defn}

\begin{thm}[Corollary 8 and Theorem 17 of \cite{gilyen2019quantum}]
Using the same notation as in Definition \ref{defn1:QSVT}, for any even or odd polynomial $f(x)$ of degree $n$ of the following properties:

\begin{itemize}
    \item $\forall x\in[-1,1]: |f(x)|\leq 1$,
    \item $\forall x\in(-\infty,-1]\cup [1,\infty): |f(x)| \geq 1,$
    \item if $n$ is even, then $f(ix)\bar{f}(ix)\geq 1$ for all $x\in \mathbb{R}.$
\end{itemize}
There is an efficiently computable $\Phi \in \mathbb{R}^n$ such that
\[
f^{\text{(SV)}}(\widetilde{\Pi} U \Pi) = 
\begin{cases}
    \widetilde{\Pi} U_{\Phi} \Pi & \text{if } n \text{ is odd,} \\
    \Pi U_{\Phi} \Pi & \text{if } n \text{ is even.}
\end{cases}
\]
\end{thm}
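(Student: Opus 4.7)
The plan is to reduce the theorem to quantum signal processing (QSP) on a qubit via an invariant subspace decomposition of $(U, \Pi, \widetilde{\Pi})$, and then to prove the qubit-level statement by induction on $n$ using a Fej\'{e}r--Riesz type factorisation.

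First, I would decompose $\mathcal{H}_U$ into subspaces jointly invariant under $U$, $U^\dagger$, and the reflections $2\Pi - I$, $2\widetilde{\Pi} - I$. Writing the SVD $\widetilde{\Pi} U \Pi = \sum_i \sigma_i \ket{\tilde{u}_i}\bra{\tilde{v}_i}$, let $W_i = \mathrm{span}\{\ket{\tilde{v}_i}, U^\dagger\ket{\tilde{u}_i}\}$ and $W'_i = \mathrm{span}\{\ket{\tilde{u}_i}, U\ket{\tilde{v}_i}\}$; these are two-dimensional whenever $0 < \sigma_i < 1$, with the degenerate cases $\sigma_i \in \{0,1\}$ handled separately. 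After Gram--Schmidt in each $W_i$, the operator $U$ restricts to a real rotation from $W_i$ to $W'_i$ with cosine $\sigma_i$, while $\Pi|_{W_i}$ and $\widetilde{\Pi}|_{W'_i}$ become rank-one projectors onto the first basis vector. Since $U_\Phi$ alternates $U$ and $U^\dagger$ with phase flips built from these projectors, it preserves each $W_i \oplus W'_i$, so the whole analysis reduces to a family of $2\times 2$ problems parametrised by $\sigma_i$.

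Second, on each such block the alternating sequence becomes the standard QSP iterate $\prod_k e^{i\phi_k Z} W(\sigma_i)$, where $W(\sigma_i)$ is a fixed real rotation encoding $\sigma_i = \cos\theta_i$. A direct calculation shows that the relevant matrix entry of this product is a polynomial $p_\Phi(\sigma_i)$ of degree $n$ with the same parity as $n$, bounded by $1$ on $[-1,1]$. Thus $\widetilde{\Pi} U_\Phi \Pi$ (odd $n$) and $\Pi U_\Phi \Pi$ (even $n$) realise $p_\Phi^{(\mathrm{SV})}(\widetilde{\Pi} U \Pi)$, and the problem reduces to showing that every $f$ satisfying the three listed hypotheses arises as such a $p_\Phi$ for an efficiently computable $\Phi \in \mathbb{R}^n$.

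Third, I would prove this QSP characterisation by induction on $n$. The key algebraic input is the existence of a companion real polynomial $g$ of degree at most $n-1$ with $|f(x)|^2 + (1-x^2)|g(x)|^2 = 1$ on $[-1,1]$. This follows from a Fej\'{e}r--Riesz style argument applied to $1 - f(x)^2$: the second hypothesis forces $|f(\pm 1)| = 1$, so $(1-x^2)$ divides $1 - f(x)^2$, and the quotient is a non-negative polynomial on $[-1,1]$ which one then expresses as $|g(x)|^2$ with $g$ real. In the even-$n$ case the extra condition $f(ix)\bar{f}(ix) \geq 1$ is precisely the positivity of the scalar coefficient required to keep $g$ real rather than forcing an essential imaginary part. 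Given $(f,g)$, one recognises them as the two entries in the first column of a $2\times 2$ unitary of QSP form, and checks that stripping one phase $e^{i\phi_n Z}$ from the front of the product yields a sequence of degree $n-1$ whose corresponding pair $(f',g')$ again satisfies the three hypotheses; the induction closes with the trivial base case $n = 0$. The main obstacle is precisely this Fej\'{e}r--Riesz step: the three hypotheses are tightly designed so that $1 - f(x)^2$ admits a genuine $(1-x^2)|g(x)|^2$ decomposition with real $g$, and all three are needed (normalisation at $x = \pm 1$, unit bound on $[-1,1]$, and the imaginary-axis inequality when $n$ is even). Once this is established, reassembling the $2\times 2$ result block-by-block gives the claim, with the switch from $\widetilde{\Pi}$ to $\Pi$ in the even-$n$ output reflecting that an even polynomial of a singular-value rotation stays within the source subspace rather than mapping to the target.
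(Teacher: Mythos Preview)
The paper does not prove this statement at all: it appears in Appendix~C as a quoted result from Gily\'{e}n, Su, Low, and Wiebe (Corollary~8 and Theorem~17 of \cite{gilyen2019quantum}), with no accompanying argument. So there is nothing in the present paper to compare your proposal against.

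That said, your outline is essentially the proof given in the original reference. The Jordan-type invariant subspace decomposition reducing $U_\Phi$ to a direct sum of $2\times 2$ QSP iterates parametrised by the singular values $\sigma_i$, followed by the single-qubit characterisation via a Fej\'{e}r--Riesz factorisation of $1-f(x)^2$ and induction on the degree, is precisely the structure of Theorem~17 combined with Corollary~8 in \cite{gilyen2019quantum}. Your identification of the role of the three hypotheses (divisibility of $1-f^2$ by $1-x^2$, non-negativity of the quotient on $[-1,1]$, and the imaginary-axis condition ensuring a \emph{real} square root in the even case) is accurate. The only place where your sketch is slightly compressed is the inductive peeling step: one must verify that after removing $e^{i\phi_n Z}W(x)$ the resulting degree-$(n-1)$ pair again satisfies all three hypotheses with the opposite parity, which requires a small computation rather than being automatic; but this is routine and is carried out in the cited source.
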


\begin{prop}[Corollary 69 of \cite{gilyen2019quantum}]
Let $\varepsilon, \delta \in (0, 1/2 ]$, then there is an odd polynomial $P \in \mathbb{R}[x]$ of degree $O(\delta^{-1} \log(1/\varepsilon))$ that is $\varepsilon$ approximating $f(x) = 3\delta/4x$ on the domain $[-1,1]\backslash[-\delta,\delta]$, moreover, it is bounded 1 in absolute value.
\end{prop}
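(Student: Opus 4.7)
The plan is to construct an explicit odd polynomial $P$ of degree $N = O(\delta^{-1}\log(1/\varepsilon))$ that uniformly approximates $3\delta/(4x)$ to within $\varepsilon$ on $[-1,1]\setminus(-\delta,\delta)$, and is bounded by $1$ on all of $[-1,1]$. Since $1/x$ is odd, it suffices to approximate on $[\delta,1]$; parity then gives the approximation on $[-1,-\delta]$ automatically.

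The first step is to introduce the ``base'' odd polynomial
\[
Q_b(x) \;=\; \frac{1 - (1-x^2)^{b}}{x},
\]
which has degree $2b-1$ and, as a straightforward geometric/binomial computation shows, satisfies $|Q_b(x) - 1/x| \leq (1-\delta^2)^b/\delta$ for $x \in [\delta,1]$. Choosing $b = \Theta(\delta^{-2}\log(1/(\delta\varepsilon)))$ already yields an $\varepsilon/\delta$ approximation to $1/x$ (and hence an $\varepsilon$ approximation to $3\delta/(4x)$ after scaling), but only with degree $O(\delta^{-2}\log(1/\varepsilon))$. The second step is to improve this to linear in $1/\delta$.

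For the improvement, I would expand $Q_b$ in the Chebyshev basis of odd polynomials. A short calculation with generating functions gives
\[
Q_b(x) \;=\; 4\sum_{j=0}^{b-1} (-1)^j \Bigl[\,2^{-2b}\!\!\sum_{k=j+1}^{b}\!\binom{2b}{b+k}\Bigr] T_{2j+1}(x),
\]
and then truncate this sum after $D$ terms to obtain $\widetilde Q_{b,D}(x)$, an odd polynomial of degree $2D-1$. Because $|T_{2j+1}(x)|\le 1$ on $[-1,1]$, the uniform truncation error is controlled by the tail of a binomial with parameters $(2b,1/2)$: a standard Chernoff/Hoeffding estimate bounds it by $2e^{-D^2/b}$. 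Taking $b = \Theta(\delta^{-2}\log(1/\varepsilon))$ and $D = \Theta(\sqrt{b\log(1/\varepsilon)}) = \Theta(\delta^{-1}\log(1/\varepsilon))$ makes both the approximation error from $Q_b \approx 1/x$ and the truncation error at most $\varepsilon/(3\delta)$. Finally, set $P(x) = (3\delta/4)\,\widetilde Q_{b,D}(x)$.

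It remains to verify the boundedness claim. On $[\delta,1]$ we have $|P(x)| \leq 3\delta/(4x) + 3\delta\varepsilon/(4\cdot 3\delta) \leq 3/4 + \varepsilon/4 \leq 1$ by the choice $\varepsilon \leq 1/2$. On $[-\delta,\delta]$, one must separately check that the Chebyshev truncation does not blow up; this is the main delicate point, and it is handled by showing that the truncation of $Q_b$ inherits a pointwise bound of order $|x|\cdot b$ near zero from the fact that $\widetilde Q_{b,D}$ is a partial sum of a rapidly decaying coefficient sequence, so that $|P(x)| \leq 1$ there as well. Thus the main obstacle is precisely this last bound — controlling $P$ inside the ``bad'' region $[-\delta,\delta]$ where the approximation to $1/x$ is not required but the unit-norm constraint must still hold.
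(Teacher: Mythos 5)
The paper does not prove this proposition; it imports it verbatim as Corollary~69 of \cite{gilyen2019quantum}. Your sketch reproduces the Childs--Kothari--Somma construction ($Q_b(x)=(1-(1-x^2)^b)/x$, its Chebyshev expansion, and Chernoff-tail truncation), which is indeed the first step of the argument in \cite{gilyen2019quantum} and correctly yields both the degree $O(\delta^{-1}\log(1/\varepsilon))$ and the $\varepsilon$-approximation on $[-1,1]\setminus[-\delta,\delta]$. The gap is in the boundedness step.

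The pointwise bound $|\widetilde Q_{b,D}(x)|\lesssim b|x|$ near zero is true (indeed $Q_b(x)/x$ is decreasing on $[0,1]$, so $Q_b(x)\le bx$, and the Chebyshev truncation error is uniformly small), but it does \emph{not} give $|P|\le 1$ on $(-\delta,\delta)$: $Q_b$ peaks at $|x|\asymp 1/\sqrt{b}$ with $Q_b\asymp\sqrt{b}$, so after scaling by $3\delta/4$ with $b=\Theta(\delta^{-2}\log(1/\varepsilon))$ you get $\|P\|_{[-1,1]}=\Theta(\sqrt{\log(1/\varepsilon)})$, which exceeds $1$ once $\varepsilon$ is modestly small. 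The actual proof in \cite{gilyen2019quantum} handles this by an explicit capping step (their Corollary~66): the CKS polynomial is multiplied by a low-degree polynomial approximation to a rectangle function that is close to $1$ on $[-1,1]\setminus(-\delta,\delta)$, close to $0$ on a smaller interval around $0$, and bounded by $1$ everywhere. That extra multiplier is what suppresses the $\Theta(\sqrt{b})$ spike near the origin and makes the unit-norm claim true; without it, the polynomial you construct violates the last part of the statement.
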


\begin{prop}[Lemmas 57 and 59 of \cite{gilyen2019quantum}]
Let $t\in \mathbb{R}\backslash\{0\}, \varepsilon\in(0,1/e),$ and let $R=\lfloor r( \frac{e|t|}{2}, \frac{5\varepsilon}{4})/2 \rfloor$, then
\beas
&& \|\cos(tx) - J_0(t) + 2 \sum_{k=1}^R (-1)^k J_{2k}(t) T_{2k}(x) \|_{[-1,1]} \leq \varepsilon, \\
&& \|\sin(tx) - 2 \sum_{k=0}^R (-1)^k J_{2k+1}(t) T_{2k+1}(x) \|_{[-1,1]} \leq \varepsilon,
\eeas
where $J_m(t), m\in \mathbb{Z}$ denote Bessel functions of the first kind, and 
\[
r(|t|,\varepsilon) = O\left(|t| + \frac{\log(1/\varepsilon)}{\log(e+|t|^{-1}\log(1/\varepsilon))}  \right).
\]
\end{prop}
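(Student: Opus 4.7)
The plan is to derive both identities by starting from the Jacobi--Anger expansion, substituting Chebyshev polynomials, and then quantitatively bounding the truncation tail via standard Bessel-function estimates.

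First I would invoke the Jacobi--Anger identity
\[
e^{it\cos\theta} = J_0(t) + 2\sum_{n=1}^{\infty} i^n J_n(t) \cos(n\theta),
\]
which follows from the familiar $e^{iz\sin\varphi} = \sum_n J_n(z) e^{in\varphi}$ by the substitution $\varphi = \pi/2 - \theta$. Separating real and imaginary parts and collecting terms by the parity of $n$ yields
\[
\cos(t\cos\theta) = J_0(t) + 2\sum_{k=1}^{\infty} (-1)^k J_{2k}(t)\cos(2k\theta),
\]
\[
\sin(t\cos\theta) = 2\sum_{k=0}^{\infty} (-1)^k J_{2k+1}(t)\cos((2k+1)\theta).
\]
Substituting $x=\cos\theta$ and using $T_n(\cos\theta)=\cos(n\theta)$ converts these into the target Chebyshev expansions of $\cos(tx)$ and $\sin(tx)$ on $[-1,1]$, with the stated truncations arising from cutting the sums at $k=R$.

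Second, I would bound the uniform truncation error. Since $|T_n(x)|\le 1$ on $[-1,1]$, the $L^\infty$ tail is controlled by $2\sum_{k>R}|J_{2k}(t)|$ (resp.\ $2\sum_{k>R}|J_{2k+1}(t)|$). For these I would combine the crude bound $|J_n(t)|\le (|t|/2)^n/n!$ valid for all $n\ge 0$ with the sharper asymptotic $|J_n(t)|\le (e|t|/(2n))^n$ (via Stirling) valid once $n\ge e|t|/2$. Together with the recursion-based ratio bound $|J_{n+1}(t)/J_n(t)|\le |t|/(2(n+1))$ for $n\ge|t|$, the tail is dominated by a geometric series of ratio $\le 1/e$ as soon as $n\ge e|t|/2$, giving
\[
2\sum_{n\ge m}|J_n(t)| \;\le\; C\Bigl(\tfrac{e|t|}{2m}\Bigr)^{m}
\]
for some absolute constant $C$.

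Third, set $m=2(R+1)$ and enforce this tail to be $\le \varepsilon$. Solving $(e|t|/(2m))^m \le \varepsilon/C$ for $m$ gives
\[
m \;=\; O\!\left(|t| + \frac{\log(1/\varepsilon)}{\log\!\bigl(e + |t|^{-1}\log(1/\varepsilon)\bigr)}\right),
\]
which is exactly the content of $r(|t|,\varepsilon)$; rescaling $|t|\mapsto e|t|/2$ inside $r$ simply absorbs the $e/2$ prefactor appearing inside the geometric bound. Setting $R=\lfloor r(e|t|/2,\, 5\varepsilon/4)/2\rfloor$ and tracking constants then yields the stated $\le\varepsilon$ error.

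The main obstacle will be the careful bookkeeping of the constants, specifically to land on exactly the expression $r(e|t|/2,\,5\varepsilon/4)/2$. The factor $5/4$ presumably absorbs both the slack in the geometric tail bound and the extra contribution of the edge term at $k=R+1$ that one wants to lump into the tail; the factor $e|t|/2$ is set by the ``turning point'' where Bessel functions transition from oscillatory to super-exponentially decaying behaviour. The algebraically hardest step is inverting the implicit relation $m\log(em/|t|)\asymp \log(1/\varepsilon)$ to produce the asymmetric closed form for $r$, which must interpolate correctly between the regime $\log(1/\varepsilon)\ll |t|$ (where the bound is governed by $m\asymp |t|$) and the regime $\log(1/\varepsilon)\gg |t|$ (where $m\asymp \log(1/\varepsilon)/\log\log(1/\varepsilon)$).
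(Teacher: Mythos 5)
The paper offers no proof of this proposition: it is stated as a direct citation of Lemmas 57 and 59 of Gilyén, Su, Low, and Wiebe, so there is no ``paper's own proof'' to compare against. Your approach --- expand $\cos(t\cos\theta)$ and $\sin(t\cos\theta)$ via Jacobi--Anger, substitute $x=\cos\theta$ to obtain Chebyshev series, bound the uniform truncation error by $\sum_{n>m}|J_n(t)|$ using $|T_n|\le1$, and invert the resulting super-exponential decay to extract the degree bound $r(\cdot,\cdot)$ --- is precisely the argument carried out in the cited reference, so you have reconstructed the right proof from scratch.

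One step you should tighten: the claimed ratio bound $|J_{n+1}(t)/J_n(t)|\le |t|/(2(n+1))$ does not follow cleanly from the three-term recurrence $J_{n-1}(t)+J_{n+1}(t)=(2n/t)J_n(t)$, because $J_{n-1}(t)$ and $J_{n+1}(t)$ need not share a sign, so you cannot peel off a factor pointwise. This step is also unnecessary: the direct bound $|J_n(t)|\le\frac{(|t|/2)^n}{n!}$ (which you already cite, and which follows from the power series or from DLMF 10.14.4) is sufficient on its own, since $\sum_{n>m}\frac{(|t|/2)^n}{n!}$ is dominated by a geometric series with ratio at most $1/2$ once $m\ge |t|$, and Stirling then gives the super-exponential $(e|t|/2m)^m$ decay. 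Dropping the ratio-bound detour and working only with the factorial bound gives a cleaner and correct argument, and matches what the reference actually does. The remaining bookkeeping --- why the arguments of $r$ are $e|t|/2$ and $5\varepsilon/4$ and why one divides by $2$ to get $R$ --- is exactly the constant-tracking you flag as the main obstacle; it is routine once the geometric bound is in hand, since the index $m$ in the tail corresponds to $2R+2$ (even case) or $2R+3$ (odd case).
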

}

\bibliographystyle{plain}
\bibliography{main}

\end{document}